\definecolor{medgreen}{rgb}{0.0, 0.75, 0.0}
\theoremstyle{definition}
\newtheorem{theorem}{Theorem}[section]
\newtheorem{proposition}[theorem]{Proposition}
\newtheorem{definition}[theorem]{Definition}
\newtheorem{lemma}[theorem]{Lemma}
\newtheorem{example}[theorem]{Example}
\newtheorem{remark}[theorem]{Remark}
\newcommand{\M}{\mathcal{M}}
\newcommand{\uph}{\!\!\upharpoonright\!}
\newcommand{\margin}{\mathrm{Margin}}
\newcommand{\splitn}{\mathrm{Split\#}}
\newcommand{\strength}{\mathrm{Strength}}
\newcommand{\weakness}{\mathrm{Weakness}}
\newcommand{\stris}{\mathrm{Strength}^{\mathit{is}}}
\newcommand{\tolerantpi}{Tolerant Positive Involvement}
\newcommand{\pfp}{\mathbf{P}}
\newcommand{\pfq}{\mathbf{Q}}
\begin{document} 

\title{An Axiomatic Characterization of Split Cycle}

\date{{\normalsize Forthcoming in \textit{Social Choice and Welfare}.}}

\author{Yifeng Ding$^*$, Wesley H. Holliday$^\dagger$, and Eric Pacuit$^\ddagger$\\ \\
$*$ Peking University {\normalsize (\href{mailto:yf.ding@pku.edu.cn}{yf.ding@pku.edu.cn})} \\
 $\dagger$ University of California, Berkeley {\normalsize (\href{mailto:wesholliday@berkeley.edu}{wesholliday@berkeley.edu})} \\
 $\ddagger$ University of Maryland {\normalsize (\href{mailto:epacuit@umd.edu}{epacuit@umd.edu})}}

\maketitle

\begin{abstract}
    A number of rules for resolving majority cycles in elections have been proposed in the literature. Recently, Holliday and Pacuit (Journal of Theoretical Politics 33 (2021) 475-524) axiomatically characterized the class of rules refined by one such cycle-resolving rule, dubbed Split~Cycle: in each majority cycle, discard the majority preferences with the smallest majority margin. They showed that any rule satisfying five standard axioms plus a weakening of Arrow's Independence of Irrelevant Alternatives~(IIA), called Coherent IIA, is refined by Split Cycle. In this paper, we go further and show that Split Cycle is the \textit{only} rule satisfying the axioms of Holliday and Pacuit together with two additional axioms, which characterize the class of rules that refine Split Cycle: Coherent Defeat and Positive Involvement in Defeat. Coherent Defeat states that any majority preference not occurring in a cycle is retained, while Positive Involvement in Defeat is closely related to the well-known axiom of Positive Involvement (as in J. P\'{e}rez, Social Choice and Welfare 18 (2001) 601-616). We characterize Split~Cycle not only as a collective choice rule but also as a social choice correspondence, over both profiles of linear ballots and profiles of ballots  allowing ties.
\end{abstract}

\tableofcontents

\section{Introduction}\label{Sec:Intro}

The possibility of cycles in the majority relation of an election---wherein for candidates $c_1,\dots,c_n$, a majority of voters prefer  $c_1$ to $c_2$, a majority prefer $c_2$ to $c_3$, and so on, while a majority prefer $c_n$ to $c_1$---has been taken to show that  ``majority rule is fatally flawed by an internal inconsistency'' (\citealt[p.~59]{Wolf1970}). Yet voting theorists have studied many collective choice rules based on pairwise majority comparisons of candidates designed to resolve majority cycles. These collective choice rules output a binary relation between candidates, which we will call the relation of \textit{defeat}, that is guaranteed to be free of cycles. One prominent family of such rules resolves cycles by paying attention to the size of majority victories, e.g., as measured by the majority \textit{margin} between candidates $x$ and $y$, defined as the number of voters who prefer $x$ to $y$ minus the number who prefer $y$ to $x$. Examples include the Ranked Pairs (\citealt{Tideman1987}, \citealt{ZavistTideman1989}), River (\citealt{Heitzig2004b}), Beat Path (\citealt{Schulze2011,Schulze2018}), Kemeny (\citealt{kemeny1959}),\footnote{Though Kemeny is not usually defined in terms of pairwise margins of victory, it can be so defined as shown in \citealt[p.~87]{Fischer2016}. Another rule usually defined in terms of individual preferences but also definable in terms of pairwise margins is the Borda count (see \citealt[p.~28]{Zwicker2016}).}  Weighted Covering (\citealt{Dutta1999}, \citealt{Fernandez2018}), and Split Cycle (\citealt{HP2021,HP2020}) rules. For instance, according to  Split Cycle,\footnote{Eppley's \citeyearpar{Eppley2000} ``Beatpath Criterion Method'' can be defined in the same way but using \textit{winning votes} instead of \textit{margins} as the measure of strength of majority preference. This distinction does not matter for our axiomatization in the context of linear ballots, but it does matter for our axiomatization in the context of ballots allowing ties (see Section \ref{Sec:Ties}).} majority cycles are resolved as follows:
\begin{enumerate}
    \item For each majority cycle, identify the pairwise majority victories with the smallest margin in that cycle.
    \item A candidate $a$ defeats a candidate $b$ according to Split Cycle if and only if $a$ has a pairwise majority victory over $b$ that was not identified in step 1.
\end{enumerate}
In other words, a majority victory of $a$ vs.~$b$ counts as a defeat of $b$ if and only if in \emph{each} majority cycle in which that majority victory appears, it does not have the smallest margin in the cycle. The resulting defeat relation contains no cycles. See Figure \ref{FirstEx} for an example.

Faced with such a rule for resolving cycles, the question becomes: why this rule and not something else? The question can be partially answered by identifying axioms that distinguish between known rules. A deeper answer comes from a complete \textit{axiomatic characterization} of a rule as the unique rule satisfying some list of natural axioms. For example, such axiomatic characterizations exist for the collective choice rules that rank candidates by Copeland score (\citealt{Rubinstein1980}) and Borda scores (\citealt{Nitzan1981}, \citealt{Mihara2017}).\footnote{A conjectured axiomatization of Ranked Pairs can be found in \citealt{Tideman1987}.}

\begin{figure}

\begin{center}
\begin{tikzpicture}
\node[circle,draw,minimum width=0.25in] at (0,0)      (a) {$a$}; 
\node[circle,draw,minimum width=0.25in] at (3,0)      (b) {$c$}; 
\node[circle,draw,minimum width=0.25in] at (1.5,1.5)  (c) {$b$}; 
\node[circle,draw,minimum width=0.25in] at (1.5,-1.5) (d) {$d$};
\path[->,draw,thick] (b) to[pos=.7] node[fill=white] {$8$} (a);
\path[->,draw,thick] (c) to node[fill=white] {$2$} (a);
\path[->,draw,thick] (a) to node[fill=white] {$10$} (d);
\path[->,draw,thick] (c) to node[fill=white] {$4$} (b);
\path[->,draw,thick] (d) to node[fill=white] {$12$} (b);
\path[->,draw,thick] (d) to[pos=.7]  node[fill=white] {$6$} (c);
\end{tikzpicture}\hspace{1in}
\begin{tikzpicture}
\node[circle,draw,minimum width=0.25in] at (0,0)      (a) {$a$}; 
\node[circle,draw,minimum width=0.25in] at (3,0)      (b) {$c$}; 
\node[circle,draw,minimum width=0.25in] at (1.5,1.5)  (c) {$b$}; 
\node[circle,draw,minimum width=0.25in] at (1.5,-1.5) (d) {$d$};

\path[->,draw,thick] (a) to node[fill=white] {$D$} (d);
 
\path[->,draw,thick] (d) to node[fill=white] {$D$} (b);
\path[->,draw,thick] (d) to[pos=.7]  node[fill=white] {$D$} (c);
\end{tikzpicture}
\end{center}
\caption{A margin graph (left) with three cycles: $(a,d,b,a)$, $(a,d,c,a)$, and $(a,d,b,c,a)$. Deleting the weakest edge in each cycle (namely $(b,a)$ in the first, $(c,a)$ in the second, and $(b,c)$ in the third) results in the Split Cycle defeat graph (right).}\label{FirstEx}
\end{figure}

Recently Holliday and Pacuit \citeyearpar{HP2021} have characterized the class of rules refined by the Split Cycle  rule\footnote{To say that a rule is \textit{refined by} Split Cycle means that if a candidate $x$ defeats a candidate $y$ according to the rule, then $x$ also defeats $y$ according to Split Cycle.} using six axioms, five of which are standard (see Section \ref{Sec:StandAxVCCR} below) and the sixth of which is a weakening of Arrow's \citeyearpar{Arrow1963} axiom of Independence of Irrelevant Alternatives (IIA). They call their new axiom Coherent~IIA. Recall that IIA states that for any two profiles $\mathbf{P}$ and $\mathbf{P}'$ of voter preferences, if $\mathbf{P}$ and $\mathbf{P}'$ are the same with respect to how each voter ranks $x$ vs.~$y$, then if $x$ defeats $y$ in $\mathbf{P}$, $x$ must also defeat $y$ in $\mathbf{P}'$. The motivation for weakening IIA to   Coherent IIA can be seen in a simple example from \citealt{HP2021}. In the profile $\mathbf{P}$ in Figure~\ref{Fig1}, displayed alongside its corresponding margin graph, arguably any sensible collective choice rule should judge that candidate $a$ defeats candidate $b$. However, in the profile $\mathbf{P}'$ in Figure~\ref{Fig1}, no sensible collective choice rule---technically, no collective choice rule that is anonymous, neutral, and guarantees that some undefeated candidate exists---can judge that $a$ defeats $b$, despite $a$ still beating $b$ head-to-head by a margin of $n$. This is a counterexample to IIA as a normative requirement. Holliday and Pacuit argue that the ``Fallacy of IIA'' is to ignore how the context of a full election can force us to suspend judgment on some relations of defeat that we could coherently accept in a different context. 

\begin{figure}[h]
\begin{center}
$\mathbf{P}$\qquad
\begin{tabular}{ccc}
$n$ & $n$ & $n$   \\\hline
$\boldsymbol{a}$ & $\boldsymbol{b}$ &  $c$ \\
$\boldsymbol{b}$ &  $\boldsymbol{a}$ & $\boldsymbol{a}$ \\
$c$ &  $c$ &  $\boldsymbol{b}$ \\
\end{tabular}\hspace{1.32in}
\begin{tikzpicture}[baseline=(current bounding box.center)]
\node[circle,draw, minimum width=0.25in] at (0,0) (a) {$a$}; 
\node[circle,draw,minimum width=0.25in] at (3,0) (c) {$c$}; 
\node[circle,draw,minimum width=0.25in] at (1.5,1.5) (b) {$b$}; 

\path[->,draw,thick] (a) to node[fill=white] {$n$} (b);
\path[->,draw,thick] (b) to node[fill=white] {$n$} (c);
\path[->,draw,thick] (a) to node[fill=white] {$n$} (c);
\end{tikzpicture}
\end{center}

\begin{center}
$\mathbf{P}'$\qquad
\begin{tabular}{ccc}
$n$ & $n$ & $n$   \\\hline
$\boldsymbol{a}$ & $\boldsymbol{b}$ &  $c$ \\
$\boldsymbol{b}$ &  $c$ & $\boldsymbol{a}$ \\
$c$ &  $\boldsymbol{a}$ &  $\boldsymbol{b}$ \\
\end{tabular}\hspace{1.32in}
\begin{tikzpicture}[baseline=(current bounding box.center)]

\node[circle,draw, minimum width=0.25in] at (0,0) (a) {$a$}; 
\node[circle,draw,minimum width=0.25in] at (3,0) (c) {$c$}; 
\node[circle,draw,minimum width=0.25in] at (1.5,1.5) (b) {$b$}; 

\path[->,draw,thick] (a) to node[fill=white] {$n$} (b);
\path[->,draw,thick] (b) to node[fill=white] {$n$} (c);
\path[->,draw,thick] (c) to node[fill=white] {$n$} (a);
\end{tikzpicture}
\end{center}
\caption{Profiles (left) and their margin graphs (right) illustrating the ``Fallacy of IIA.''}\label{Fig1}
\end{figure}

In the context of a majority cycle in which  $x$ is majority preferred to $y$, the electorate is \textit{incoherent with respect to $x$ and $y$} in the sense that while there is an argument that $x$ should defeat $y$, in virtue of the majority preference for $x$ over $y$, there is also an opposing argument that $x$ should \textit{not} defeat $y$, in virtue of the path of majority preferences from $y$ to $x$; e.g., $y$ is majority preferred to $z$, who is majority preferred to~$x$. One natural measure of the degree of this incoherence is the strength of the opposing argument, which is some monotonic function of the \textit{margins} of the relevant majority preferences; e.g., the larger the margin of $y$ over $z$ or of $z$ over $x$, the more incoherent with  the majority preference for $x$ over $y$---and the smaller those margins, the less incoherent with the majority preference for $x$ over $y$. On this view, the following is sufficient  for $\mathbf{P}'$ to be \textit{no more incoherent than $\mathbf{P}$ with respect to $x$ and $y$}: the margin graph of $\mathbf{P}'$ is obtained from that of $\mathbf{P}$ by deleting or reducing the margins on zero or more edges not connecting $x$ and $y$  or by deleting zero or more candidates other than $x$ and $y$. Adopting this view about incoherence, Holliday and Pacuit accept the core intuition behind IIA whenever contextual incoherence does not interfere, leading to their axiom of Coherent IIA, which can be stated informally as follows: 
\begin{itemize}
    \item Coherent IIA (informally): if $\mathbf{P}$ and $\mathbf{P}'$ are the same with respect to how each voter ranks $x$ vs.~$y$, $x$~defeats $y$ in $\mathbf{P}$, and  \textit{$\mathbf{P}'$ is not more incoherent than $\mathbf{P}$ with respect to $x$ and $y$}, then $x$ must also defeat $y$ in $\mathbf{P}'$.
    \end{itemize}
    Holliday and Pacuit then prove that Split Cycle is the most resolute collective choice rule\footnote{Technically, they characterize Split Cycle as what they call a \textit{variable-election} collective choice rule, whose domain contains profiles with different sets of candidates and voters (see Section \ref{Sec:VCCRVSCC} below). Until the end of this section, we use `collective choice rule' to refer to the variable-election variety.} satisfying the five standard axioms plus Coherent IIA. Here ``most resolute'' means that for any other rule $f$ that satisfies the six axioms, if $x$ defeats $y$ according to $f$, then $x$ defeats $y$ according to Split Cycle.

Holliday and Pacuit's theorem may be viewed as characterizing Split Cycle as the unique collective choice rule satisfying their six axioms plus a seventh axiom stating that the rule should be the most resolute rule satisfying the first six axioms. In a sense, however, this characterization using the notion of resoluteness is only half of a characterization of Split Cycle.\footnote{As Holliday and Pacuit \citeyearpar[p.~501]{HP2021} write, ``A natural next step would be to obtain another axiomatic characterization of Split Cycle as the only VCCR satisfying some axioms without reference to resoluteness.''} We would like axioms on a collective choice rule such that for any rule $f$ satisfying the axioms, $x$ defeats $y$ according to $f$ \textit{if and only if} $x$ defeats $y$ according to Split Cycle. Such an axiomatic characterization is the main result of the present paper.

Our new characterization of Split Cycle involves two natural axioms, which we prove are satisfied by exactly the collective choice rules that refine Split Cycle:
\begin{itemize}
    \item Coherent Defeat: if a majority of voters prefer $x$ to $y$, and there is no majority cycle involving $x$ and $y$, then $x$ \textit{defeats} $y$.
        \item Positive Involvement in Defeat: if $y$ does \textit{not} defeat $x$ in profile $\mathbf{P}$, and $\mathbf{P}'$ is obtained from $\mathbf{P}$ by adding one new voter who ranks $x$ above $y$, then $y$ still does not defeat $x$ in $\mathbf{P}'$.
\end{itemize}
Coherent Defeat is a point of common ground between Split Cycle, Ranked Pairs, Beat Path, and GOCHA (\citealt{Schwartz1986}): in the absence of cyclic incoherence, majority preference is sufficient for defeat. We will discuss other ways of motivating Coherent Defeat below, drawing on \citealt{Heitzig2002}.\footnote{Heitzig's \citeyearpar{Heitzig2004} definition of an ``immune'' candidate is the same as that of a Split Cycle winner if we replace ``stronger'' with ``at least as strong'' in his definition.} The intuition behind Positive Involvement in Defeat, a variable-electorate axiom, is similar to the intuition behind fixed-electorate  \textit{monotonicity} axioms for collective choice rules, as stated in \citealt{Blair1982}: ``additional support for a (pairwise) winning alternative leaves it winning'' (p.~935). Perhaps surprisingly, many collective choice rules violate Positive Involvement in Defeat, including Ranked Pairs, River, Beat Path, and Covering (see \citealt{Duggan2013} for many versions) viewed as collective choice rules. As we will explain, the name is borrowed from the related axiom of ``Positive Involvement'' for functions that output winners instead of a defeat relation (see \citealt{Saari1995}, \citealt{Perez2001}, \citealt{HP2021PI}). Our main result in this paper is that Split Cycle is the unique collective choice rule satisfying the five standard axioms, Coherent IIA, Coherent Defeat, and Positive Involvement in Defeat. The proof uses the well-known concept of minimal cuts in graph theory.

So far we have discussed Split Cycle as a \textit{collective choice rule}  that outputs for a given profile a binary relation of defeat on the set of candidates---or more precisely, what we call a \textit{variable-election} collective choice rule (VCCR), whose domain includes profiles with different set of candidates and voters. In this paper, we also characterize the associated variable-election \textit{social choice correspondence} (VSCC) that outputs for a given profile the set of undefeated candidates. This involves a translation from VSCCs to VCCRs which induces a translation in the reverse direction from IIA (resp.~Coherent IIA) for VCCRs to IIA (resp.~Coherent IIA) for VSCCs. The additional axioms of Coherent Defeat and Positive Involvement in Defeat also have natural analogues for VSCCs. Interestingly, the VSCC analogue of the VCCR axiom of Positive Involvement in Defeat, which we call Tolerant Positive Involvement, strengthens the axiom of Positive Involvement from the prior literature mentioned above. We prove that the Split Cycle VSCC is the unique VSCC satisfying five standard axioms, Coherent IIA, Coherent Defeat, and Tolerant Positive Involvement.

The use in our axiomatizations of Coherent IIA, on the one hand, and variants of Positive Involvement, on the other, nicely matches the two main normative points made about the Split Cycle VSCC in \citealt{HP2020}, which stresses Split Cycle's ability to handle both (i) the ``Problem of Spoilers'' and (ii) the ``Strong No Show Paradox'' (\citealt{Perez2001}). Problem (i) is a \textit{variable-candidate} problem, wherein adding a new candidate to an election who loses overall and loses head-to-head to a candidate $x$ causes $x$ to go from winning to losing, while (ii) is a \textit{variable-voter} problem, wherein adding new voters to an election who rank a candidate $x$ in  first place causes $x$ to go from winning to losing. In essence, Coherent IIA is a strengthening (assuming other uncontroversial axioms) of anti-spoiler axioms that mitigate (i), while Tolerant Positive Involvement is a strengthening of the Positive Involvement axiom that prevents (ii).

The rest of the paper is organized as follows. In Section \ref{Sec:Prelim}, we formally define VCCRs and VSCCs in general and the Split Cycle VCCR and VSCC in particular. In Section \ref{Sec:AxVCCR}, we review the axioms from \citealt{HP2021} (Sections \ref{Sec:StandAxVCCR}-\ref{Sec:CohIIAVCCR}) and discuss in more depth the additional axioms of Coherent Defeat (Section \ref{Sec:CohDefVCCR}) and Positive Involvement in Defeat (Section \ref{Sec:PosInvVCCR}). Using these axioms, we prove our  characterization result for the Split Cycle VCCR in Section \ref{Sec:CharVCCR}. We then turn to the Split Cycle VSCC. In Section \ref{Sec:AxVSCC}, we define analogues for VSCCs of the axioms for VCCRs in Section \ref{Sec:AxVCCR}. Using these analogous axioms, we prove our characterization result for the Split Cycle VSCC in Section \ref{Sec:CharVSCC}. In Section \ref{Sec:Ties}, we adapt our characterization results to the setting in which ties are permitted in voters' ballots. In Section \ref{sec:necessity}, we show that the three special axioms---Coherent IIA, Positive Involvement in Defeat (or Tolerant Positive Involvement for VSCCs), and Coherent Defeat---are all necessary in our axiomatization. Finally, we conclude with some suggestions for related axiomatization problems in voting theory in Section \ref{Sec:Conclusion}.

\newpage

\section{Preliminaries}\label{Sec:Prelim}
\subsection{VCCRs and VSCCs}\label{Sec:VCCRVSCC}

Fix infinite sets $\mathcal{V}$ and $\mathcal{X}$ of \textit{voters} and \textit{candidates}, respectively.  We begin by assuming that voters submit linear orders of the candidates. For $X\subseteq\mathcal{X}$, let $\mathcal{L}(X)$ be the set of all strict linear orders on $X$. In Section~\ref{Sec:Ties}, we will adapt our results to the setting in which voters may submit strict weak orders, allowing ties.

 \begin{definition}\label{ProfileDef}
 A (\textit{linear}) \textit{profile} is a function $\mathbf{P}: V\to
 \mathcal{L}(X)$ for some nonempty finite $V\subset \mathcal{V}$ and nonempty
 finite $X\subset \mathcal{X}$, which we denote by $V(\mathbf{P})$ (called the
 set of \textit{voters in $\mathbf{P}$})  and $X(\mathbf{P})$ (called the set of
 \textit{candidates in $\mathbf{P}$}), respectively. We call $\mathbf{P}(i)$
 voter $i$'s \textit{ballot}. When $\mathbf{P}$ is clear from context, we write
 `$x\succ_iy$' for $(x,y)\in\mathbf{P}(i)$, and when $i$ is also clear from
 context, we write $x\succ y$ for $ x\succ_iy$.
 
 Given a nonempty set $Y\subseteq X(\mathbf{P})$ of candidates, the restriction $\mathbf{P}_{|Y}$ of $\mathbf{P}$ to $Y$ is the profile such that for each $i\in V(\mathbf{P})$,  $\mathbf{P}_{|Y}(i)$ is the restriction of the ballot $\mathbf{P}(i)$ to $Y$.
\end{definition}

 Sometimes we will display profiles in their \textit{anonymized} form, as in Figure \ref{Fig1}, meaning that we only display how many voters have each type of ballot, rather than indicating which voters have which ballots. The finiteness of the sets of voters and candidates in profiles will be used implicitly in many constructions in later proofs.

Our main objects of study are (i) functions that assign to each profile a binary relation on the set of candidates and (ii) functions that assign to each profile a subset of the candidates.

\begin{definition} A \textit{variable-election collective choice rule} (VCCR) is a function $f$ on the domain of all profiles such that for any profile $\mathbf{P}$, $f(\mathbf{P})$ is an asymmetric binary relation on $X(\mathbf{P})$, which we call \textit{the defeat relation for $\mathbf{P}$ according to $f$}. For $x,y\in X(\mathbf{P})$, we say that \textit{$x$ defeats $y$ in $\mathbf{P}$ according to $f$} when $(x,y)\in f(\mathbf{P})$.
\end{definition}

\begin{definition} A \textit{variable-election social choice correspondence} (VSCC) is a function $F$ on the domain of all profiles such that for any profile $\mathbf{P}$, we have $\varnothing\neq F(\mathbf{P})\subseteq X(\mathbf{P})$.
\end{definition}

The point of the adjective `variable-election' is that different profiles in the domain of a function may have different sets of voters and candidates. By contrast, in the literature, collective choice rules (CCRs) and social choice correspondences (SCCs) are often defined such that all profiles in the domain of a given function have the same sets of voters and candidates, respectively.

Recall that a binary relation $R$ is \emph{acyclic} if there is no sequence $(x_0, x_1, \dots, x_{n+1})$ (with $n \ge 0$) such that $x_0 = x_{n+1}$ and for any $i = 0 \dots n$, $(x_i, x_{i+1}) \in R$. A VCCR $f$ is said to be \textit{acyclic} if for any profile $\mathbf{P}$, the binary relation $f(\mathbf{P})$ is acyclic. Such a VCCR induces a VSCC $\overline{f}$ that returns for a given profile $\mathbf{P}$ the maximal (undefeated) elements of $f(\mathbf{P})$.

\begin{definition}\label{InducedVSCC}
 For any VCCR $f$, let $\overline{f}$ be the function defined on all profiles such that for any profile~$\mathbf{P}$, 
 \[
 \overline{f}(\mathbf{P})=\{x\in X(\mathbf{P})\mid \mbox{there is no }y\in X(\mathbf{P})\colon y\mbox{ defeats }x\mbox{ in $\mathbf{P}$ according to }f\}.
 \]
 If $\overline{f}$ is a VSCC, we say that $\overline{f}$ is the VSCC \emph{defeat-rationalized} by $f$.
\end{definition}

\begin{lemma}\label{VCCRtoVSCC} Given any acyclic VCCR $f$, the function $\overline{f}$ defined above is a VSCC, the VSCC defeat-rationalized by $f$, since $\varnothing\neq \overline{f}(\mathbf{P})\subseteq X(\mathbf{P})$ for any profile $\mathbf{P}$.
\end{lemma}

Our axiomatization proofs rely heavily on establishing that one VCCR (resp.~VSCC) refines another in the following sense.

\begin{definition}
  Let $f$ and $f'$ be VCCRs. We say that $f$ \emph{refines} $f'$ if for any profile $\mathbf{P}$, $f(\mathbf{P}) \supseteq
  f'(\mathbf{P})$; that is, $f$ outputs all the defeats that $f'$ does and possibly more.

  Let $F$ and $F'$ be VSCCs. We say that $F$ \emph{refines} $F'$ if for any profile $\mathbf{P}$, $F(\mathbf{P}) \subseteq
  F'(\mathbf{P})$; that is, $F$ always selects a subset of the candidates selected by $F'$.
\end{definition}

\subsection{Split Cycle}

In this section, we define the Split Cycle VCCR and VSCC. The definition starts with the notion of the margin graph of a profile and of majority paths and cycles in the margin graph.\footnote{Since Split Cycle needs only the information in the margin graph of a profile, which is an example of a \textit{weighted} (weak) \textit{tournament}, Split Cycle may also be regarded as a \textit{weighted tournament solution} (see \citealt{Fischer2016}).}

\begin{definition}
  Let $\mathbf{P}$ be a profile and $x, y \in X(\mathbf{P})$. The \textit{margin
    of $x$ over $y$} in $\mathbf{P}$, written $\margin_{\mathbf{P}}(x,
  y)$, is
\[
  |\{i \in V(\mathbf{P}) \mid (x,y)\in\mathbf{P}(i)\}| - |\{i \in V(\mathbf{P}) \mid
  (y,x)\in \mathbf{P}(i)\}|.
\]
If $\margin_{\mathbf{P}}(x,y)>0$, we say that $x$ is \textit{majority preferred} to $y$ in $\mathbf{P}$.

The \textit{margin graph} of $\mathbf{P}$, denoted $\mathcal{M}(\mathbf{P})$, is
the directed graph with weighted edges such that its set of nodes is $X(\mathbf{P})$ and there is an edge from $x$ to $y$ iff $\margin_\mathbf{P}(x, y) > 0$, the
weight of which is  $\margin_\mathbf{P}(x, y)$.

A \textit{majority path} in $\mathbf{P}$ is  a path in
$\mathcal{M}(\mathbf{P})$, i.e., a sequence $\rho = ( x_1, x_2, \cdots,
x_n )$ of nodes of $\mathcal{M}(\mathbf{P})$ such that for each $i = 1
\ldots n-1$ there is an edge from $x_i$ to $x_{i+1}$, i.e., $\margin_{\mathbf{P}}(x_i, x_{i+1}) > 0$. Given such a majority path
$\rho$, we define its \emph{strength}, written $\strength_{\mathbf{P}}(\rho)$, 
as
\[
  \min\{\margin_{\mathbf{P}}(x_i, x_{i+1}) \mid i = 1 \ldots n-1\}.
\]
Such a $\rho$ is also called a majority path \textit{from} $x_1$ \textit{to} $x_n$. When $x_n = x_1$, we call such a $\rho$ a \emph{majority cycle}, and in this case $\strength_{\mathbf{P}}(\rho)$ is also called the \emph{splitting number} of $\rho$, denoted $\splitn_{\mathbf{P}}(\rho)$. A \textit{simple majority path} is a majority path in which no candidate is repeated, while a \emph{simple majority cycle} is a majority cycle in which no candidate is repeated except the first and the last. 
\end{definition}

There are many equivalent ways to define the Split Cycle VCCR, four of which are based on following lemma (for proofs, see \citealt{HP2020} and \citealt{HNP2021}).

\begin{lemma}\label{EquivDef} For any profile $\mathbf{P}$ and $x, y \in X(\mathbf{P})$, the following are equivalent:
\begin{enumerate}
    \item $\margin_{\mathbf{P}}(x, y) > 0$ and $\margin_{\mathbf{P}}(x, y) > \splitn_{\mathbf{P}}(\rho)$ for every majority cycle $\rho$ containing $x$ and $y$;
        \item $\margin_{\mathbf{P}}(x, y) > 0$ and $\margin_{\mathbf{P}}(x, y) > \splitn_{\mathbf{P}}(\rho)$ for every simple majority cycle $\rho$ containing $x$ and~$y$;
        \item $\margin_{\mathbf{P}}(x, y) > 0$ and $\margin_{\mathbf{P}}(x, y) > \splitn_{\mathbf{P}}(\rho)$ for every simple majority cycle $\rho$ in which $y$ directly follows $x$;
    \item $\margin_{\mathbf{P}}(x, y) > 0$ and $\margin_{\mathbf{P}}(x, y) > \strength_{\mathbf{P}}(\rho)$ for every simple majority path $\rho$ from $y$ to $x$.
\end{enumerate}
\end{lemma}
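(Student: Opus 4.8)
The plan is to prove the equivalence via the cycle of implications $(1)\Rightarrow(2)\Rightarrow(3)\Rightarrow(4)\Rightarrow(1)$. Note first that if $x=y$ then $\margin_{\mathbf{P}}(x,y)=0$, so the first conjunct common to all four conditions fails and all four are vacuously false; hence we may assume $x\neq y$, which guarantees that any majority path from $x$ to $y$ (or from $y$ to $x$) has at least one edge. The implications $(1)\Rightarrow(2)$ and $(2)\Rightarrow(3)$ are then immediate, since each successive condition merely quantifies the inequality $\margin_{\mathbf{P}}(x,y)>\splitn_{\mathbf{P}}(\rho)$ over a subclass of the cycles considered by the previous one: every simple majority cycle containing $x$ and $y$ is a majority cycle containing $x$ and $y$, and every simple majority cycle in which $y$ directly follows $x$ contains both $x$ and $y$.

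For $(3)\Rightarrow(4)$: assume (3), so in particular $\margin_{\mathbf{P}}(x,y)>0$, and let $\rho=(y=x_1,x_2,\dots,x_n=x)$ be an arbitrary simple majority path from $y$ to $x$. Since $\margin_{\mathbf{P}}(x,y)>0$, there is an edge from $x$ to $y$ in $\mathcal{M}(\mathbf{P})$, so $\rho^{+}:=(x_1,x_2,\dots,x_n,x_1)$ is a majority cycle; as $x_1,\dots,x_n$ are pairwise distinct, $\rho^{+}$ is a \emph{simple} majority cycle, and $y$ directly follows $x$ in it. Applying (3) to $\rho^{+}$ and using $\splitn_{\mathbf{P}}(\rho^{+})=\min\{\strength_{\mathbf{P}}(\rho),\,\margin_{\mathbf{P}}(x,y)\}$, we get $\margin_{\mathbf{P}}(x,y)>\min\{\strength_{\mathbf{P}}(\rho),\,\margin_{\mathbf{P}}(x,y)\}$; since the right-hand side cannot exceed $\margin_{\mathbf{P}}(x,y)$, this forces the minimum to equal $\strength_{\mathbf{P}}(\rho)$, i.e.\ $\margin_{\mathbf{P}}(x,y)>\strength_{\mathbf{P}}(\rho)$. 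As $\rho$ was arbitrary, (4) holds.

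For $(4)\Rightarrow(1)$: assume (4), so $\margin_{\mathbf{P}}(x,y)>0$, and let $\rho$ be an arbitrary majority cycle having both $x$ and $y$ among its nodes. Starting at an occurrence of $y$ in $\rho$ and following the cycle forward until the first subsequent occurrence of $x$ (which exists, since $x$ is a node of $\rho$) yields a sub-walk $W$ of $\rho$ from $y$ to $x$; $W$ has at least one edge because $x\neq y$, and since its edges are among those of $\rho$ we have $\strength_{\mathbf{P}}(W)\ge\splitn_{\mathbf{P}}(\rho)$. Excising loops from $W$ in the standard way yields a simple majority path $\sigma$ from $y$ to $x$ whose edge set is contained in that of $W$, so $\strength_{\mathbf{P}}(\sigma)\ge\strength_{\mathbf{P}}(W)\ge\splitn_{\mathbf{P}}(\rho)$. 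Now (4) applied to $\sigma$ gives $\margin_{\mathbf{P}}(x,y)>\strength_{\mathbf{P}}(\sigma)\ge\splitn_{\mathbf{P}}(\rho)$, and since $\rho$ was arbitrary, (1) holds.

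I do not expect a genuine obstacle: the argument is purely elementary manipulation of paths, walks, and cycles in $\mathcal{M}(\mathbf{P})$. The points that need care are (a) the degenerate case $x=y$, dispatched at the outset; (b) in $(3)\Rightarrow(4)$, the observation that the edge $x\to y$ appended to $\rho$ has weight exactly $\margin_{\mathbf{P}}(x,y)$, so it cannot be the strict minimizer in $\rho^{+}$; and (c) in $(4)\Rightarrow(1)$, the ``follow the cycle forward to the next $x$'' step, which requires only that $x$ be a node of $\rho$ and that the resulting walk be nonempty. An alternative is to run the implications as $(1)\Rightarrow(4)\Rightarrow(3)\Rightarrow(2)\Rightarrow(1)$, but this reuses exactly the same two devices—closing a simple path into a simple cycle, and pruning a walk to a simple path—and yields no simplification.
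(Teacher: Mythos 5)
Your proof is correct: the trivial inclusions give $(1)\Rightarrow(2)\Rightarrow(3)$, closing a simple $y$-to-$x$ path with the edge $x\to y$ (whose weight $\margin_{\mathbf{P}}(x,y)$ cannot be the strict minimizer) gives $(3)\Rightarrow(4)$, and extracting and pruning a $y$-to-$x$ sub-walk from an arbitrary cycle through $x$ and $y$ gives $(4)\Rightarrow(1)$, with the degenerate case $x=y$ properly dispatched. The paper itself does not prove this lemma but defers to Holliday--Pacuit (2020) and related work, and your two devices are exactly the standard argument used there, so there is nothing to flag.
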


\begin{definition}
  The Split Cycle VCCR, denoted $sc$, is defined as follows: for any profile
  $\mathbf{P}$ and  $x, y \in X(\mathbf{P})$, $(x, y) \in sc(\mathbf{P})$ iff the condition on $x,y$ in Lemma \ref{EquivDef} holds. As is shown in \citealt{HP2020}, $sc$ is an acyclic VCCR. Thus, we define the Split Cycle VSCC $SC$ as $\overline{sc}$.
\end{definition}

\noindent The two-step algorithm in Section \ref{Sec:Intro} provides one way of
calculating $sc(\mathbf{P})$. 

\begin{example}Consider the following anonymized profile with its margin graph:
\begin{center}

\setlength{\tabcolsep}{3pt}
\begin{tabular}{ccccccccccc}
$9$ & $5$ & $3$ & $1$ & $8$ & $4$ & $3$ & $7$ & $4$ & $3$ & $1$\\\hline 
$b$ & $b$ & $a$ & $a$ & $a$ & $a$ & $d$ & $d$ & $c$ & $c$ & $c$\\ 
$d$ & $c$ & $b$ & $b$ & $e$ & $d$ & $b$ & $c$ & $a$ & $e$ & $d$\\ 
$c$ & $e$ & $e$ & $c$ & $b$ & $b$ & $e$ & $a$ & $b$ & $a$ & $e$\\ 
$e$ & $d$ & $d$ & $e$ & $c$ & $c$ & $c$ & $e$ & $d$ & $d$ & $a$\\ 
$a$ & $a$ & $c$ & $d$ & $d$ & $e$ & $a$ & $b$ & $e$ & $b$ & $b$
\end{tabular}
\hspace{1in}
\begin{tikzpicture}[baseline=(current bounding box.center)]
\node[circle,draw,minimum width=0.25in] at (2,1.5)  (a) {$b$}; 
\node[circle,draw,minimum width=0.25in] at (0,1.5)  (b) {$a$}; 
\node[circle,draw,minimum width=0.25in] at (0,-1.5) (c) {$e$}; 
\node[circle,draw,minimum width=0.25in] at (2,-1.5) (d) {$d$}; 
\node[circle,draw,minimum width=0.25in] at (3.5,0)  (e) {$c$};
\path[->,draw,thick] (b) to node[fill=white] {$14$} (a);
\path[->,draw,thick] (a) to node[fill=white] {$12$} (d);
\path[->,draw,thick] (a) to node[fill=white] {$18$} (e);
\path[->,draw,thick] (b) to node[fill=white] {$6$} (c);
\path[->,draw,thick] (d) to node[fill=white] {$8$} (c);
\path[->,draw,thick] (d) to node[fill=white] {$4$} (e);
\path[->,draw,thick] (a) to[pos=.7] node[fill=white] {$10$} (c);
\path[->,draw,thick] (d) to[pos=.7] node[fill=white] {$2$} (b);
\path[->,draw,thick] (e) to[pos=.7] node[fill=white] {$16$} (b);
\path[->,draw,thick] (e) to[pos=.7] node[fill=white] {$20$} (c);
\end{tikzpicture}
\end{center}
\noindent The simple cycles of the above margin graph are highlighted with thickened arrows below:
\begin{center}
\begin{tikzpicture}[baseline=(current bounding box.center)]
\node[circle,draw,minimum width=0.25in] at (2,1.5)  (a) {$b$}; 
\node[circle,draw,minimum width=0.25in] at (0,1.5)  (b) {$a$}; 
\node[circle,draw,minimum width=0.25in] at (0,-1.5) (c) {$e$}; 
\node[circle,draw,minimum width=0.25in] at (2,-1.5) (d) {$d$}; 
\node[circle,draw,minimum width=0.25in] at (3.5,0)  (e) {$c$};
\path[->,draw,very thick,medgreen] (b) to node[fill=white] {$14$} (a);
\path[->,draw,thick] (a) to node[fill=white] {$12$} (d);
\path[->,draw,very thick, medgreen] (a) to node[fill=white] {$18$} (e);
\path[->,draw,thick] (b) to node[fill=white] {$6$} (c);
\path[->,draw,thick] (d) to node[fill=white] {$8$} (c);
\path[->,draw,thick] (d) to node[fill=white] {$4$} (e);
\path[->,draw,thick] (a) to[pos=.7] node[fill=white] {$10$} (c);
\path[->,draw,thick] (d) to[pos=.7] node[fill=white] {$2$} (b);
\path[->,draw,very thick, medgreen] (e) to[pos=.7] node[fill=white] {$16$} (b);
\path[->,draw,thick] (e) to[pos=.7] node[fill=white] {$20$} (c);
\end{tikzpicture}\hspace{.3in}\begin{tikzpicture}[baseline=(current bounding box.center)]
\node[circle,draw,minimum width=0.25in] at (2,1.5)  (a) {$b$}; 
\node[circle,draw,minimum width=0.25in] at (0,1.5)  (b) {$a$}; 
\node[circle,draw,minimum width=0.25in] at (0,-1.5) (c) {$e$}; 
\node[circle,draw,minimum width=0.25in] at (2,-1.5) (d) {$d$}; 
\node[circle,draw,minimum width=0.25in] at (3.5,0)  (e) {$c$};
\path[->,draw,very thick, blue] (b) to node[fill=white] {$14$} (a);
\path[->,draw,very thick, blue] (a) to node[fill=white] {$12$} (d);
\path[->,draw,thick] (a) to node[fill=white] {$18$} (e);
\path[->,draw,thick] (b) to node[fill=white] {$6$} (c);
\path[->,draw,thick] (d) to node[fill=white] {$8$} (c);
\path[->,draw,thick] (d) to node[fill=white] {$4$} (e);
\path[->,draw,thick] (a) to[pos=.7] node[fill=white] {$10$} (c);
\path[->,draw,very thick, blue] (d) to[pos=.7] node[fill=white] {$2$} (b);
\path[->,draw,thick] (e) to[pos=.7] node[fill=white] {$16$} (b);
\path[->,draw,thick] (e) to[pos=.7] node[fill=white] {$20$} (c);
\end{tikzpicture}\hspace{.3in}\begin{tikzpicture}[baseline=(current bounding box.center)]
\node[circle,draw,minimum width=0.25in] at (2,1.5)  (a) {$b$}; 
\node[circle,draw,minimum width=0.25in] at (0,1.5)  (b) {$a$}; 
\node[circle,draw,minimum width=0.25in] at (0,-1.5) (c) {$e$}; 
\node[circle,draw,minimum width=0.25in] at (2,-1.5) (d) {$d$}; 
\node[circle,draw,minimum width=0.25in] at (3.5,0)  (e) {$c$};
\path[->,draw,very thick,red] (b) to node[fill=white] {$14$} (a);
\path[->,draw,very thick,red] (a) to node[fill=white] {$12$} (d);
\path[->,draw,thick] (a) to node[fill=white] {$18$} (e);
\path[->,draw,thick] (b) to node[fill=white] {$6$} (c);
\path[->,draw,thick] (d) to node[fill=white] {$8$} (c);
\path[->,draw,very thick,red] (d) to node[fill=white] {$4$} (e);
\path[->,draw,thick] (a) to[pos=.7] node[fill=white] {$10$} (c);
\path[->,draw,thick] (d) to[pos=.7] node[fill=white] {$2$} (b);
\path[->,draw,very thick, red] (e) to[pos=.7] node[fill=white] {$16$} (b);
\path[->,draw,thick] (e) to[pos=.7] node[fill=white] {$20$} (c);
\end{tikzpicture}
\end{center}
\noindent After removing the weakest edge in each simple cycle, what remains is the
defeat graph:
\begin{center}
\begin{tikzpicture}[baseline=(current bounding box.center)]
\node[circle,draw,minimum width=0.25in] at (2,1.5)  (a) {$b$}; 
\node[circle,draw,minimum width=0.25in] at (0,1.5)  (b) {$a$}; 
\node[circle,draw,minimum width=0.25in] at (0,-1.5) (c) {$e$}; 
\node[circle,draw,minimum width=0.25in] at (2,-1.5) (d) {$d$}; 
\node[circle,draw,minimum width=0.25in] at (3.5,0)  (e) {$c$};

\path[->,draw,thick] (a) to node[fill=white] {$D$} (d);
\path[->,draw,thick] (a) to node[fill=white] {$D$} (e);
\path[->,draw,thick] (b) to node[fill=white] {$D$} (c);
\path[->,draw,thick] (d) to node[fill=white] {$D$} (c);

\path[->,draw,thick] (a) to[pos=.7] node[fill=white] {$D$} (c);

\path[->,draw,thick] (e) to[pos=.7] node[fill=white] {$D$} (b);
\path[->,draw,thick] (e) to[pos=.7] node[fill=white] {$D$} (c);
\end{tikzpicture}
\end{center}
The only undefeated candidate is $b$, and thus $b$ is the only Split Cycle winner for the profile.
\end{example}

Finding all cycles may be computationally costly; for
more efficient algorithms, see
\citealt[Footnote 20]{HP2020} or the Preferential Voting Tools library (\href{https://pref-voting.readthedocs.io/}{https://pref-voting.readthedocs.io/}). 

The definition of Split Cycle leads directly to the following observation used
in our later proofs.
\begin{definition}
  Let $\mathcal{M}$ be a margin graph (the margin graph of some profile) and $k
  \in \mathbb{N}$. Then $\mathcal{M}\uph k$ is the result of keeping all and
  only the edges in $\mathcal{M}$ with weight at least $k$.
\end{definition}
\begin{lemma}\label{lem:restrict-to-k}
  For any profile $\mathbf{P}$ and $x, y\in X(\mathbf{P})$, $(x, y) \in
  sc(\mathbf{P})$ iff $\margin_{\mathbf{P}}(x, y) > 0$ and there is no (simple) majority
  path from $y$ to $x$ in $\mathcal{M}(\mathbf{P})\uph \margin_{\mathbf{P}}(x,
  y)$.
\end{lemma}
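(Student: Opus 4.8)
The plan is to derive the statement from clause~(4) of Lemma~\ref{EquivDef} by unfolding the definition of $\mathcal{M}(\mathbf{P})\uph k$. Write $k = \margin_{\mathbf{P}}(x,y)$. Both sides of the claimed biconditional include $\margin_{\mathbf{P}}(x,y) > 0$ as a conjunct, so it suffices to prove the equivalence of the remaining conjuncts under the hypothesis $\margin_{\mathbf{P}}(x,y) > 0$ (whence $k \geq 1$, since margins are integers).

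The key observation is a translation between the two graphs: for any $u, v \in X(\mathbf{P})$, there is a majority path from $u$ to $v$ in $\mathcal{M}(\mathbf{P})\uph k$ if and only if there is a majority path $\rho$ from $u$ to $v$ in $\mathcal{M}(\mathbf{P})$ with $\strength_{\mathbf{P}}(\rho) \geq k$. This is immediate: by definition $\mathcal{M}(\mathbf{P})\uph k$ is the subgraph of $\mathcal{M}(\mathbf{P})$ retaining all and only the edges of weight $\geq k$, and (using $k\geq 1$) every such edge corresponds to a positive-margin pair, so a path in $\mathcal{M}(\mathbf{P})\uph k$ is exactly a majority path of $\mathbf{P}$ all of whose edges have weight $\geq k$, i.e., of strength $\geq k$. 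Taking the contrapositive with $u=y$, $v=x$: there is \emph{no} majority path from $y$ to $x$ in $\mathcal{M}(\mathbf{P})\uph \margin_{\mathbf{P}}(x,y)$ iff every majority path $\rho$ from $y$ to $x$ in $\mathcal{M}(\mathbf{P})$ satisfies $\strength_{\mathbf{P}}(\rho) < \margin_{\mathbf{P}}(x,y)$; the same unfolding gives the analogous equivalence with ``majority path'' replaced by ``simple majority path'' throughout. Since clause~(4) of Lemma~\ref{EquivDef} says precisely that $\margin_{\mathbf{P}}(x,y) > 0$ and $\margin_{\mathbf{P}}(x,y) > \strength_{\mathbf{P}}(\rho)$ for every simple majority path $\rho$ from $y$ to $x$, this yields the lemma in its ``simple'' form.

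It remains to see that the simple and non-simple formulations coincide, i.e., that the parenthetical ``simple'' may be included or omitted. This follows from the standard graph-theoretic fact that in any directed graph there is a walk from $y$ to $x$ iff there is a simple path from $y$ to $x$: one direction is trivial, and for the other, repeatedly excise the portion of a walk between two occurrences of a repeated vertex until the walk is simple. Applying this inside $\mathcal{M}(\mathbf{P})\uph \margin_{\mathbf{P}}(x,y)$ shows that the existence of a majority path from $y$ to $x$ there is equivalent to the existence of a simple one.

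There is no genuine obstacle here; the only points requiring a modicum of care are making the ``retaining exactly the edges of weight $\geq k$'' bookkeeping precise so that the strength-$\geq k$ translation is recognized as an exact restatement, and recording the walk-to-simple-path reduction so that the parenthetical in the statement is justified. (One could instead argue directly from the definition of $sc$, but routing through clause~(4) of Lemma~\ref{EquivDef} is the cleanest path.)
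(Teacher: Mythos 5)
Your proof is correct and follows exactly the route the paper intends: the paper states Lemma~\ref{lem:restrict-to-k} without proof as a direct consequence of the definition of $\mathcal{M}(\mathbf{P})\uph k$ and Lemma~\ref{EquivDef}, and your unfolding via clause~(4) together with the standard walk-to-simple-path reduction is precisely that argument made explicit.
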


Finally, since Split Cycle only cares about the margin graph of a profile (formally, if $\mathcal{M}(\mathbf{P})=\mathcal{M}(\mathbf{P}')$, then $sc(\mathbf{P})=sc(\mathbf{P}')$), the only way that the assumption that voters submit \textit{linear} orders can matter is by affecting the types of margin graphs that can arise. This happens with the following parity constraint.

\begin{lemma}\label{ParityLem} If $\mathbf{P}$ is a linear profile, then either all margins between distinct candidates are even or all margins between distinct candidates are odd.
  \end{lemma}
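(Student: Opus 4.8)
The plan is to reduce everything to a single congruence: for any two distinct candidates $x,y\in X(\mathbf{P})$, the quantity $\margin_{\mathbf{P}}(x,y)$ has the same parity as $|V(\mathbf{P})|$. The point where linearity is used is that each ballot $\mathbf{P}(i)$ is a \emph{total} order on $X(\mathbf{P})$, so for distinct $x,y$ exactly one of $x\succ_i y$ or $y\succ_i x$ holds for each voter $i$; this is precisely what fails for the weak orders considered in Section~\ref{Sec:Ties}.

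First I would fix distinct $x,y\in X(\mathbf{P})$ and set $n=|V(\mathbf{P})|$, $a=|\{i\in V(\mathbf{P})\mid (x,y)\in\mathbf{P}(i)\}|$, and $b=|\{i\in V(\mathbf{P})\mid (y,x)\in\mathbf{P}(i)\}|$. By totality of each ballot, the sets counted by $a$ and by $b$ partition $V(\mathbf{P})$, so $a+b=n$. Hence
\[
\margin_{\mathbf{P}}(x,y)=a-b=a-(n-a)=2a-n\equiv n \pmod 2 .
\]
Thus the parity of $\margin_{\mathbf{P}}(x,y)$ is $n \bmod 2$ regardless of which pair $x,y$ of distinct candidates we chose. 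Therefore, if $n$ is even then every margin between distinct candidates is even, and if $n$ is odd then every such margin is odd, which is exactly the claimed dichotomy.

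There is essentially no obstacle here: the only thing to be careful about is to invoke totality (not merely antisymmetry) of the linear ballots when asserting $a+b=n$, and to note that $n$ is a property of $\mathbf{P}$ alone, not of the pair $(x,y)$, so the common parity is uniform across all pairs.
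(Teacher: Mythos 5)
Your proof is correct: the paper states this lemma without proof precisely because the standard argument is the one you give, namely that by totality of linear ballots $\margin_{\mathbf{P}}(x,y)=2a-|V(\mathbf{P})|$ and hence every margin has the parity of the number of voters. Your remark that totality is the point that fails for weak orders is also exactly why the lemma is dropped in Section~\ref{Sec:Ties}.
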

  
Indeed, this parity constraint is the only consequence of assuming linear ballots.

\begin{proposition}[\citealt{Debord1987}] If $\mathcal{M}$ is an asymmetric weighted directed graph with positive integer weights all having the same parity, and all weights are even if there are two vertices with no edge between them (representing a zero margin), then there is a linear profile $\mathbf{P}$ such that $\mathcal{M}=\mathcal{M}(\mathbf{P})$.
\end{proposition}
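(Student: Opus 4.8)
The plan is to realize $\mathcal{M}$ by an explicit finite list of linear ballots, adapting McGarvey's classical construction and splitting into the ``all even'' and ``all odd'' cases dictated by the parity hypothesis (cf.\ Lemma \ref{ParityLem}). Write $X$ for the vertex set of $\mathcal{M}$ and fix an enumeration $X=\{z_1,\dots,z_m\}$. For an ordered pair $(x,y)$ of distinct vertices, let $\ell(x,y)$ be the list of the vertices of $X\setminus\{x,y\}$ in the order $z_1,z_2,\dots$, and let $\ell(x,y)^{\mathrm{rev}}$ be its reverse. The $(x,y)$-gadget is the pair of ballots $x\succ y\succ \ell(x,y)$ and $\ell(x,y)^{\mathrm{rev}}\succ x\succ y$. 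One checks pair by pair that the two ballots of the gadget disagree on every ordered pair of candidates except $(x,y)$, on which both rank $x$ above $y$; hence adding $k$ copies of the $(x,y)$-gadget to a profile increases $\margin_{\mathbf{P}}(x,y)$ by $2k$ and leaves every other margin unchanged.

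Case 1: every edge weight of $\mathcal{M}$ is even (this includes the case in which some pair of vertices has no edge between it, since then all weights are even by hypothesis). Let $\mathbf{P}$ consist, for each ordered pair $(x,y)$ such that $\mathcal{M}$ has an edge from $x$ to $y$ of weight $w$, of $w/2$ copies of the $(x,y)$-gadget; if this leaves $\mathbf{P}$ with no voters (e.g.\ $\mathcal{M}$ is edgeless), also add one ballot $z_1\succ\dots\succ z_m$ together with its reverse, which jointly contribute $0$ to every margin. Then $\margin_{\mathbf{P}}(x,y)$ equals the target weight on each edge and equals $0$ on each non-edge, so $\mathcal{M}(\mathbf{P})=\mathcal{M}$.

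Case 2: every edge weight is odd. Then by hypothesis $\mathcal{M}$ has an edge between every pair of vertices, i.e.\ it is a weighted tournament. Start $\mathbf{P}$ with the single ballot $z_1\succ z_2\succ\dots\succ z_m$, after which every margin is $\pm1$. Now process each unordered pair: if $\mathcal{M}$ has an edge from $x$ to $y$ of (odd) weight $w\ge 1$, the current margin of $x$ over $y$ is $+1$ or $-1$, so the gap $w-(\pm1)$ is a nonnegative even integer, and we add that many halves of the $(x,y)$-gadget. Since gadgets for distinct unordered pairs do not interfere and a gadget for $\{x,y\}$ affects only $\margin_{\mathbf{P}}(x,y)$, after processing all pairs every margin equals its target; in particular every margin is odd, hence nonzero, so $\mathcal{M}(\mathbf{P})$ is asymmetric and equals $\mathcal{M}$.

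The only step with any real content is the pair-by-pair verification that the $(x,y)$-gadget is neutral off the pair $(x,y)$ --- the standard McGarvey computation, using that $u$ precedes $v$ in $x\succ y\succ\ell(x,y)$ if and only if $v$ precedes $u$ in $\ell(x,y)^{\mathrm{rev}}\succ x\succ y$ whenever $\{u,v\}\ne\{x,y\}$. Everything else is bookkeeping: the constructed profile is finite, its candidate set $X$ and its voter set are nonempty (the latter ensured by the padding in Case 1 and by the base ballot in Case 2), and the prescribed edge directions are consistent because $\mathcal{M}$ is asymmetric, so at most one of each pair $(x,y),(y,x)$ carries a positive weight. I anticipate no genuine obstacle here; this is essentially Debord's original argument refining McGarvey's theorem.
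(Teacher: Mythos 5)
Your construction is correct: the two-ballot gadget is neutral off the pair $(x,y)$ and adds exactly $2$ to $\margin(x,y)$, the even case follows immediately, and in the odd case the hypothesis does force a complete tournament so the single base ballot plus gadgets hits every odd target margin exactly. The paper itself gives no proof, citing Debord (1987); your argument is precisely the standard McGarvey-style gadget construction underlying that citation, so there is nothing to compare beyond noting that it matches the intended (cited) proof.
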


\noindent In Section \ref{Sec:Ties}, we will drop the assumption of linear ballots and hence the parity constraint on margins.

\section{Axioms on VCCRs}\label{Sec:AxVCCR}

In this section, we present the axioms used in our characterization of the Split Cycle VCCR. 

\subsection{Standard axioms}\label{Sec:StandAxVCCR}
First, we recall what Holliday and Pacuit \citeyearpar{HP2021} consider the ``standard axioms'' in their characterization. 
\begin{definition}\label{Def:StandAxVCCR}
  Let $f$ be a VCCR.
  \begin{enumerate}
  \item $f$ satisfies \textit{Anonymity} if for any profiles $\mathbf{P}$ and $\mathbf{P}'$, if $V(\pfp) = V(\pfp')$ and there is a bijection $\pi$ from $V(\mathbf{P})$ to $V(\mathbf{P}')$ such that for any $i \in V(\mathbf{P}')$, $\mathbf{P}'(i) = \mathbf{P}(\pi(i))$, then  $f(\mathbf{P}) = f(\mathbf{P}')$; and $f$ satisfies \textit{Neutrality} if for any profiles $\mathbf{P}$ and $\mathbf{P}'$, if $V(\mathbf{P}) = V(\mathbf{P}')$, $X(\pfp) = X(\pfp')$, and there is a bijection $\pi$ from $X(\mathbf{P})$ to $X(\mathbf{P}')$ such that for any $i \in V(\mathbf{P})$ and  $x, y \in X(\mathbf{P})$, $(x, y) \in \mathbf{P}(i)$ iff $(\pi(x), \pi(y)) \in \mathbf{P}'(i)$, then for any $x, y \in X(\mathbf{P})$, $(x, y) \in f(\mathbf{P})$ iff $(\pi(x), \pi(y)) \in f(\mathbf{P}')$.
  \item $f$ satisfies \textit{Availability} if for any profile $\mathbf{P}$,
    $\overline{f}(\mathbf{P})$ is nonempty.
  \item $f$ satisfies \textit{Homogeneity} (resp.~\textit{Upward Homogeneity})
    if for any profile $\mathbf{P}$ and $2\mathbf{P}$, where $2\mathbf{P}$ is
    the result of replacing each voter in $\mathbf{P}$ by $2$ copies of that
    voter, $f(\mathbf{P}) = f(2\mathbf{P})$ (resp.~$f(\mathbf{P}) \subseteq
    f(2\mathbf{P})$).
  \item $f$ satisfies \textit{Monotonicity} (resp.~\textit{Monotonicity for two-candidate
      profiles}) if for any profile (resp.~two-candidate profile)
    $\mathbf{P}$ and $\mathbf{P}'$ obtained from $\mathbf{P}$ by moving $x \in X(\mathbf{P})$ up one place in some voter $i$'s ballot in the sense that for all $y, z \not= x$, $(y, z) \in \mathbf{P}'(i)$ iff $(y, z) \in \mathbf{P}(i)$, and $|\{z \in X(\mathbf{P}') \mid (x, z) \in \mathbf{P}'(i)\}| = |\{z \in X(\mathbf{P}) \mid (x, z) \in \mathbf{P}(i)\}| + 1$, we have $(x, y) \in f(\mathbf{P})$
    only if $(x, y) \in f(\mathbf{P}')$.
  \item $f$ satisfies \textit{Neutral Reversal} if for any profile $\mathbf{P}$
    and $\mathbf{P}'$ obtained from $\mathbf{P}$ by adding two voters whose
    ballots are converses of each other, we have $f(\mathbf{P}) = f(\mathbf{P}')$.
  \end{enumerate}
\end{definition}
\noindent Axioms 1-4 are widely satisfied by VCCRs in the literature. While axiom 5 (from \citealt{Saari2003}) is violated by some prominent VCCRs (e.g., the Plurality VCCR according to which $x$ defeats $y$ iff $x$ receives more first-place votes than $y$, or the Pareto VCCR according to which $x$ defeats $y$ iff every voter strictly prefers $x$ to $y$), it is  satisfied by all VCCRs that depend only on the majority margins between candidates (including, e.g., the Borda VCCR, which can be defined in terms of majority margins as in \citealt[p.~28]{Zwicker2016}).

\subsection{Coherent IIA}\label{Sec:CohIIAVCCR}

The key axiom in \citealt{HP2021} is the axiom of Coherent IIA, already discussed informally in Section \ref{Sec:Intro}. First we recall Arrow's \citeyearpar{Arrow1963} IIA, of which Coherent IIA is a weakening. 

\begin{definition} A VCCR $f$ satisfies Independence of Irrelevant Alternatives (IIA)\footnote{Holliday and Pacuit \citeyearpar{HP2021} call this principle \textit{variable-election} IIA (VIIA), since it allows $\mathbf{P}$ and $\mathbf{P}'$ to have different sets of candidates, as opposed to \textit{fixed-election} IIA (FIIA), which requires that $\mathbf{P}$ and $\mathbf{P}'$ have the same set of candidates.} if for any profiles $\mathbf{P}$ and $\mathbf{P}'$, if $x$ defeats $y$ in $\mathbf{P}$ according to $f$, and $\mathbf{P}_{\mid\{x,y\}}=\mathbf{P}_{\mid\{x,y\}}'$, then $x$ defeats $y$ in $\mathbf{P}'$ according to $f$.\end{definition}

Coherent IIA strengthens the assumption of IIA on the relation of $\mathbf{P}$ and $\mathbf{P}'$ so that not only $\mathbf{P}_{\mid\{x,y\}}=\mathbf{P}_{\mid\{x,y\}}'$ but also $\mathbf{P}'$ is related to $\mathbf{P}$ in such a way that $\mathbf{P}'$ cannot be more incoherent with respect to $x,y$, in terms of majority cycles and their strengths, than $\mathbf{P}$ is with respect to $x,y$.

\begin{definition}
  \label{def:lc-accessibility}
  For any two profiles $\mathbf{P}$ and $\mathbf{P}'$ with $x,y\in
  X(\mathbf{P})\cap X(\mathbf{P}')$,
  let \[\mathbf{P}\rightsquigarrow_{x,y}\mathbf{P}'\] if $\mathbf{P}_{|\{x,y\}}
  = \mathbf{P}'_{|\{x,y\}}$ and $\mathcal{M}(\mathbf{P}')$ can be obtained from
  $\mathcal{M}(\mathbf{P})$ by deleting zero or more candidates other than $x$
  and $y$ and deleting or reducing the margins on zero or more edges not connecting $x$ and $y$.
\end{definition}

Note that trivially $\mathbf{P} \rightsquigarrow_{x, y} \mathbf{P}'$ only if
$V(\mathbf{P}') = V(\mathbf{P})$ and $X(\mathbf{P}') \subseteq X(\mathbf{P})$;
that is, the profiles must have the same set of voters and $\mathbf{P}'$ cannot have
additional candidates.
\begin{example} Consider the following profile $\mathbf{P}$ with three
voters $i, j, k$ and four candidates $x, y, z, u$, whose margin graph is shown on the right:
\begin{center}
$\mathbf{P}$\qquad
\begin{tabular}[]{ccc}
   $i$ & $j$ & $k$ \\
  \hline 
  $x$ & $u$ & $x$ \\
  $y$ & $x$ & $z$ \\
  $z$ & $y$ & $u$ \\
  $u$ & $z$ & $y$
\end{tabular}
\hspace{4em}
  \begin{tikzpicture}[baseline=(current bounding box.center)]
    \node[circle,draw,minimum width=0.25in] at (0,0) (x) {$x$}; 
    \node[circle,draw,minimum width=0.25in] at (3,0) (z) {$z$}; 
    \node[circle,draw,minimum width=0.25in] at (1.5,1.5) (y) {$y$}; 
    \node[circle,draw,minimum width=0.25in] at (1.5,-1.5) (u) {$u$}; 
    \path[->,draw,thick] (x) to node[fill=white] {$3$} (y);
    \path[->,draw,thick] (x) to node[near start, fill=white] {$3$} (z);
    \path[->,draw,thick] (x) to node[fill=white] {$1$} (u);
    \path[->,draw,thick] (y) to node[fill=white] {$1$} (z);
    \path[->,draw,thick] (u) to node[near start, fill=white] {$1$} (y);
    \path[->,draw,thick] (z) to node[fill=white] {$1$} (u);
  \end{tikzpicture}
\end{center}
If $\mathbf{Q}$ is the restriction of $\mathbf{P}$ to just
the candidates $x$ and $y$, which is the profile where $i$, $j$, and $k$ vote
unanimously that $x$ is better than $y$, then trivially $\mathbf{P}
\rightsquigarrow_{x, y} \mathbf{Q}$. That is, 
\begin{center}
\begin{tabular}[]{ccc}
   $i$ & $j$ & $k$ \\
  \hline 
  $x$ & $u$ & $x$ \\
  $y$ & $x$ & $z$ \\
  $z$ & $y$ & $u$ \\
  $u$ & $z$ & $y$
\end{tabular}
$\rightsquigarrow_{x, y}$
\begin{tabular}[]{ccc}
   $i$ & $j$ & $k$ \\
  \hline 
  $x$ & $x$ & $x$ \\
  $y$ & $y$ & $y$ \\
\end{tabular}
\end{center}
If we consider only changing the relative position of one pair of candidates for
one voter, then the only way to produce a different profile $\mathbf{Q}$ such
that $\mathbf{P} \rightsquigarrow_{x, y} \mathbf{Q}$ is to switch $x$ and $z$ in
the ballot of $k$. That is, we have
\begin{center} 
  \begin{tabular}[]{ccc}
   $i$ & $j$ & $k$ \\
  \hline 
  $x$ & $u$ & $\boldsymbol{x}$ \\
  $y$ & $x$ & $\boldsymbol{z}$ \\
  $z$ & $y$ & $u$ \\
  $u$ & $z$ & $y$
\end{tabular}
$\rightsquigarrow_{x, y}$
\begin{tabular}[]{ccc}
   $i$ & $j$ & $k$ \\
  \hline 
  $x$ & $u$ & $\boldsymbol{z}$ \\
  $y$ & $x$ & $\boldsymbol{x}$ \\
  $z$ & $y$ & $u$ \\
  $u$ & $z$ & $y$
\end{tabular}
\end{center}
with their corresponding margin graphs:
\begin{center}
  \begin{tikzpicture}[baseline]
    \node[circle,draw,minimum width=0.25in] at (0,0) (x) {$x$}; 
    \node[circle,draw,minimum width=0.25in] at (3,0) (z) {$z$}; 
    \node[circle,draw,minimum width=0.25in] at (1.5,1.5) (y) {$y$}; 
    \node[circle,draw,minimum width=0.25in] at (1.5,-1.5) (u) {$u$}; 
    \path[->,draw,thick] (x) to node[fill=white] {$3$} (y);
    \path[->,draw,thick] (x) to node[near start, fill=white] {$3$} (z);
    \path[->,draw,thick] (x) to node[fill=white] {$1$} (u);
    \path[->,draw,thick] (y) to node[fill=white] {$1$} (z);
    \path[->,draw,thick] (u) to node[near start, fill=white] {$1$} (y);
    \path[->,draw,thick] (z) to node[fill=white] {$1$} (u);
  \end{tikzpicture}
  $\rightsquigarrow_{x, y}$
  \begin{tikzpicture}[baseline]
    \node[circle,draw,minimum width=0.25in] at (0,0) (x) {$x$}; 
    \node[circle,draw,minimum width=0.25in] at (3,0) (z) {$z$}; 
    \node[circle,draw,minimum width=0.25in] at (1.5,1.5) (y) {$y$}; 
    \node[circle,draw,minimum width=0.25in] at (1.5,-1.5) (u) {$u$}; 
    \path[->,draw,thick] (x) to node[fill=white] {$3$} (y);
    \path[->,draw,very thick, blue] (x) to node[near start, fill=white] {$1$} (z);
    \path[->,draw,thick] (x) to node[fill=white] {$1$} (u);
    \path[->,draw,thick] (y) to node[fill=white] {$1$} (z);
    \path[->,draw,thick] (u) to node[near start, fill=white] {$1$} (y);
    \path[->,draw,thick] (z) to node[fill=white] {$1$} (u);
  \end{tikzpicture}
\end{center}
Note that from left to right, no new edges are created, and the only change is
that the margin of the edge from $x$ to $z$ is reduced from $3$ to $1$. The
reason this swap of $x$ and $z$ in the ballot of $k$ is the only allowed
swap if we want a profile $\mathbf{Q}$ such that  $\mathbf{P}\rightsquigarrow_{x, y}\mathbf{Q}$ is
that (1) obviously we cannot swap $x$ and $y$, and (2) for any other pair of
candidates, the margin between them in the original margin graph is $1$ so that
a swap would flip the edge. The edges are directed, so flipping edges means
creating new majority edges, which is not allowed in the definition of
$\rightsquigarrow_{x, y}$. 
\end{example}

We now define Coherent IIA by replacing $\mathbf{P}_{\mid\{x,y\}}=\mathbf{P}_{\mid\{x,y\}}'$ in the definition of IIA with  $\mathbf{P}\rightsquigarrow_{x, y}\mathbf{P}'$.

\begin{definition} A VCCR $f$ satisfies \textit{Coherent IIA} if for any profiles $\mathbf{P}$ and $\mathbf{P}'$, if $x$ defeats $y$ in $\mathbf{P}$ according to $f$,  and  $\mathbf{P}\rightsquigarrow_{x,y}\mathbf{P}'$, then $x$ defeats $y$ in $\mathbf{P}'$ according to $f$.\end{definition}

\noindent Note a simple consequence of Coherent IIA, known as Weak IIA (cf.~\citealt{Baigent1987}): for any profiles $\mathbf{P}$ and $\mathbf{P}'$, if $x$ defeats $y$ in $\mathbf{P}$ according to $f$ and $\mathbf{P}_{\mid\{x,y\}}=\mathbf{P}_{\mid\{x,y\}}'$, then it is not the case that $y$ defeats $x$ in $\mathbf{P}'$ according to $f$. To see that Weak IIA follows from Coherent IIA, note that according to any VCCR satisfying Coherent IIA, if $x$ defeats $y$ in $\mathbf{P}$, $\mathbf{P}_{|\{x, y\}} = \mathbf{P}'_{|\{x, y\}} = \mathbf{Q}$, and $y$ defeats $x$ in $\mathbf{P}'$, then by Coherent IIA and the observation that $\mathbf{P} \rightsquigarrow_{x, y} \mathbf{P}_{|\{x, y\}} = \mathbf{Q}$ and $\mathbf{P}' \rightsquigarrow_{y, x} \mathbf{P}'_{|\{x, y\}} = \mathbf{Q}$, we have that both $x$ defeats $y$ and $y$ defeats $x$ in $\mathbf{Q}$, contradicting the fact that the defeat relation produced by the VCCR $f$ must be asymmetric. For extensive discussion of Coherent IIA and its consequences, see \citealt[Section  4.3]{HP2021}. In this paper we focus instead on the two new axioms used in our main result.

An axiom closely related to Coherent IIA that is also discussed in \citealt{HP2021} is the following axiom of Majority Defeat.
\begin{definition}\label{MajDefeat}
  A VCCR $f$ satisfies Majority Defeat iff for any profile $\mathbf{P}$ and $x, y \in X(\mathbf{P})$, if $(x, y) \in f(\mathbf{P})$, then $\margin_{\mathbf{P}}(x,y) > 0$, i.e., $x$ defeats $y$ only if $x$ is majority preferred to $y$.
\end{definition}
\noindent Majority Defeat will be used a few times later. 

\subsection{Coherent Defeat}\label{Sec:CohDefVCCR}

We come now to the first of our two new axioms.

\begin{definition}
  A VCCR $f$ satisfies \textit{Coherent Defeat} if for any profile $\mathbf{P}$
  and $x, y \in X(\mathbf{P})$, if ${\margin_{\mathbf{P}}(x, y) > 0}$ and there is
  no majority cycle containing $x$ and $y$ in $\mathbf{P}$ (or equivalently,
  there is no majority path from $y$ to~$x$), then $(x, y) \in f(\mathbf{P})$.
\end{definition}
\noindent The key idea behind Coherent Defeat is simple: when there is no incoherence due to majority cycles, majority preference is sufficient for defeat. In other words, majority cycles are the only reason we deviate from majority preference for deciding defeat. One caveat is that we understand incoherence locally: when deciding whether $x$ defeats $y$, only majority cycles involving $x$ and $y$ matter; thus, regardless of whether there are majority cycles involving other candidates, if there are no majority cycles involving $x$ and $y$, then a majority preference for $x$ over $y$ is sufficient for  $x$ to defeat $y$. As mentioned in Section \ref{Sec:Intro}, a number of VCCRs (such as Ranked Pairs, Beat Path, and GOCHA, as proved below) together with Split Cycle share this core commitment and thus can all be seen as ways of resolving incoherence due to relevant majority cycles. To put the point in terms of the Advantage-Standard Model of \citealt{HK2020b}, according to which $x$ defeats $y$ once the \textit{advantage} of $x$ over $y$ (which depends only on how voters ranks $x$ versus $y$) exceeds the \textit{standard} for $x$ to defeat $y$ (which may depend on other information in the profile, including how voters rank $x$ against non-$y$ candidates and $y$ against non-$x$ candidates), Coherent Defeat follows assuming the advantage and standard functions satisfy the following weak constraints:
\begin{itemize}
  \item the advantage of $x$ over $y$ is greater than $0$ if $x$ is majority preferred to $y$;
  \item the standard for $x$ to defeat $y$ is $0$ if there is no majority cycle involving $x$ and $y$.
\end{itemize}
The second constraint is clearly necessary if we take the standard for $x$ to defeat $y$ to measure in some way local incoherence due to majority cycles involving $x$ and $y$.

Coherent Defeat can also be viewed as a principle of ``unchallenged defeat'' that follows from three principles:
\begin{itemize}
\item for $x$ to defeat $y$, it is sufficient to find one reason for $x$ to
  defeat $y$ and  make sure that no reasons for $x$ to defeat $y$ can be
  challenged;
\item a majority preference for $x$ over $y$ is a reason for $x$ to defeat $y$;
\item any challenge to any reason for $x$ to defeat $y$ must be based on a
  majority preference path from $y$ to $x$.
\end{itemize}
For a more concrete model, we can view the margin graph as providing \textit{arguments} for and against propositions of the form ``$x$ defeats $y$'' and their negations.  There are two types of arguments. First, there is an argument for ``$x$ defeats $y$'' whenever there is a majority preference for $x$ over $y$. Second, when there is a majority path from $x$ to $y$,  there is an argument for ``$y$ does not defeat $x$'', since for each link $a, b$ in the path we have an argument for  ``$a$ defeats $b$'', but the defeat relation must be acyclic, so taken together the steps along a majority path constitute an argument for ``$y$ does not defeat $x$''. It should be noted here that an argument based on a majority path from $x$ to $y$ to the conclusion ``$x$ defeats $y$'' is not necessarily a good argument, since defeat relations are not obviously transitive. Now if we grant that arguments for and against candidates defeating each other can only be generated in the above two ways, then once we have an $x$ who is majority preferred to $y$ and there is no majority path from $y$ to $x$, we have an argument for ``$x$ defeats $y$'' but no counterargument to the contrary. Thus, we should accept that $x$ defeats $y$. Similar ideas of treating majority preferences and paths as arguments appear in \citealt{Heitzig2002}, and our Coherent Defeat is essentially his Immunity to Binary Arguments ($\mathrm{Im}_A$) applied to VCCRs, with $A$ as the majority preference relation.

\begin{proposition}
  Ranked Pairs, Beat Path, and GOCHA as VCCRs satisfy Coherent Defeat.
\end{proposition}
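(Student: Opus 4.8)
The plan is to verify Coherent Defeat separately for each of the three VCCRs, relying on the standard definitions of Ranked Pairs, Beat Path, and GOCHA together with the ``no majority path from $y$ to $x$'' reformulation of the hypothesis. So suppose $\margin_{\mathbf{P}}(x,y) > 0$ and there is no majority path from $y$ to $x$ in $\mathcal{M}(\mathbf{P})$; we must show each rule puts $(x,y)$ in its defeat relation.

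For \textbf{Beat Path}, recall that $x$ defeats $y$ iff the strength of the strongest majority path from $x$ to $y$ strictly exceeds that of the strongest majority path from $y$ to $x$, where the strength of a path is the minimum margin along it. Since $\margin_{\mathbf{P}}(x,y) > 0$, the single edge $(x,y)$ is a majority path from $x$ to $y$ of strength $\margin_{\mathbf{P}}(x,y) \ge 1$, so the strongest $x$-to-$y$ path has strength $\ge 1$. Since there is no majority path from $y$ to $x$ at all, the strongest such path has strength $0$ (or is conventionally set to $0$/undefined). Hence $x$ beats $y$ on Beat Path. For \textbf{GOCHA} (the Schwartz set / GETCHA-style defeat relation), $x$ defeats $y$ iff $x$ is majority preferred to $y$ and $y$ is not majority preferred to $x$ via any path, i.e.\ there is no majority path from $y$ to $x$; this is exactly our hypothesis, so $x$ defeats $y$. (If GOCHA is instead presented via its winning set, one shows $y \notin$ the Schwartz-style top set reachable from $x$ using the absence of a $y$-to-$x$ path, but the cleanest route is to note the defeat relation GOCHA induces on pairs is precisely ``$x$ majority preferred to $y$ and no majority path from $y$ back to $x$.'')

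For \textbf{Ranked Pairs}, recall the rule processes the majority edges in order of decreasing margin, ``locking in'' an edge unless it would create a cycle with previously locked edges, and $x$ defeats $y$ iff the edge $(x,y)$ gets locked in. I would argue by contradiction: suppose $(x,y)$ is not locked in. Then at the stage $(x,y)$ is considered, the already-locked edges contain a path from $y$ to $x$. Every locked edge is a majority edge of $\mathcal{M}(\mathbf{P})$ (Ranked Pairs only ever locks majority edges), so this yields a majority path from $y$ to $x$ in $\mathcal{M}(\mathbf{P})$, contradicting the hypothesis. Hence $(x,y)$ is locked in and $x$ defeats $y$. (One small wrinkle: with tied margins Ranked Pairs may be defined as a set of outcomes over tie-breakings; since here there is no majority path from $y$ to $x$, no tie-breaking order can ever lock a $y$-to-$x$ path before $(x,y)$, so $(x,y)$ is locked in under every tie-breaking, and the conclusion holds for all standard variants.)

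The main obstacle is simply bookkeeping around the precise definitions of the three rules as \emph{VCCRs} (as opposed to their more familiar presentations as winner-functions), and handling tie-breaking conventions for Ranked Pairs; once the definitions are pinned down, each case is a short argument exploiting that every edge these rules ``use'' is a genuine majority edge, so a derived $y$-to-$x$ connection would have to come from a majority path from $y$ to $x$, which is forbidden by hypothesis.
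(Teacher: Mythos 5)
Your proposal is correct and takes essentially the same route as the paper: for Beat Path and GOCHA the arguments are identical, and for Ranked Pairs both you and the paper observe that blocking the edge $(x,y)$ would require the previously locked edges to contain a $y$-to-$x$ path, which would yield a majority path from $y$ to $x$, with the conclusion holding under every tie-breaker (the paper defines the Ranked Pairs VCCR as the intersection over all tie-breakers, exactly as your parenthetical anticipates). One small caveat: in the paper's precise VCCR definition, Ranked Pairs processes all pairs with margin $\geq 0$ and, when a pair would create a cycle, locks the \emph{reversed} pair rather than skipping it, so "every locked edge is a majority edge" is not literally true; however, any such reversed edge is already in the transitive closure of earlier locked majority edges, so your reachability argument still goes through---the paper covers this with a short induction showing the transitive closure of the locked set stays inside the transitive closure of the majority pairs processed so far.
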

\begin{proof}
  Ranked Pairs is standardly defined as a VSCC. We first define Ranked Pairs as a VSCC, and then consider a natural VCCR version for this proposition. For any profile $\pfp$, let $Pairs(\pfp) = \{(x, y) \in X(\pfp)^2 \mid x \not= y \text{ and } \margin_{\mathbf{P}}(x, y) \ge 0\}$. Say a \emph{tie-breaker for $\pfp$} is a linear order $L \in \mathcal{L}(Pairs(\pfp))$. Then we define the $L$-ordering $\succ_{\pfp, L}$ of $Pairs(\pfp)$ as follows: $(x, y) \succ_{\pfp, L} (x', y')$ iff $\margin_\pfp(x, y) > \margin_\pfp(x', y')$ or $\margin_\pfp(x, y) = \margin_\pfp(x', y')$ and $(x, y) L (x', y')$. Then define $rp(\pfp, L)$, a linear order on $X(\pfp)$, as follows: 
  \begin{itemize}
    \item List $Pairs(\pfp)$ according to $\succ_{\pfp, L}$ as $(x_1, y_1) \succ_{\pfp, L} (x_2, y_2) \succ_{\pfp, L} \dots \succ_{\pfp, L} (x_m, y_m)$.
    \item Let $D_0 = \varnothing$, and for each $i = 1, \dots, m$, if $D_{i-1} \cup \{(x_i, y_i)\}$ is acyclic, let $D_i = D_{i-1} \cup \{(x_i, y_i)\}$, and otherwise let $D_i = D_{i-1} \cup \{(y_i, x_i)\}$.
    \item Let $rp(\pfp, L) = D_m$.
  \end{itemize}
  Intuitively, $rp(\pfp, L)$ is a defeat relation relative to the tie-breaker $L$. For any tie-breaker $L$, $rp(\pfp, L)$ is a linear ordering of $X(\pfp)$. The Ranked Pairs VSCC is defined by letting $RP(\pfp)$  be the set of all $x \in X(\pfp)$ such that there is a tie-breaker $L \in \mathcal{L}(Pairs(X(\pfp)))$ and $x$ is the top element of $rp(\pfp, L)$. We define the VCCR version of Ranked Pairs by letting $rp(\pfp) = \bigcap\{rp(\pfp, L) \mid L \in \mathcal{L}(Pairs(X(\pfp)))\}$. That is, $x$ defeats $y$ according to the Ranked Pairs iff for all tie-breakers $L$, $x$ defeats $y$ relative to $L$ according to Ranked Pairs.

  Now we show that $rp$ satisfies Coherent Defeat. Let $\pfp$ be a profile and $x, y \in X(\pfp)$ such that $\margin_\pfp(x, y) > 0$. Assume there is no majority path from $y$ to $x$. Let $L$ be any tie-breaker. Since $\margin_\pfp(x, y) > 0$, for any $(x', y') \succ_{\pfp, L} (x, y)$, $\margin_\pfp(x', y') > 0$. Using the notation above, let $k$ be the position of the pair $(x, y)$ in the ordering $\succ_{\pfp, L}$. Then for any $i = 1, \dots, k$, $\margin_{\pfp}(x_i, y_i) > 0$, and by an easy induction up to $k$, $D_i$ is acyclic and the transitive closure of $D_i$ must also be a subset of the transitive closure of $\{(x_1, y_1), \dots, (x_i, y_i)\}$. Since there is no majority path from $y$ to $x$, $(y, x)$ is not in the transitive closure of $\{(x_1, y_1), \dots, (x_{k-1}, y_{k-1})\}$. This means $D_{k-1} \cup \{(x, y)\}$ is acyclic, and thus $(x, y) \in D_k$ and $(x, y) \in rp(\pfp, L)$. Since $L$ was arbitrary, $(x, y) \in rp(\pfp)$.

For Beat Path, recall that for any majority path $\rho$, the strength $\strength(\rho)$ of $\rho$ is the minimal weight in the majority edges of $\rho$. For any $x, y \in X(\pfp)$, let 
\[
\strength_{\pfp}(x, y) = \max\{\strength_\pfp(\rho) \mid \rho \text{ is a majority path from } x \text{ to } y \text{ in } \pfp\}
\]
where the max of the empty set is stipulated to be $0$. Then Beat Path as a VCCR is defined by setting $(x, y) \in bp(\pfp)$ iff $\strength_\pfp(x, y) > \strength_\pfp(y, x)$. Now if $\margin_{\pfp}(x, y) > 0$ while there is no majority path from $y$ to $x$, then $\strength_\pfp(x, y) \ge \margin_{\pfp}(x, y) > 0 = \strength_\pfp(y, x)$. Thus, $(x, y) \in bp(\pfp)$.

  GOCHA is also usually defined as a VSCC, but the underlying idea also defines a VCCR; it can be seen as a purely qualitative version of Beat Path, not taking the weights into consideration. For any profile $\pfp$, the GOCHA VCCR $gocha$ is defined by setting $(x, y) \in gocha(\pfp)$ iff there is a majority path from $x$ to $y$ but not from $y$ to $x$ in $\pfp$.\footnote{This definition is due to Corollary 6.2.2 in \cite{Schwartz1986}. The VSCC $GOCHA$ is simply $\overline{gocha}$.} It is then immediate  that $gocha$ satisfies Coherent Defeat.\end{proof}

\subsection{Positive Involvement in Defeat}\label{Sec:PosInvVCCR}
Our second new axiom for VCCRs is based on the standard axiom of Positive Involvement for VSCCs (\citealt{Saari1995}, \citealt{Perez2001}, \citealt{HP2021PI}).

\begin{definition} A VSCC $F$ satisfies \textit{Positive Involvement} if for any
  profile $\mathbf{P}$, if $y\in F(\mathbf{P})$ and $\mathbf{P}'$ is obtained
  from $\mathbf{P}$ by adding one new voter who ranks $y$ in first place, then
  $y\in F(\mathbf{P}')$.
\end{definition}
\noindent Thus, Positive Involvement says that a candidate $y$'s winning should be preserved under the
addition of a voter who gives $y$  maximum support by ranking $y$ at the top
of her ballot. As Perez \citeyearpar[p.~605]{Perez2001} remarks, it can be seen as ``the minimum to
require concerning the coherence in the winning set when new voters are added.''

\begin{remark} Another well-known axiom for VSCCs concerning the addition of voters to a profile is the Reinforcement axiom: if $\mathbf{P}$ and $\mathbf{P}'$ are profiles with the same set of candidates but disjoint sets of voters, then $F(\mathbf{P})\cap F(\mathbf{P}')\neq\varnothing$ implies  $F(\mathbf{P}+\mathbf{P}')=F(\mathbf{P})\cap F(\mathbf{P}')$, where $\mathbf{P}+\mathbf{P}'$ is the profile combining the voters from $\mathbf{P}$ and $\mathbf{P}'$. Positive Involvement clearly follows from Reinforcement\footnote{Even in the weaker version requiring only  $F(\mathbf{P})\cap F(\mathbf{P}')\subseteq F(\mathbf{P}+\mathbf{P}')$. Also note that the Homogeneity axiom of Definition~\ref{Def:StandAxVCCR} follows from Reinforcement.} together with the axiom of Faithfulness (\citealt{Young1974}): if $\mathbf{P}$ has only one voter who ranks  $x$ uniquely first, then $F(\mathbf{P})=\{x\}$. However, Reinforcement is inconsistent with Condorcet Consistency (if some candidate $x$ has positive margins over all other candidates in $\mathbf{P}$, then $F(\mathbf{P})=\{x\}$) (see \citealt[Proposition 2.5]{Zwicker2016}) and hence is not satisfied by Split Cycle, which is Condorcet consistent. It is sometimes said that the Kemeny VSCC (\citealt{kemeny1959}) satisfies Reinforcement and Consistency, but this is not the case; it is only when Kemeny is regarded as a \textit{social preference function} (SPF), which assigns to each profile a \textit{set} of binary relations, that the Kemeny SPF can be regarded as satisfying Reinforcement for SPFs (\citealt{Young1978}), which is a substantially weaker axiom than Reinforcement for VSCCs (see \citealt[p.~45]{Zwicker2016}). For doubts about the normative plausibility of Reinforcement for VSCCs, see \citealt[Remark 4.26]{HP2020}.\end{remark}

To generalize Positive Involvement to VCCRs, we need to ask what is the ``minimum to require
concerning the coherence'' in the defeat relation when new voters are added. First, there is  a trivial way to generalize Positive Involvement to VCCRs.
\begin{definition}
  An acyclic VCCR $f$ satisfies \textit{Positive Involvement} if the VSCC $\overline{f}$
  (recall Lemma \ref{VCCRtoVSCC}) satisfies Positive Involvement: for any profile $\mathbf{P}$ and $y\in X(\mathbf{P})$, if  $y$ is
  undefeated in $\mathbf{P}$ according to $f$, and 
  $\mathbf{P}'$ is obtained from $\mathbf{P}$ by adding one new voter who ranks $y$
  in first place, then $y$ is still undefeated in $\mathbf{P}'$ according to~$f$.
\end{definition}
The problem with this generalization is not that it is unreasonable for a VCCR to satisfy it but rather that we find it reasonable to ask for more. Given a VCCR $f$, we have information not only about who is defeated or undefeated but also about who is defeated \textit{by whom}. Thus, there is  the question of how the defeat relation between two candidates $x$ and $y$ should react to the addition of a voter. A natural answer is this: adding a voter who ranks $y$ above $x$ should not lead to $y$'s defeat by $x$. 
\begin{definition}
  A VCCR $f$ satisfies \textit{Positive Involvement in Defeat} if for any
  profile $\mathbf{P}$ and $x,y\in X(\mathbf{P})$, if $y$ is not defeated by $x$
  in $\mathbf{P}$ according to $f$, and $\mathbf{P}'$ is obtained from
  $\mathbf{P}$ by adding one new voter  who ranks $y$ above $x$, then $y$ is not
  defeated by $x$ in $\mathbf{P}'$ according to $f$.
\end{definition}
\noindent This is of course not the only possible answer. For example, we could require not only that $y$ is ranked above $x$ in the new ballot but also that $y$ is ranked at the top. 
\begin{definition}
  A VCCR $f$ satisfies \textit{First-place Involvement in Defeat} if for any
  profile $\mathbf{P}$ and $x,y \in X(\mathbf{P})$, if $y$ is not defeated by
  $x$ in $\mathbf{P}$ according to $f$, and $\mathbf{P}'$ is obtained from
  $\mathbf{P}$ by adding one new voter who ranks $y$ \emph{in first place},
  then $y$ is not defeated by $x$ in $\mathbf{P}'$ according to $f$.
\end{definition}
\noindent Since the requirement on $x$ and $y$ is stronger in First-place Involvement in Defeat, it is entailed by Positive Involvement in Defeat: any VCCR satisfying Positive Involvement in Defeat also satisfies First-place Involvement in Defeat. Moreover, it is easy to see that First-place Involvement in Defeat entails Positive Involvement for VCCRs.

It turns out, as we show in Section \ref{sec:necessity}, that the stronger axiom of Positive Involvement in Defeat is necessary for our characterization of Split Cycle (compared to First-place Involvement in Defeat and hence Positive Involvement), fixing the other axioms we use. However, we do not see this as a problem for Split Cycle as an appealing VCCR, since we find the intuitive appeal of Positive Involvement in Defeat rather clear and perhaps even clearer than First-place Involvement in Defeat. If the only change to a profile in which $y$ is already undefeated by $x$ is the addition of a ballot ranking $y$ above $x$, thereby lending support to $y$ against $x$, then there is no reason to now say that $x$ defeats $y$.

\begin{proposition}\label{SCPID}
    The Split Cycle VCCR satisfies Positive Involvement in Defeat.
\end{proposition}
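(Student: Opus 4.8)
The plan is to reduce everything to the threshold description of Split Cycle in Lemma~\ref{lem:restrict-to-k}: $(x,y)\in sc(\mathbf{P})$ iff $\margin_{\mathbf{P}}(x,y)>0$ and there is no simple majority path from $y$ to $x$ in $\mathcal{M}(\mathbf{P})\uph \margin_{\mathbf{P}}(x,y)$. Fix a profile $\mathbf{P}$ and $x,y\in X(\mathbf{P})$ with $(x,y)\notin sc(\mathbf{P})$, and let $\mathbf{P}'$ be obtained from $\mathbf{P}$ by adding one voter who ranks $y$ above $x$. The one elementary fact I would record up front is that adding a single (linear) ballot changes the margin of every ordered pair by exactly $\pm 1$; in particular $\margin_{\mathbf{P}'}(x,y)=\margin_{\mathbf{P}}(x,y)-1$, while $|\margin_{\mathbf{P}'}(a,b)-\margin_{\mathbf{P}}(a,b)|\le 1$ for all $a,b\in X(\mathbf{P})$.

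Next I would split on the sign of $\margin_{\mathbf{P}'}(x,y)$. If $\margin_{\mathbf{P}'}(x,y)\le 0$, then $(x,y)\notin sc(\mathbf{P}')$ immediately, since Split Cycle defeat requires a positive margin. Otherwise $\margin_{\mathbf{P}'}(x,y)\ge 1$, so $m:=\margin_{\mathbf{P}}(x,y)=\margin_{\mathbf{P}'}(x,y)+1\ge 2>0$; in this case $(x,y)\notin sc(\mathbf{P})$ together with $\margin_{\mathbf{P}}(x,y)>0$ forces, via Lemma~\ref{lem:restrict-to-k}, the existence of a simple majority path $\rho$ from $y$ to $x$ every edge of which has weight at least $m$ in $\mathcal{M}(\mathbf{P})$.

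The crux is to show that this same $\rho$ still witnesses non-defeat in $\mathbf{P}'$ at the new, lower threshold. Each edge $(a,b)$ of $\rho$ satisfies $\margin_{\mathbf{P}}(a,b)\ge m\ge 2$, hence $\margin_{\mathbf{P}'}(a,b)\ge m-1\ge 1>0$; thus $(a,b)$ is still an edge of $\mathcal{M}(\mathbf{P}')$, in the same direction, of weight at least $m-1=\margin_{\mathbf{P}'}(x,y)$. Consequently $\rho$ is a simple majority path from $y$ to $x$ in $\mathcal{M}(\mathbf{P}')\uph \margin_{\mathbf{P}'}(x,y)$, and Lemma~\ref{lem:restrict-to-k} gives $(x,y)\notin sc(\mathbf{P}')$, which is exactly Positive Involvement in Defeat for $sc$.

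I do not expect a genuine obstacle; the only subtlety to flag is why the path cannot be destroyed — a single added voter perturbs each margin by only $1$, which is harmless for edges of weight $\ge m$ precisely because the relevant threshold also drops by exactly $1$ (the new voter ranks $y$ above $x$, so $\margin(x,y)$ falls by $1$). This exact compensation is what makes the threshold description of Split Cycle "monotone" in the way the axiom demands; one could equivalently run the argument through Lemma~\ref{EquivDef}(4) instead of Lemma~\ref{lem:restrict-to-k}, but the bookkeeping is identical.
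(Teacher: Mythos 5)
Your proof is correct and is essentially the paper's own argument: both split on whether the margin of $x$ over $y$ stays positive after the new ballot, and in the nontrivial case both observe that a witnessing majority path from $y$ to $x$ with edges of weight at least $\margin_{\mathbf{P}}(x,y)\ge 2$ survives the added ballot with strength dropping by at most $1$, exactly compensating the drop of $\margin(x,y)$ by $1$. The only cosmetic difference is that you phrase this via Lemma~\ref{lem:restrict-to-k} while the paper uses the path-strength formulation of Lemma~\ref{EquivDef}(4), which, as you note, is the same bookkeeping.
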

\begin{proof}
  Let $\mathbf{P}$ be a profile and $x, y \in X(\mathbf{P})$.  Suppose $y$ is not defeated by $x$. Then either (1) $\margin_{\mathbf{P}}(x, y) \le 1$ or (2) $\margin_{\mathbf{P}}(x, y) \ge 2$ and there is a majority path $\rho$ from $y$ to $x$ such that $\margin_{\mathbf{P}}(x, y) \le \strength_{\mathbf{P}}(\rho)$. 

  Now let $\pfp'$ be the result of adding to $\pfp$ one ballot where $y$ is above $x$. In case (1), $\margin_{\pfp'}(x, y) = \margin_{\pfp}(x, y) - 1 \le 0$, so $x$ still does not defeat $y$ in $\pfp'$. In case (2), note that $\strength_{\pfp}(\rho) \ge 2$, so every edge in $\rho$ has margin at least $2$, and  from $\pfp$ to $\pfp'$ only one ballot is added, so the weakest edges in $\rho$ have their margins decrease by at most $1$ from $\pfp$ to $\pfp'$. Thus, $\rho$ is still a majority path in $\pfp'$ from $y$ to $x$ and $\strength_{\pfp'}(\rho) \ge \strength_{\pfp}(\rho) - 1 \ge \margin_{\pfp}(x, y) - 1 = \margin_{\pfp'}(x, y)$, so $x$ does not defeat $y$ in $\pfp'$.
\end{proof}

As another example of working with {Positive Involvement in Defeat}, we mention  that Weighted Covering (\citealt{Dutta1999}, \citealt{Fernandez2018}) satisfies it. Weighted Covering will also be used in Section \ref{sec:necessity}. 
\begin{definition}\label{WCdef}
  The Weighted Covering VCCR, denoted $wc$, is defined as follows: for any
  profile $\mathbf{P}$ and $x, y \in X(\mathbf{P})$, $(x, y) \in wc(\mathbf{P})$
  iff $\margin_{\mathbf{P}}(x, y) > 0$ and for all $z \in X(\mathbf{P})$,
  $\margin_{\mathbf{P}}(x, z) \ge \margin_{\mathbf{P}}(y, z)$. 
\end{definition}

\begin{proposition}\label{WCPID}
  Weighted Covering satisfies Positive Involvement in Defeat.
\end{proposition}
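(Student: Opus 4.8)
The plan is to argue by contraposition: assume $(x,y)\notin wc(\mathbf{P})$ and show that $(x,y)\notin wc(\mathbf{P}')$, where $\mathbf{P}'$ is obtained from $\mathbf{P}$ by adding one new voter whose ballot ranks $y$ above $x$. Unpacking Definition~\ref{WCdef}, $(x,y)\notin wc(\mathbf{P})$ means that either (a) $\margin_{\mathbf{P}}(x,y)\le 0$, or (b) $\margin_{\mathbf{P}}(x,y)>0$ but there is a \emph{witness} $z\in X(\mathbf{P})$ with $\margin_{\mathbf{P}}(x,z)<\margin_{\mathbf{P}}(y,z)$; note that in case (b) such a $z$ is automatically distinct from both $x$ and $y$, since $\margin_{\mathbf{P}}(x,x)-\margin_{\mathbf{P}}(y,x)=\margin_{\mathbf{P}}(x,y)>0$ and $\margin_{\mathbf{P}}(x,y)-\margin_{\mathbf{P}}(y,y)=\margin_{\mathbf{P}}(x,y)>0$. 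I would treat the two cases separately.

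Case (a) is immediate: the new voter ranks $y$ above $x$, so $\margin_{\mathbf{P}'}(x,y)=\margin_{\mathbf{P}}(x,y)-1\le -1<0$, and hence $(x,y)\notin wc(\mathbf{P}')$ directly from the first conjunct of Definition~\ref{WCdef}.

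For case (b), the key observation, which I would isolate as a one-line sublemma, is that for every candidate $z\neq x,y$, adding a ballot in which $y$ is ranked above $x$ cannot increase the difference $\margin(x,z)-\margin(y,z)$. Indeed, the new ballot places $z$ in exactly one of three positions relative to the block $\{x,y\}$ (which it ranks as $y$ above $x$): above both, below both, or strictly between $y$ and $x$. In the first case $z$ is preferred to both $x$ and $y$, so $\margin(x,z)$ and $\margin(y,z)$ each decrease by $1$ and the difference is unchanged; in the second case $x$ and $y$ are each preferred to $z$, so both margins increase by $1$ and the difference is again unchanged; in the third case $z$ is preferred to $x$ while $y$ is preferred to $z$, so $\margin(x,z)$ decreases by $1$ and $\margin(y,z)$ increases by $1$, and the difference decreases by $2$. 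Applying this to the witness $z$ from case (b) yields $\margin_{\mathbf{P}'}(x,z)-\margin_{\mathbf{P}'}(y,z)\le \margin_{\mathbf{P}}(x,z)-\margin_{\mathbf{P}}(y,z)<0$, so $z$ is still a witness in $\mathbf{P}'$ and $(x,y)\notin wc(\mathbf{P}')$.

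I do not expect a real obstacle here; the only part that requires care is the bookkeeping in case (b)---tracking how the single added ballot moves each of the two margins $\margin(x,z)$ and $\margin(y,z)$ and checking that the three positions of $z$ are exhaustive and that in none of them does the difference go up. Everything else reduces to elementary margin arithmetic.
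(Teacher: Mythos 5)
Your proposal is correct and follows essentially the same route as the paper's proof: split into the case $\margin_{\mathbf{P}}(x,y)\le 0$ and the case of a witness $z$ with $\margin_{\mathbf{P}}(x,z)<\margin_{\mathbf{P}}(y,z)$, then track how the one added ballot shifts the two margins according to the three possible positions of $z$ relative to $y\succ x$, concluding the witness survives. Your only additions---making the cases mutually exclusive and explicitly checking that the witness is distinct from $x$ and $y$---are harmless refinements of the same argument.
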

\begin{proof}
  Let $\mathbf{P}$ be a profile and $x, y \in X(\mathbf{P})$.  Suppose $y$ is
  not defeated by $x$. Then either (1) $\margin_{\mathbf{P}}(x, y) \le 0$ or
  (2) there is $z \in X(\mathbf{P})$ such that $\margin_{\mathbf{P}}(x, z) <
  \margin_{\mathbf{P}}(y, z)$. 

  Now let $\mathbf{P}'$ be the result of adding to $\mathbf{P}$ one ballot where
  $y$ is above $x$. In case (1), $\margin_{\mathbf{P}'}(x, y) < 0$ since the new
  ballot decreases the margin from $x$ to $y$ by $1$, and thus $y$ is still
  undefeated by $x$. In case (2), since in the new ballot, $y \succ x$, there are
  only three possible places for $z$ relative to $y$ and $x$: $z \succ y \succ x$, $y \succ z
  \succ x$, and $y \succ x \succ z$. In the first and the third case, $\Delta_{x, z} =
  \margin_{\mathbf{P}'}(x, z) - \margin_{\mathbf{P}}(x, z)$ and $\Delta_{y, z} =
  \margin_{\mathbf{P}'}(y, z) - \margin_{\mathbf{P}}(y, z)$ are the same ($-1$
  and $1$ in the first and third cases, respectively), and in the second case, $\Delta_{x, z} = -1 <
  \Delta_{y, z} = 1$. Thus, in all three cases, since $\Delta_{x, z} \le
  \Delta_{y, z}$, we still have $\margin_{\mathbf{P}'}(x, z) <
  \margin_{\mathbf{P}'}(y, z)$, and $y$ is still not defeated by $x$ in
  $\mathbf{P}'$.
\end{proof}
On the other hand, the many non-weighted versions of Covering (see
\citealt{Gillies1959}, \citealt{Fishburn1977}, \citealt{Miller1980}, \citealt{Duggan2013}) do not
satisfy Positive Involvement in Defeat. For example, for any profile
$\mathbf{P}$ and $x, y \in X(\mathbf{P})$, we say that $x$ defeats $y$ according
to the Right Covering VCCR when $\margin_{\mathbf{P}}(x, y) > 0$ and for all $z
\in X(\mathbf{P})$, if $\margin_{\mathbf{P}}(y, z) > 0$ then
$\margin_{\mathbf{P}}(x, z) > 0$. To see that the Right Covering VCCR
fails Positive Involvement in Defeat, consider any profile $\mathbf{P}$ whose
margin graph is
\begin{center}
  \begin{tikzpicture}[baseline]
    \node[circle,draw,minimum width=0.25in] at (0,0) (x) {$x$}; 
    \node[circle,draw,minimum width=0.25in] at (3,0) (z) {$z$}; 
    \node[circle,draw,minimum width=0.25in] at (1.5,1.5) (y) {$y$}; 
    \path[->,draw,thick] (x) to node[fill=white] {$2$} (y);
    \path[->,draw,thick] (y) to node[fill=white] {$2$} (z);
  \end{tikzpicture}
\end{center}
By definition, $x$ does not defeat $y$ in $\mathbf{P}$ according to Right
Covering because of $z$. However, once we add a ballot $y \succ x \succ z$ to
$\mathbf{P}$, the margin graph becomes 
\begin{center}
  \begin{tikzpicture}[baseline]
    \node[circle,draw,minimum width=0.25in] at (0,0) (x) {$x$}; 
    \node[circle,draw,minimum width=0.25in] at (3,0) (z) {$z$}; 
    \node[circle,draw,minimum width=0.25in] at (1.5,1.5) (y) {$y$}; 
    \path[->,draw,thick] (x) to node[fill=white] {$1$} (y);
    \path[->,draw,thick] (x) to node[fill=white] {$1$} (z);
    \path[->,draw,thick] (y) to node[fill=white] {$3$} (z);
  \end{tikzpicture}
\end{center}
and now $x$ defeats $y$ according to Right Covering, since $x$ now also has a
positive margin over $z$.

For another example of a VCCR violating Positive Involvement in Defeat, consider the following VCCR that defeat-rationalizes the Minimax VSCC (\citealt{Simpson1969}, \citealt{Kramer1977}). 
  For any profile $\pfp$ and $x \in X(\pfp)$, define \[\weakness_\pfp(x) = \max\{\margin_\pfp(z, x) \mid z \in X(\pfp)\}.\] Then the VCCR $mm$ is defined by setting $(x, y) \in mm(\pfp)$ iff $\weakness_\pfp(x) < \weakness_\pfp(y)$. Now consider any profile $\pfp$ whose margin graph is shown on the left below, and suppose $\pfp'$ is obtained by adding to $\pfp$ a ballot $b \succ y \succ x \succ a$, resulting in the margin graph on the right:
\begin{center}
  \begin{tikzpicture}[baseline]
    \node[circle,draw,minimum width=0.25in] at (0,0) (x) {$x$}; 
    \node[circle,draw,minimum width=0.25in] at (3,0) (b) {$b$}; 
    \node[circle,draw,minimum width=0.25in] at (1.5,1.5) (a) {$a$}; 
    \node[circle,draw,minimum width=0.25in] at (1.5,-1.5) (y) {$y$}; 
    \path[->,draw,thick] (a) to node[fill=white] {$2$} (x);
    \path[->,draw,thick] (b) to node[fill=white] {$2$} (y);
  \end{tikzpicture}
  \hspace{4em}
  \begin{tikzpicture}[baseline]
    \node[circle,draw,minimum width=0.25in] at (0,0) (x) {$x$}; 
    \node[circle,draw,minimum width=0.25in] at (3,0) (b) {$b$}; 
    \node[circle,draw,minimum width=0.25in] at (1.5,1.5) (a) {$a$}; 
    \node[circle,draw,minimum width=0.25in] at (1.5,-1.5) (y) {$y$}; 
    \path[->,draw,thick] (a) to node[fill=white] {$1$} (x);
    \path[->,draw,thick] (y) to node[fill=white] {$1$} (x);
    \path[->,draw,thick] (b) to node[fill=white] {$1$} (a);
    \path[->,draw,thick] (b) to node[fill=white, near start] {$1$} (x);
    \path[->,draw,thick] (y) to node[fill=white, near start] {$1$} (a);
    \path[->,draw,thick] (b) to node[fill=white] {$3$} (y);
  \end{tikzpicture}
\end{center}
It is clear that $\weakness_{\pfp}(x) = 2 = \weakness_{\pfp}(y)$, but $\weakness_{\pfp'}(x) = 1 < 3 = \weakness_{\pfp'}(y)$. In other words, $(x, y) \not\in mm(\pfp)$, but $(x, y) \in mm(\pfp')$, while the only change from $\pfp$ to $\pfp'$ is that a new voter ranking $y$ above $x$ joined the election. What is notable about this failure of Positive Involvement in Defeat is that Minimax as a VSCC does satisfy  the axiom of Tolerant Positive Involvement that we will introduce in Section~\ref{Sec:PosInvVSCC} for VSCCs that corresponds to Positive Involvement in Defeat.

One may naturally wonder why we focus solely on Positive Involvement, given that
there is also the axiom of Negative Involvement in the literature (again see
\citealt{Perez2001}).
\begin{definition}
  A VSCC $F$ satisfies \textit{Negative Involvement} if for any profile
  $\mathbf{P}$ and $y\in X(\mathbf{P})$, if $y\not\in F(\mathbf{P})$ and $\mathbf{P}'$ is obtained by
  adding one new voter who ranks $y$ in last place, then $y \not\in F(\mathbf{P}')$.

  An acyclic VCCR $f$ satisfies \textit{Negative Involvement} if $\overline{f}$ satisfies
  Negative Involvement. We say that a VCCR $f$ satisfies \textit{Negative Involvement in Defeat} if
  for any profile $\mathbf{P}$ and $x,y\in X(\mathbf{P})$, if $(x, y) \in
  f(\mathbf{P})$, and $\mathbf{P}'$ is obtained from $\mathbf{P}$ by adding one
  new voter who ranks $x$ above $y$, then $(x, y) \in f(\mathbf{P}')$.
\end{definition}
The reason we can focus solely on Positive Involvement is given by the following proposition.
\begin{proposition}
  An acyclic VCCR (resp.~VCCR) $f$ satisfying Neutral Reversal satisfies Positive Involvement (resp.
  Positive Involvement in Defeat) iff it satisfies Negative Involvement (resp.
  Negative Involvement in Defeat).
\end{proposition}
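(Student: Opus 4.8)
The plan is to leverage Neutral Reversal to show that, up to the identity of the output, adding one voter with a given ballot is interchangeable with adding one voter whose ballot is the converse. Fix a profile $\mathbf{P}$, candidates $x,y \in X(\mathbf{P})$, and a ballot $b$. Pick a voter $w \notin V(\mathbf{P})$ with ballot $b$ (possible since $\mathcal{V}$ is infinite), set $\mathbf{Q} = \mathbf{P} + w$, then pick a further voter $w' \notin V(\mathbf{Q})$ whose ballot is the converse $b^{-1}$ of $b$, and set $\mathbf{R} = \mathbf{Q} + w'$. Since $\mathbf{R}$ is obtained from $\mathbf{P}$ by adding exactly two voters with converse ballots, Neutral Reversal gives $f(\mathbf{R}) = f(\mathbf{P})$, and hence $\overline{f}(\mathbf{R}) = \overline{f}(\mathbf{P})$ as well, recalling that $\overline{f}$ is determined by $f$. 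On the other hand $\mathbf{R} = \mathbf{Q} + w'$, so $\mathbf{R}$ is also reachable from $\mathbf{Q}$ by adding the single voter $w'$, and the key observation is that $b^{-1}$ ranks $x$ above $y$ iff $b$ ranks $y$ above $x$, and $b^{-1}$ ranks $y$ in first place iff $b$ ranks $y$ in last place.

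With this setup each of the four implications is essentially one line. For the VCCR claim, assume $f$ satisfies Neutral Reversal. Suppose $f$ satisfies Positive Involvement in Defeat; to prove Negative Involvement in Defeat, take $(x,y) \in f(\mathbf{P})$ and let $\mathbf{Q} = \mathbf{P} + w$ with $w$ ranking $x$ above $y$. Choosing $w'$ with the converse ballot (so $w'$ ranks $y$ above $x$) and forming $\mathbf{R}$ as above, we get $(x,y) \in f(\mathbf{R})$ from Neutral Reversal; since $\mathbf{R} = \mathbf{Q} + w'$ and $w'$ ranks $y$ above $x$, the contrapositive of Positive Involvement in Defeat, applied with base profile $\mathbf{Q}$, yields $(x,y) \in f(\mathbf{Q})$, as required. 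For the converse, assume $f$ satisfies Negative Involvement in Defeat, suppose $y$ is not defeated by $x$ in $\mathbf{P}$, let $\mathbf{Q} = \mathbf{P} + w$ with $w$ ranking $y$ above $x$, and suppose for contradiction that $(x,y) \in f(\mathbf{Q})$. Taking $w'$ with converse ballot (so $w'$ ranks $x$ above $y$) and forming $\mathbf{R}$, Negative Involvement in Defeat with base $\mathbf{Q}$ gives $(x,y) \in f(\mathbf{R})$, but $f(\mathbf{R}) = f(\mathbf{P})$ by Neutral Reversal, contradicting the hypothesis on $\mathbf{P}$. The acyclic-VCCR (hence VSCC) version is word-for-word the same with $f$ replaced by $\overline{f}$, ``$y$ defeated by $x$'' replaced by ``$y \notin \overline{f}(\cdot)$'', ``ranks $y$ above $x$'' replaced by ``ranks $y$ in first place'', and ``ranks $x$ above $y$'' replaced by ``ranks $y$ in last place'', using $\overline{f}(\mathbf{R}) = \overline{f}(\mathbf{P})$ and the contrapositive/direct form of Positive (resp.\ Negative) Involvement for $\overline{f}$.

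I do not expect a genuine obstacle: the argument is a bookkeeping exercise in pairing up added voters. The only points requiring a moment's care are (i) that the two added voters can always be chosen outside the current voter set, which holds because $\mathcal{V}$ is infinite; (ii) that Neutral Reversal, being a statement about $f$, automatically transfers to $\overline{f}$ in the acyclic case; and (iii) keeping straight which involvement axiom is applied with which base profile, so that the directions ``ranks $y$ above $x$'' versus ``ranks $x$ above $y$'' (resp.\ first versus last place) match that axiom's hypothesis --- but these line up precisely because passing to the converse ballot swaps exactly those conditions.
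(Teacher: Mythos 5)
Your proof is correct and follows essentially the same route as the paper: add a voter with the converse ballot so that the two added voters form a reversal pair, invoke Neutral Reversal to identify the resulting profile's output with that of the original profile, and then apply the relevant involvement axiom (in direct or contrapositive form) with the intermediate profile as base. The paper phrases this via a single contrapositive argument and notes the other cases are analogous, but the underlying idea is identical.
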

\begin{proof}
  Let $f$ be an acyclic VCCR $f$ satisfying Neutral Reversal. Suppose $f$ fails Positive
  Involvement. Then we have a profile $\mathbf{P}$, $x \in
  \overline{f}(\mathbf{P})$, and $\mathbf{P}'$ that adds to
  $\mathbf{P}$ a ballot $L$ ranking $x$ in first place and yet $x \not\in
  \overline{f}(\mathbf{P}')$. Let $\mathbf{P}''$ be any profile that 
  adds to $\mathbf{P}'$ the converse of the ballot $L$, which ranks $x$ in last
  place. Then $\mathbf{P}''$ is the result of adding to $\mathbf{P}$ a pair of
  ballots in full reversal. So by Neutral Reversal, $x \in
  \overline{f}(\mathbf{P}'')$. But then the pair $\mathbf{P}'$ and $\mathbf{P}''$
  witnesses the failure of Negative Involvement for $\overline{f}$ and also $f$.
  Clearly, the same strategy of adding the converse of $L$ works for the other
  direction and also works for showing that a VCCR $f$ satisfying Neutral Reversal satisfies Positive Involvement
  in Defeat iff it satisfies Negative Involvement in Defeat.
\end{proof}
Thus, as long as we are focusing on VCCRs satisfying Neutral Reversal, and in particular on margin-based VCCRs such as Split Cycle, there is no loss of generality in focusing just on the positive axioms.

\section{Characterization of the Split Cycle VCCR}\label{Sec:CharVCCR}

We are now ready to prove our first main result, an axiomatic characterization of the
Split Cycle VCCR.

\begin{theorem}\label{Thm:Main} The Split Cycle VCCR  is the unique VCCR
  satisfying Anonymity, Neutrality, Availability, (Upward) Homogeneity,
  Monotonicity (for two-candidate profiles), Neutral Reversal, Coherent IIA,
  Coherent Defeat, and Positive Involvement in Defeat.
\end{theorem}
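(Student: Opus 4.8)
The plan is to prove the two directions of "uniqueness up to equality" separately. First, one must verify that the Split Cycle VCCR $sc$ itself satisfies all nine axioms. Anonymity, Neutrality, Availability (from Lemma~\ref{VCCRtoVSCC} and the fact, cited from \citealt{HP2020}, that $sc$ is acyclic), Homogeneity, Monotonicity, and Neutral Reversal are all routine consequences of the fact that $sc$ depends only on the margin graph together with elementary bookkeeping about how margins transform; Coherent~IIA is due to \citealt{HP2021}; Coherent Defeat is immediate from Lemma~\ref{lem:restrict-to-k} (if there is no majority path from $y$ to $x$ at all, then a fortiori there is none in $\mathcal{M}(\mathbf{P})\!\!\upharpoonright\! \margin_{\mathbf{P}}(x,y)$); and Positive Involvement in Defeat is Proposition~\ref{SCPID}. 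So the real content is the converse: any VCCR $f$ satisfying the nine axioms \emph{equals} $sc$, which I would split into (a) $f$ refines $sc$ and (b) $sc$ refines $f$ (i.e.\ $f(\mathbf{P}) = sc(\mathbf{P})$ for every profile).

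For direction (a), $f$ refines $sc$: this is essentially the content of Holliday and Pacuit's theorem, since $f$ satisfies the five standard axioms plus Coherent~IIA, and their result says any such VCCR is refined by $sc$ — wait, that is the wrong direction. Their theorem says $sc$ is the \emph{most resolute} such rule, i.e.\ every such $f$ is refined by $sc$, meaning $f(\mathbf{P}) \subseteq sc(\mathbf{P})$. Good: so (a) is exactly Holliday--Pacuit, and does not even need the two new axioms. The new axioms are needed for direction (b): $sc(\mathbf{P}) \subseteq f(\mathbf{P})$ for all $\mathbf{P}$, i.e.\ whenever $(x,y) \in sc(\mathbf{P})$ we must show $(x,y) \in f(\mathbf{P})$. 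Fix such a profile $\mathbf{P}$ and a pair $(x,y) \in sc(\mathbf{P})$; let $m = \margin_{\mathbf{P}}(x,y) > 0$. By Lemma~\ref{lem:restrict-to-k}, there is no majority path from $y$ to $x$ in $\mathcal{M}(\mathbf{P})\!\!\upharpoonright\! m$. The strategy is to build, using the allowed axiomatic moves, a "witness" profile $\mathbf{Q}$ in which $x$ is majority preferred to $y$ with no majority cycle through $x,y$ at all — so that Coherent Defeat forces $(x,y)\in f(\mathbf{Q})$ — and then transfer this defeat back to $\mathbf{P}$ using Coherent~IIA (possibly after first using Positive Involvement in Defeat to "block" the reverse defeat $(y,x)$ in intermediate profiles). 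Concretely: the set of edges of $\mathcal{M}(\mathbf{P})$ with weight $\ge m$, restricted to the component structure around $x$ and $y$, has no $y$-to-$x$ path; take a minimal cut — a set $C$ of edges whose removal (or weight-reduction below $m$) disconnects all paths from $y$ to $x$ — and the heart of the argument is to realize this cut by adding voters (via Monotonicity / Neutral Reversal / Homogeneity, keeping the $x$ vs.\ $y$ submatrix fixed) so that the resulting margin graph, call it for $\mathbf{P}^\ast$, has $\mathbf{P}\rightsquigarrow_{x,y}\mathbf{P}^\ast$ in reverse... actually one wants $\mathbf{P}^\ast \rightsquigarrow_{x,y}\mathbf{P}$ or a chain connecting them, so that a defeat established in $\mathbf{P}^\ast$ (or its $\{x,y\}$-restriction, where Coherent Defeat applies trivially) can be pushed along. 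Because Coherent~IIA only lets us move \emph{from} a more-coherent profile \emph{to} a less-coherent one while preserving defeat, we need $\mathbf{P}^\ast$ to be the coherent end: I would aim to show $\mathbf{P}^\ast \rightsquigarrow_{x,y} \mathbf{P}$, where $\mathbf{P}^\ast$ has margin $m$ from $x$ to $y$ and \emph{no} majority path from $y$ to $x$, whence Coherent Defeat gives $(x,y)\in f(\mathbf{P}^\ast)$ and Coherent~IIA gives $(x,y)\in f(\mathbf{P})$.

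The main obstacle is the construction of $\mathbf{P}^\ast$: one cannot in general delete edges or raise weights freely while keeping the $x$--$y$ ranking of every voter fixed and staying on the domain of realizable (parity-constrained) margin graphs, and $\rightsquigarrow_{x,y}$ only permits \emph{deleting} candidates and \emph{reducing/deleting} edges not incident to the $\{x,y\}$ pair. So the real work is: starting from $\mathcal{M}(\mathbf{P})$, find a minimal-cut argument showing that we may reduce the weights of edges on all $y$-to-$x$ paths down to at most $m$ (indeed, reduce enough of them to kill every such path in $\mathcal{M}\!\!\upharpoonright\! m$, which is precisely the hypothesis $(x,y)\in sc(\mathbf{P})$ recast via the max-flow/min-cut correspondence between "strength of the strongest $y$-to-$x$ path" and "minimal bottleneck cut"), realize those reductions by adding suitable ballots (this is where Positive Involvement in Defeat earns its keep: adding a voter ranking $y$ over $x$ — or using Neutral Reversal to neutralize side effects — can lower a problematic margin without ever creating the defeat $(x,y)\to$ reversed, and without disturbing $\margin(x,y)$ if we are careful, or we accept a controlled decrease and compensate), and thereby exhibit a profile $\mathbf{P}^\ast$ with $\mathbf{P}^\ast\rightsquigarrow_{x,y}\mathbf{P}$ in which there is genuinely no majority path from $y$ to $x$. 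I expect the bookkeeping around parity (Lemma~\ref{ParityLem}) and around ensuring each weight-reduction is individually achievable by integer ballot additions to be the fiddly part, handled by passing to $2\mathbf{P}$ (Homogeneity) when parity gets in the way and by repeatedly invoking Neutral Reversal to adjust vote totals without changing the margin graph. Once $\mathbf{P}^\ast$ is in hand, Coherent Defeat and Coherent~IIA close the argument, giving $sc(\mathbf{P}) \subseteq f(\mathbf{P})$; combined with (a) this yields $f = sc$.
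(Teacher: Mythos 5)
Your overall decomposition is right, and direction (a) is indeed exactly Theorem \ref{Thm:HP} of Holliday and Pacuit, needing only the standard axioms plus Coherent IIA. You also correctly sense that minimal cuts and ballots ranking $y$ above $x$ are involved in direction (b). But the transfer mechanism you propose for direction (b) cannot work. You want a profile $\mathbf{P}^\ast$ with $\margin_{\mathbf{P}^\ast}(x,y)>0$ and \emph{no} majority path from $y$ to $x$, with $\mathbf{P}^\ast \rightsquigarrow_{x,y} \mathbf{P}$, so that Coherent Defeat gives $(x,y)\in f(\mathbf{P}^\ast)$ and Coherent IIA pushes it to $\mathbf{P}$. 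However, $\mathbf{P}^\ast \rightsquigarrow_{x,y} \mathbf{P}$ requires $V(\mathbf{P}^\ast)=V(\mathbf{P})$ (so your plan of adding voters already breaks it) and, more fundamentally, requires that $\mathcal{M}(\mathbf{P})$ be obtained from $\mathcal{M}(\mathbf{P}^\ast)$ only by deleting candidates and deleting or reducing margins on edges not connecting $x$ and $y$. Such operations can never \emph{create} a majority path, so every $y$-to-$x$ majority path in $\mathcal{M}(\mathbf{P})$ must already be present in $\mathcal{M}(\mathbf{P}^\ast)$. In the generic case where $(x,y)\in sc(\mathbf{P})$ but $\mathbf{P}$ does contain (weak) majority paths from $y$ to $x$, no such $\mathbf{P}^\ast$ exists; and in the case where $\mathbf{P}$ has no $y$-to-$x$ path, Coherent Defeat applies to $\mathbf{P}$ directly and the construction is unnecessary. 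So the profile your argument hinges on does not exist precisely when it is needed, and Coherent IIA cannot serve as the transfer device for this direction.

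The paper's proof goes the other way around and does not use Coherent IIA for direction (b) at all: just Coherent Defeat and Positive Involvement in Defeat suffice (Theorem \ref{thm:at-least-as-resolute-vccr}). Given $(x,y)\in sc(\mathbf{P})$ with $k=\margin_{\mathbf{P}}(x,y)>2$, Lemma \ref{lem:add-ballot} produces a single \emph{added} ballot $L$ ranking $y$ above $x$ (and above every $z$ with $\margin_{\mathbf{P}}(y,z)\le 0$), obtained by linearizing $C^{-1}\cup B$ for a minimal cut $C$ from $y$ to $x$ among edges of weight $<k$, such that $(x,y)\in sc(\mathbf{P}+L)$ still holds with the margin reduced to $k-1$. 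Iterating $k-2$ times drives the margin down to $2$, at which point the parity constraint (Lemma \ref{ParityLem}) guarantees there is no majority path from $y$ to $x$, so Coherent Defeat yields $(x,y)\in f(\mathbf{P}_{k-2})$; then the contrapositive of Positive Involvement in Defeat, applied once per added ballot (each ranks $y$ above $x$), pulls the defeat back step by step to $(x,y)\in f(\mathbf{P})$. In short: the defeat is established in a profile \emph{obtained by adding ballots} and transported backward by Positive Involvement in Defeat, not established in a ``more coherent'' profile and transported forward by Coherent IIA. Your parenthetical remark that adding a voter ranking $y$ over $x$ ``can lower a problematic margin'' is the right instinct, but in your write-up it plays only an auxiliary role, whereas it is in fact the entire engine of the argument; also, no appeal to Homogeneity or $2\mathbf{P}$ is needed in the linear-ballot case, since parity works in your favor rather than against you.
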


Holliday and Pacuit have already proved half of what we need.

\begin{theorem}[\citealt{HP2021}]\label{Thm:HP} If $f$ is a VCCR satisfying Anonymity, Neutrality, Availability, (Upward) Homogeneity, Monotonicity (for two-candidate profiles), Neutral Reversal, and Coherent IIA, then the Split Cycle VCCR refines $f$: for any profile $\mathbf{P}$, $sc(\mathbf{P}) \supseteq f(\mathbf{P})$.
\end{theorem}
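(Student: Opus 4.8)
The claim is that $sc$ refines $f$, i.e.\ $(x,y)\in f(\mathbf{P})$ implies $(x,y)\in sc(\mathbf{P})$ for every profile $\mathbf{P}$. The plan is first to derive Majority Defeat for $f$. Given $(x,y)\in f(\mathbf{P})$, note that $\mathbf{P}\rightsquigarrow_{x,y}\mathbf{P}_{|\{x,y\}}$ holds trivially, so Coherent IIA yields $(x,y)\in f(\mathbf{P}_{|\{x,y\}})$, and we may work in the two-candidate profile $\mathbf{P}_{|\{x,y\}}$. If $\margin_{\mathbf{P}}(x,y)\le 0$, apply Homogeneity once if that margin is odd (making it even and still $\le 0$), then repeatedly apply Monotonicity for two-candidate profiles, each time moving $x$ up past $y$ in one ballot with $y\succ x$ — an operation that raises the margin by $2$ and, by Monotonicity, preserves the defeat of $y$ by $x$ — until the margin is exactly $0$. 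Finally Anonymity lets us take the resulting profile to consist of equally many $x\succ y$ and $y\succ x$ ballots, which is invariant under swapping $x$ and $y$; Neutrality then forces $(x,y)\in f$ iff $(y,x)\in f$, contradicting asymmetry of $f(\cdot)$. Hence $m:=\margin_{\mathbf{P}}(x,y)>0$.

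By Lemma \ref{lem:restrict-to-k} it now suffices to show there is no simple majority path from $y$ to $x$ in $\mathcal{M}(\mathbf{P})\uph m$. Suppose for contradiction $\rho=(y,w_1,\dots,w_{\ell-1},x)$ is such a path. Since $\margin_{\mathbf{P}}(y,x)=-m<0$, the path has length $\ell\ge 2$, so $\rho$ together with the edge $x\to y$ is a simple majority cycle on the vertex set $C^\ast=\{x,y,w_1,\dots,w_{\ell-1}\}$ of size $n:=\ell+1\ge 3$, in which every edge has weight $\ge m$ while the edge $x\to y$ has weight exactly $m$. Let $\pi$ be the cyclic permutation of $C^\ast$ sending $x\mapsto y\mapsto w_1\mapsto\dots\mapsto w_{\ell-1}\mapsto x$, so that $\pi$ maps this cycle onto itself.

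Next I would reduce, via the remaining axioms, to a highly symmetric profile. By (Upward) Homogeneity, replace $\mathbf{P}$ by $2^k\mathbf{P}$ for $k$ large: the defeat $(x,y)$ is preserved, all margins become even, $\margin_{2^k\mathbf{P}}(x,y)=2^k m$, and $\rho$ still consists of edges of weight $\ge 2^k m$. By Neutral Reversal, add suitably many reversed-ballot pairs to obtain a profile $\mathbf{P}^+$ with $(x,y)\in f(\mathbf{P}^+)$. The crucial construction is a profile $\mathbf{Q}$ with $V(\mathbf{Q})=V(\mathbf{P}^+)$, with $\mathbf{Q}_{|\{x,y\}}=\mathbf{P}^+_{|\{x,y\}}$, with candidate set exactly $C^\ast$, whose margin graph is the pure directed cycle $x\to y\to w_1\to\dots\to w_{\ell-1}\to x$ with every edge of weight $2^k m$ and no chords, and which is moreover invariant, as a multiset of ballots, under $\pi$. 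Such a $\mathbf{Q}$ exists by Debord's theorem together with the freedom afforded by Homogeneity (to make the required voter count a multiple of $n$ and large enough relative to $2^k m$) and Neutral Reversal (to match $V(\mathbf{P}^+)$ exactly). Since the passage from $\mathcal{M}(\mathbf{P}^+)$ to $\mathcal{M}(\mathbf{Q})$ only deletes candidates outside $C^\ast$, lowers the weights of the edges $y\to w_1,\dots,w_{\ell-1}\to x$ to $2^k m$, and deletes chords inside $C^\ast$ — never touching the edge $x\to y$ — we have $\mathbf{P}^+\rightsquigarrow_{x,y}\mathbf{Q}$, so Coherent IIA gives $(x,y)\in f(\mathbf{Q})$.

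Finally, since $\mathbf{Q}$ has a $\pi$-invariant ballot multiset, Anonymity gives $f(\pi\mathbf{Q})=f(\mathbf{Q})$ for the $\pi$-relabelling $\pi\mathbf{Q}$ of $\mathbf{Q}$; hence Neutrality turns $(x,y)\in f(\mathbf{Q})$ into $(\pi x,\pi y)=(y,w_1)\in f(\mathbf{Q})$, and iterating around the cycle shows that every one of the $n$ edges $x\to y,\ y\to w_1,\ \dots,\ w_{\ell-1}\to x$ belongs to $f(\mathbf{Q})$. But then every candidate in $\mathbf{Q}$ is defeated, so $\overline{f}(\mathbf{Q})=\varnothing$, contradicting Availability. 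Therefore no such path $\rho$ exists and $(x,y)\in sc(\mathbf{P})$. I expect the main obstacle to be the construction of $\mathbf{Q}$: one must simultaneously realize the pure equal-weight cycle as a genuine linear profile (respecting the parity constraint of Lemma \ref{ParityLem}), make it invariant under $\pi$, and match the exact voter set and $\{x,y\}$-restriction demanded by $\rightsquigarrow_{x,y}$, which calls for a careful interleaving of Debord realizability, repeated Homogeneity, and Neutral-Reversal padding. The remaining ingredients — Majority Defeat, and the cyclic-symmetry-plus-Availability contradiction — are comparatively routine.
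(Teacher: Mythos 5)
Your proposal is correct and follows essentially the same route as the paper's argument (the theorem itself is imported from Holliday--Pacuit 2021, but the paper reproduces the method in its proof of the VSCC analogue, Theorem \ref{thm:no-more-resolute-vscc}): reduce to Majority Defeat via Coherent IIA, Anonymity, Neutrality, and two-candidate Monotonicity; extract a majority cycle in which $(x,y)$ is a weakest edge; pad with Homogeneity and Neutral Reversal; build a rotation-invariant McGarvey-style pure-cycle profile reachable under $\rightsquigarrow_{x,y}$; and derive a contradiction with Availability from the cyclic symmetry. The one bookkeeping point you flag as delicate---making the full-voter-set profile simultaneously $\pi$-invariant and matching $\mathbf{P}^+_{|\{x,y\}}$---is resolved in the paper by applying Coherent IIA to the padded profile $\mathbf{Q}'$ and then stripping the reversal pairs via Neutral Reversal before running the symmetry argument on the $\pi$-invariant core $\mathbf{Q}$, which is exactly the fix available with your tools.
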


It remains to prove that the axioms in Theorem \ref{Thm:Main} force $f$ to refine Split Cycle, in which case $f = sc$. In fact, we will prove
that just Coherent Defeat and Positive Involvement in Defeat together do so.
First we recall the following well-known extension lemma.

\begin{lemma}\label{lemma:extension}
  Any acyclic binary relation can be extended to a strict linear order.
\end{lemma}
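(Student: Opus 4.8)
The plan is to prove the lemma by induction on the size of the (finite) field of the relation, building the linear order from the top down. Write $R$ for the given acyclic relation and $X$ for its field (the set of elements occurring in some pair of $R$); since all relations of interest here live on finite candidate sets, $X$ is finite. If $|X|\le 1$ the claim is trivial, so suppose $|X| = n+1$ and assume the lemma for acyclic relations whose field has size at most $n$.

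\textbf{Key step.} The crucial observation is that a nonempty finite acyclic relation has a \emph{source}: an element $m\in X$ with $(y,m)\notin R$ for all $y$. Indeed, following $R$-edges backwards from any element must terminate after finitely many steps, since otherwise — by finiteness of $X$ — some element would repeat, producing an $R$-cycle; the last element reached is a source. Fix such an $m$. The restriction $R' = R\cap (X\setminus\{m\})^2$ is acyclic with field contained in $X\setminus\{m\}$, so by the induction hypothesis it extends to a strict linear order $L'$ on $X\setminus\{m\}$. Define
\[
L = L'\cup\{(m,x)\mid x\in X\setminus\{m\}\},
\]
placing $m$ strictly above everything else; a routine check shows $L$ is a strict linear order on $X$. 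To see $R\subseteq L$, take $(a,b)\in R$. Since $m$ is a source, $b\ne m$. If $a = m$, then $(a,b)=(m,b)\in L$ by construction; if $a\ne m$, then $a,b\in X\setminus\{m\}$, so $(a,b)\in R'\subseteq L'\subseteq L$. This completes the induction. (In the general, possibly infinite, case one instead passes to the transitive closure $R^+$, which acyclicity makes a strict partial order, and invokes Szpilrajn's extension theorem; the finite construction above suffices for every use in this paper.)

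\textbf{Expected obstacle.} There is no real obstacle here: the only points needing a moment of care are that the linear order is built on the field of $R$ (extending trivially to any superset), and that finiteness — available throughout, since profiles have finite candidate sets — is exactly what guarantees a source to peel off at each inductive step.
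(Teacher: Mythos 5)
Your proof is correct, but it takes a genuinely different route from the paper's. The paper proves the lemma in one step: since $R$ is acyclic, its reflexive-transitive closure $R^*$ is a partial order, and the Szpilrajn extension theorem applied to $R^*$ yields a total order whose strict part extends $R$ --- an argument that works for arbitrary (possibly infinite) relations, which is how the lemma is literally stated. You instead give an elementary induction on the size of the (finite) field: peel off a source $m$ (an element with no incoming $R$-edge, which exists by the backward-walk argument), linearly order the rest by the induction hypothesis, and put $m$ on top --- in effect a topological sort. This is self-contained, avoids invoking Szpilrajn, and indeed suffices for every application in the paper (the lemma is only used in Lemma \ref{lem:add-ballot} to linearize $C^{-1}\cup B$ on the finite candidate set $X(\mathbf{P})$); your care about extending from the field of $R$ to the full finite set, and your observation that finiteness is what guarantees a source, are exactly the right points to flag. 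Note only that, as stated, the lemma quantifies over all acyclic binary relations, so your parenthetical appeal to the transitive closure plus Szpilrajn for the infinite case is not a mere aside but is what covers the statement in full generality --- and that parenthetical is precisely the paper's proof.
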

\begin{proof}
  If $R$ is acyclic, then the reflexive and transitive closure $R^*$ of $R$ is a partial order. Applying the Szpilrajn extension theorem (\citealt{Szpilrajn1930}) to $R^*$, we obtain a total order $S$ extending $R^*$ and hence $R$. The asymmetric or equivalently the irreflexive part of $S$ is a strict linear order extending $R$. (Note that being acyclic, $R$ is automatically asymmetric and irreflexive.)
\end{proof}

Next comes the key lemma. The formulation is slightly cumbersome, and it actually
proves more than what we need for Theorem \ref{Thm:Main}. However, it is
precisely what we need for characterizing Split Cycle as a VSCC in Section \ref{Sec:CharVSCC}.

\begin{lemma}
  \label{lem:add-ballot}
  For any profile $\mathbf{P}$ and $(x, y) \in
  sc(\mathbf{P})$, if $\margin_{\mathbf{P}}(x, y) > 2$, then there is a
  ballot $L \in \mathcal{L}(X(\mathbf{P}))$ such that
  \begin{itemize}
  \item for any $z \in X(\mathbf{P}) \setminus \{y\}$ with $\margin_{\mathbf{P}}(y, z) \le
    0$, we have $(y, z) \in L$ (in particular, $(y, x) \in L$), and
  \item $(x, y) \in sc(\mathbf{P} + L)$.
  \end{itemize}
\end{lemma}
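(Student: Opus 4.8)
The goal is to find a single linear ballot $L$ on $X(\mathbf{P})$ that simultaneously ranks $y$ above every $z$ with $\margin_{\mathbf{P}}(y,z) \le 0$ (so in particular above $x$), while preserving the defeat $(x,y) \in sc(\mathbf{P})$ after appending $L$. By Lemma~\ref{lem:restrict-to-k}, $(x,y) \in sc(\mathbf{P})$ means $\margin_{\mathbf{P}}(x,y) > 0$ and there is no majority path from $y$ to $x$ in $\mathcal{M}(\mathbf{P})\uph m$ where $m = \margin_{\mathbf{P}}(x,y)$. Adding one ballot $L$ changes each margin by at most $1$: it goes up by $1$ on pairs ordered the ``right'' way by $L$, down by $1$ on pairs ordered the ``wrong'' way, and we must control $\margin_{\mathbf{P}+L}(x,y) = m-1 > 1$ (using $m > 2$). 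So I want $L$ to avoid creating a majority path from $y$ to $x$ in $\mathcal{M}(\mathbf{P}+L)\uph (m-1)$. The danger is that $L$ ranking $y$ high up will \emph{raise} margins out of $y$ to many candidates, potentially completing a path back to $x$ at level $\ge m-1$.

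**Key steps.** First I would isolate the set $S = \{z \in X(\mathbf{P})\setminus\{y\} : \margin_{\mathbf{P}}(y,z) \le 0\}$, which includes $x$, and note that for $z \in S$, raising $\margin_{\mathbf{P}}(y,z)$ by $1$ still leaves it $\le 1 < m-1$ (since $m > 2$), so no such newly-strengthened or newly-created edge out of $y$ lands in the graph $\mathcal{M}(\mathbf{P}+L)\uph (m-1)$. The edges out of $y$ that \emph{could} sit at level $\ge m-1$ in $\mathcal{M}(\mathbf{P}+L)$ are exactly those to $z \notin S$ with $\margin_{\mathbf{P}}(y,z) \ge m-1$ (possibly using a $+1$ boost from $L$, but we can just as well not boost them). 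So the real constraint is on edges \emph{not incident to $y$}: I need to choose the rest of the order of $L$ (on $X(\mathbf{P})\setminus\{y\}$, plus where exactly $y$ sits relative to $X(\mathbf{P})\setminus S$) so as not to push any such edge up to level $m-1$ in a way that closes a path from $y$ to $x$. The natural device: take a strict linear order $T$ on $X(\mathbf{P})$ extending the acyclic relation $sc(\mathbf{P})\uph$ ... — more precisely, I would use Lemma~\ref{lemma:extension} to linearly order $X(\mathbf{P})$ so that it is ``downward closed'' under the relevant defeat/path structure, and then define $L$ by placing $y$ appropriately within that order. Concretely: let $T$ be a linear extension of the relation ``$a$ precedes $b$ if there is a majority path from $a$ to $b$ in $\mathcal{M}(\mathbf{P})\uph (m-1)$'' restricted so it remains acyclic — but I must be careful, as that relation may contain cycles; instead I would use the strict partial order induced by strongly-connected-component collapse of $\mathcal{M}(\mathbf{P})\uph(m-1)$, or better, directly reason about which new edges $L$ introduces.

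**Cleaner route.** Rather than extend a path relation, I would build $L$ greedily: let $L$ rank the candidates of $X(\mathbf{P})\setminus S \setminus\{y\}$ first (in any order among themselves that does not create a path to $x$ at the critical level — here one uses that those are candidates $z$ with $\margin_{\mathbf{P}}(y,z) > 0$ already, so $L$ need not boost $y$ over them), then place $y$, then place all of $S$ (including $x$) below $y$, again in an order internal to $S$ chosen to avoid trouble. The key observations to verify: (i) every edge that $L$ strictly increases goes from an earlier-in-$L$ candidate to a later one; (ii) an edge that ends up at level $\ge m-1$ in $\mathcal{M}(\mathbf{P}+L)$ is either already at level $\ge m-1$ in $\mathcal{M}(\mathbf{P})$, or at level exactly $m-2$ in $\mathcal{M}(\mathbf{P})$ and boosted by $L$; (iii) by choosing the internal order of $S$ so that $x$ is ranked last (or at least below every other element of $S$), no $L$-boosted edge \emph{into} $x$ from within $S$ exists, and boosted edges within $S$ stay below level $m-1$ anyway; (iv) combining, any majority path from $y$ to $x$ in $\mathcal{M}(\mathbf{P}+L)\uph(m-1)$ would have to use only original edges of weight $\ge m-1$ plus possibly one $S$-internal boosted edge of weight $< m-1$ (excluded), hence reduces to a path in $\mathcal{M}(\mathbf{P})\uph(m-1)$; but the only candidates reachable from $y$ at that level going \emph{through} the first block are $\notin S$, and to reach $x \in S$ one must eventually take an edge into $S$, which... — this is where I would need the hypothesis $(x,y)\in sc(\mathbf{P})$, i.e., no path from $y$ to $x$ in $\mathcal{M}(\mathbf{P})\uph m$, upgraded to $\uph(m-1)$ by a careful argument that the extra edges of weight exactly $m-1$ cannot help because of how $L$ and $S$ were arranged.

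**Main obstacle.** The crux — and the step I expect to absorb most of the work — is controlling edges of weight exactly $m-1$ in $\mathcal{M}(\mathbf{P})$ together with edges of weight $m-2$ that $L$ boosts to $m-1$: I must argue that the chosen order of $L$ (first block $X\setminus S\setminus\{y\}$, then $y$, then $S$ with $x$ last) makes it impossible to assemble these into a fresh $y$-to-$x$ path. The delicate point is that $(x,y)\in sc(\mathbf{P})$ only gives us no path at level $m$, not level $m-1$, so we genuinely rely on the structure of $L$ and the placement of $x$ within $S$ rather than on a direct appeal to the hypothesis at level $m-1$; I would likely prove a small auxiliary claim that any path from $y$ to $x$ in $\mathcal{M}(\mathbf{P}+L)\uph(m-1)$ can be rerouted into $\mathcal{M}(\mathbf{P})\uph m$, contradicting $(x,y)\in sc(\mathbf{P})$, by replacing boosted edges and edges of weight exactly $m-1$ with the corresponding original structure and using acyclicity/ordering to close the gap.
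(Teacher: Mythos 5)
Your proposal stalls exactly at the step you yourself flag as the crux, and the one concrete construction you commit to (first block $X(\mathbf{P})\setminus S\setminus\{y\}$, then $y$, then $S$ with $x$ last) is not just under-specified but wrong. Placing $x$ at the bottom of $L$ \emph{boosts} every edge into $x$---note your claim (iii) contradicts your own observation (i)---and the dangerous boosted edges are precisely those from the first block into $S$, in particular into $x$. Concretely, take a linear profile $\mathbf{P}$ with $X(\mathbf{P})=\{x,y,a\}$, $\margin_{\mathbf{P}}(x,y)=4$, $\margin_{\mathbf{P}}(y,a)=10$, $\margin_{\mathbf{P}}(a,x)=2$ (realizable by Debord's theorem). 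Then $(x,y)\in sc(\mathbf{P})$, $S=\{x\}$, and your $L$ is $a\succ y\succ x$, which does satisfy the first bullet of the lemma; but in $\mathbf{P}+L$ the margins become $\margin_{\mathbf{P}+L}(x,y)=3$, $\margin_{\mathbf{P}+L}(y,a)=9$, $\margin_{\mathbf{P}+L}(a,x)=3$, so the path $(y,a,x)$ has strength $3\ge\margin_{\mathbf{P}+L}(x,y)$ and $(x,y)\notin sc(\mathbf{P}+L)$. So the block order pushes in the wrong direction: to protect the defeat you must rank $x$ \emph{above} the sources of the vulnerable edges into $x$, not below everything.

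The missing idea is that $L$ must simultaneously \emph{invert} (and thereby weaken) an entire cut from $y$ to $x$ consisting of edges of weight at most $m-1$; such a cut exists because, by Lemma \ref{lem:restrict-to-k}, no path from $y$ to $x$ survives in $\mathcal{M}(\mathbf{P})\uph m$. The real obstacle is whether all these inversions, together with the constraints $B=\{(y,z)\mid\margin_{\mathbf{P}}(y,z)\le 0\}$, fit into a single linear ballot, and the paper resolves it by taking a \emph{minimal} such cut $C$: minimality forces $C$ to contain no two consecutive edges (no $(u,v),(v,w)\in C$), which is exactly what makes $C^{-1}\cup B$ acyclic, so Lemma \ref{lemma:extension} extends it to $L$. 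Then every cut edge drops to weight $\le m-2$, every edge newly created by $L$ has weight $1\le m-2$ (this is where $m>2$ enters), hence every $y$-to-$x$ path in $\mathcal{M}(\mathbf{P}+L)$ meets an edge of weight $<m-1=\margin_{\mathbf{P}+L}(x,y)$, giving $(x,y)\in sc(\mathbf{P}+L)$. Your hoped-for auxiliary claim that any new path at level $m-1$ can be ``rerouted'' into $\mathcal{M}(\mathbf{P})\uph m$ has no evident proof without something playing the role of this minimal cut, so as it stands the argument has a genuine gap at its central step.
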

  \noindent In other words, if $x$ defeats $y$ according to Split Cycle by a sufficient margin
  (more than 2), then $x$ can still defeat $y$ after the addition of a
  specifically chosen ballot in which $y$ is ranked very high in the sense that
  $y$ is ranked above all candidates that it does not beat head-to-head,
  including $x$ since $y$ is defeated by $x$ according to Split Cycle.

For the proof of Lemma \ref{lem:add-ballot}, recall that in a graph containing vertices $y$ and $x$, a \textit{cut from $y$ to $x$} is a set of edges such
    that every path from $y$ to $x$ contains an edge from the set. Equivalently,
    a cut from $y$ to $x$ is a set of edges whose removal results in the loss of
    reachability of $x$ from $y$. The main idea of the proof is illustrated in Figure \ref{fig:all-ballot-example}.

\begin{figure}[h]
  \centering
  \begin{tikzpicture}[baseline]
    \node[circle,draw,very thick,minimum width=0.25in,blue] at (0,0) (x) {$x$}; 
    \node[circle,draw,minimum width=0.25in] at (3,0) (y) {$y$}; 
    \node[circle,draw,very thick,minimum width=0.25in,blue] at (1.5,1.5) (a) {$a$}; 
    \node[circle,draw,minimum width=0.25in] at (1.5,-1.5) (b) {$b$}; 
    \node[circle,draw,minimum width=0.25in] at (1.5,-3) (c) {$c$}; 
    \path[->,draw,thick] (x) to node[fill=white] {$4$} (y);
    \path[->,draw,very thick,medgreen] (y) to node[fill=white] {$2$} (b);
    \path[->,draw,thick] (b) to node[fill=white] {$2$} (x);
    \path[->,draw,thick] (a) to node[fill=white] {$4$} (x);
    \path[->,draw,thick] (y) to node[fill=white] {$4$} (c);
    \path[->,draw,very thick,medgreen] (c) to node[fill=white] {$2$} (x);
  \end{tikzpicture}
  $\xrightarrow[B:\, (y, v) \text{ with } \margin(y, v) \le 0]{C^{-1}: \text{ an
    inverted minimal cut}}$
  \begin{tikzpicture}[baseline]
    \node[circle,draw, minimum width=0.25in] at (0,0) (x) {$x$}; 
    \node[circle,draw,minimum width=0.25in] at (3,0) (y) {$y$}; 
    \node[circle,draw,minimum width=0.25in] at (1.5,1.5) (a) {$a$}; 
    \node[circle,draw,minimum width=0.25in] at (1.5,-1.5) (b) {$b$}; 
    \node[circle,draw,minimum width=0.25in] at (1.5,-3) (c) {$c$}; 
    \path[->,draw,thick] (y) to node[fill=white] {$B$} (a);
    \path[->,draw,thick] (y) to node[fill=white] {$B$} (x);
    \path[->,draw,thick] (b) to node[fill=white] {$C^{-1}$} (y);
    \path[->,draw,thick] (x) to node[near start,fill=white] {$C^{-1}$} (c);
  \end{tikzpicture}
    $\xrightarrow{\text{linearize}}$
  \begin{tikzpicture}[baseline]
    \node[circle,draw,minimum width=0.25in] at (0,1.5) (b) {$b$}; 
    \node[circle,draw,minimum width=0.25in] at (0,0.375) (y) {$y$}; 
    \node[circle,draw,minimum width=0.25in] at (0,-0.75) (a) {$a$}; 
    \node[circle,draw,minimum width=0.25in] at (0,-1.875) (x) {$x$}; 
    \node[circle,draw,minimum width=0.25in] at (0,-3) (c) {$c$}; 
    \path[->,draw,thick] (b) to  (y);
    \path[->,draw,thick] (y) to  (a);
    \path[->,draw,thick] (a) to  (x);
    \path[->,draw,thick] (x) to  (c);
  \end{tikzpicture}
  \caption{An example of generating a linear ballot $L$ as in Lemma
    \ref{lem:add-ballot}. Given $x, y$, we first identify (a) a minimal cut in
    the margin graph using only edges with margin smaller than $\margin(x, y)$
    and (b) the nodes that are not reachable from $y$ in one step (all
    highlighted on the left). Then we form the graph with the cut inverted and
    with arrows from $y$ to those highlighted nodes. This graph must be acyclic.
    Then we linearize this acyclic graph to obtain $L$.}
  \label{fig:all-ballot-example}
\end{figure}

\begin{proof}
  Let $\mathcal{M}$ be the margin graph of $\mathbf{P}$ and $k =
  \margin_{\mathbf{P}}(x, y)$. Since $(x, y)
  \in sc(\mathbf{P})$, by Lemma \ref{lem:restrict-to-k} we know that $x$ is not reachable from $y$ in $\M
  \uph k$. This means that the set of edges in $\M$ of weight at most $k-1$ is
  a cut from $y$ to $x$. Thus, there is a minimal cut $C$ (minimal in
  the subset ordering) from $y$ to $x$ consisting only of edges of weight at
  most $k-1$, since the graph is finite. By the minimality of $C$, in the graph
  $\M \setminus C$ resulting from removing the edges in $C$ from $\M$, adding
  back any edge of $C$ reestablishes the reachability from $y$ to $x$. In other
  words,
  \begin{quote}
    for any edge $(u, v) \in C$, there is a path from $y$ to $u$ disjoint
    from $C$, and there is a path from $v$ to $x$ disjoint from $C$.
  \end{quote}
  From this, it follows that there are no connecting pairs of edges from $C$:
  there are no $u, v, w$ such that $(u, v)$ and $(v, w)$ are both in $C$, since
  otherwise there is a path from $v$ to $x$ and a path from $y$ to $v$, both
  disjoint from $C$, forming a path from $y$ to $x$ disjoint from $C$,
  contradicting that $C$ is a cut from $y$ to $x$.

  Now let $B = \{(y, z) \in \{y\} \times (X \setminus \{y\}) \mid
  \margin_{\mathbf{P}}(y, z) \le 0\}$ (for the particular $y$ given above). Note
  that $B$ is also the set $\{(y, z) \in \{y\} \times (X \setminus \{y\}) \mid
  (y, z) \not\in \mathcal{M}(\mathbf{P})\}$. Now we show that $C^{-1} \cup B$ is
  acyclic. Clearly there is no reflexive loop, and also there cannot be any
  cycle completely inside $C^{-1}$ since we have shown that $C$ and hence also
  $C^{-1}$ do not have connecting pairs of edges. Thus, if there is a cycle
  $\rho$, it must contain an edge $(y, z)$ from $B$. Let the next edge in $\rho$
  be $(z, u)$. Since $z$ is not $y$, this $(z, u)$ is not in $B$ and must be in
  $C^{-1}$. Now there are two cases. First, if $u = y$, then we have $(z, y) \in
  C^{-1}$ and thus $(y, z) \in C$. But $(y, z)$ is also in $B$, and by
  definition, $(y, z) \not \in \mathcal{M}(\mathbf{P})$. All edges in $C$ are in
  $\mathcal{M}(\mathbf{P})$, however, so we have a contradiction in this case.
  Second, if $u \not= y$, then consider the next edge $(u, v)$ in $\rho$. Since
  $u \not= y$, $(u, v) \not\in B$, so $(u, v) \in C^{-1}$. But then we have both
  $(z, u)$ and $(u, v)$ in $C^{-1}$, which is impossible since $C$ does not have
  connecting pairs of edges. Thus, $C^{-1} \cup B$ is acyclic. 
  
  Let $L$ be any strict linear order in $\mathcal{L}(X(\mathbf{P}))$ extending $C^{-1}
  \cup B$ by Lemma \ref{lemma:extension}. This is the ballot required by the
  lemma. Since $L$ extends $B$, it satisfies the first requirement, and in
  particular, $(y, x) \in L$. Let $\mathbf{P}' = \mathbf{P} + L$ and $\M' =
  \M(\mathbf{P}')$. Now we only need to show that $(x, y) \in sc(\mathbf{P}')$.
  Since $(y, x) \in L$, $\margin_{\mathbf{P}'}(x, y) = k-1$. Since for each $(u,
  v) \in C$, $\margin_{\mathbf{P}}(u, v) \le k-1$ and $(v, u) \in L$,
  $\margin_{\mathbf{P}'}(u, v) \le k-2 $. Let $E_{\le k-2}$ be the set of edges
  in $\M'$ of weight at most $k-2$. Then $C \subseteq E_{\le k-2}$. Note also
  that there may be edges in $\M'$ but not $\M$. However, if $(u, v)$ is in
  $\M'$ but not in $\M$, then since from $\mathbf{P}$ to $\mathbf{P}'$ only one
  ballot is added, $\margin_{\mathbf{P}'}(u, v) = 1$. Hence every edge in $\M'$
  but not in $\M$ is also in $E_{\le k-2}$ since $k > 2$ and hence $k - 2 \ge 1$. Now
  it is easy to see that $E_{\le k-2}$ is a cut from $y$ to $x$ in $\M'$: for
  any path $\rho$ from $y$ to $x$ in $\M'$, if it uses only edges in $\M$, then
  $\rho$ intersects $C$ and hence $E_{\le k-2}$; if $\rho$ uses a new edge in
  $\M'$ but not $\M$, then since the new edges are all in $E_{\le k-2}$, $\rho$
  intersects $E_{\le k-2}$. Thus, there is no path from $y$ to $x$ in $\M' \uph
  k-1$, and so $(x, y) \in sc(\mathbf{P}')$.
\end{proof}

Using Lemma \ref{lem:add-ballot}, we can now prove that just two of our axioms
force $f$ to refine Split Cycle. The idea is to use Lemma
\ref{lem:add-ballot} repeatedly so that defeat can be decided merely by Coherent
Defeat. An example is shown in Figure \ref{fig:reduction-example}.

\begin{theorem}\label{thm:at-least-as-resolute-vccr}
  If $f$ is a VCCR satisfying Coherent Defeat and Positive Involvement in
  Defeat, then $f$ refines the Split Cycle VCCR: for any
  profile $\mathbf{P}$, $f(\mathbf{P}) \supseteq sc(\mathbf{P})$.
\end{theorem}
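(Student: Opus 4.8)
The plan is to prove the statement in the equivalent form: for every profile $\mathbf{P}$ and every $(x,y)\in sc(\mathbf{P})$, we have $(x,y)\in f(\mathbf{P})$. The idea is to adjoin ballots and then remove them. First, starting from $\mathbf{P}$, I would adjoin a carefully chosen finite sequence of ballots, each ranking $y$ \emph{above} $x$, reaching a profile $\mathbf{P}^\ast$ in which $x$ is still majority preferred to $y$ but there is no majority path from $y$ to $x$ whatsoever; Coherent Defeat then yields $(x,y)\in f(\mathbf{P}^\ast)$. Then I would delete those ballots one at a time, each time invoking the contrapositive of Positive Involvement in Defeat---if $x$ defeats $y$ after adding a ballot that ranks $y$ above $x$, then $x$ already defeated $y$ beforehand---to carry $(x,y)\in f(\cdot)$ back down to $\mathbf{P}$.

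So the real work is the following combinatorial claim: if $(x,y)\in sc(\mathbf{P})$, then there are ballots $L_1,\dots,L_m\in\mathcal{L}(X(\mathbf{P}))$ with $(y,x)\in L_i$ for each $i$ such that, writing $\mathbf{P}_0=\mathbf{P}$ and $\mathbf{P}_i=\mathbf{P}_{i-1}+L_i$, we have $(x,y)\in sc(\mathbf{P}_i)$ for all $i\le m$ and $\mathcal{M}(\mathbf{P}_m)$ has no majority path from $y$ to $x$. Granting this, $(x,y)\in sc(\mathbf{P}_m)$ entails $\margin_{\mathbf{P}_m}(x,y)>0$, so $x$ is majority preferred to $y$ in $\mathbf{P}_m$ with no majority cycle through $x$ and $y$, whence $(x,y)\in f(\mathbf{P}_m)$ by Coherent Defeat; a downward induction on $i$, applying Positive Involvement in Defeat to the pair $x,y$ and the added ballot $L_i$ (which ranks $y$ above $x$, since $(y,x)\in L_i$), then propagates $(x,y)\in f(\mathbf{P}_i)$ to $(x,y)\in f(\mathbf{P}_{i-1})$, reaching $(x,y)\in f(\mathbf{P})$ at $i=1$.

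To prove the combinatorial claim I would induct on $k=\margin_{\mathbf{P}}(x,y)$, with Lemma~\ref{lem:add-ballot} as the engine. If $\mathcal{M}(\mathbf{P})$ has no majority path from $y$ to $x$---which always holds when $k=1$, since then $(x,y)\in sc(\mathbf{P})$ and $\mathcal{M}(\mathbf{P})\uph 1=\mathcal{M}(\mathbf{P})$---take $m=0$. Otherwise, because $(x,y)\in sc(\mathbf{P})$ means (Lemma~\ref{lem:restrict-to-k}) that $x$ is unreachable from $y$ in $\mathcal{M}(\mathbf{P})\uph k$, every majority path from $y$ to $x$ uses an edge of weight below $k$, so such an edge exists and $k\ge2$. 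When $k>2$, Lemma~\ref{lem:add-ballot} gives a ballot $L_1$ with $(y,x)\in L_1$, $(x,y)\in sc(\mathbf{P}+L_1)$, and $\margin_{\mathbf{P}+L_1}(x,y)=k-1$; since the margin strictly decreased, the induction hypothesis supplies ballots $L_2,\dots,L_m$ for $\mathbf{P}+L_1$, and prepending $L_1$ closes this case. Termination is unproblematic, as $\margin(x,y)$ is a positive integer falling by one at each step.

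The case not reached by Lemma~\ref{lem:add-ballot} is $k=2$ with a surviving majority path from $y$ to $x$, and I expect this to be the main obstacle. Here every majority path from $y$ to $x$ uses a weight-$1$ edge, so the weight-$1$ edges form a cut from $y$ to $x$; take a \emph{minimal} such cut $C$. As in the proof of Lemma~\ref{lem:add-ballot}, minimality forces $C$ to contain no two edges meeting head-to-tail, so $C^{-1}\cup B$ (with $B$ the set of pairs $(y,z)$ having $\margin_{\mathbf{P}}(y,z)\le0$, exactly as in that proof) is acyclic and extends to a ballot $L\in\mathcal{L}(X(\mathbf{P}))$, which has $(y,x)\in L$. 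Adding $L$ lowers $\margin(x,y)$ to $1$ and deletes every edge of $C$, so every majority path from $y$ to $x$ in $\mathcal{M}(\mathbf{P}+L)$ would have to use an edge freshly created by $L$; the delicate point---which Lemma~\ref{lem:add-ballot}'s argument avoids only because it has room to spare when $k>2$---is to show that these new weight-$1$ edges cannot reconstitute a $y$-to-$x$ path, for which I would again exploit the no-connecting-pairs structure of $C$ together with the fact that $y$ is ranked above every candidate it fails to beat head-to-head. This reduces the $k=2$ case to the $m=0$ base case with the single ballot $L$ prepended, completing the combinatorial claim and hence Theorem~\ref{thm:at-least-as-resolute-vccr}; combined with Theorem~\ref{Thm:HP}, it yields the full characterization of Theorem~\ref{Thm:Main}.
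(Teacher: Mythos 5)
Your overall strategy is exactly the paper's: repeatedly apply Lemma~\ref{lem:add-ballot} to drive $\margin(x,y)$ down while preserving the Split Cycle defeat, invoke Coherent Defeat once no majority path from $y$ to $x$ remains, and then carry $(x,y)\in f(\cdot)$ back to $\mathbf{P}$ by the contrapositive of Positive Involvement in Defeat. The backward induction and the case $k>2$ are handled correctly.

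The gap is the case you yourself flag as the main obstacle: $k=2$ with a surviving majority path from $y$ to $x$. You leave the ``delicate point'' unproved, and the direct construction you sketch cannot be completed as a general combinatorial fact: the paper's example in Section~\ref{Sec:Ties} (a margin graph with $\margin(x,y)=2$, $\margin(y,b)=\margin(b,x)=1$, $\margin(a,x)=2$, $\margin(y,a)=0$) shows that \emph{any} added ballot ranking $y$ above $x$ and above $a$ creates a fresh edge from $y$ to $a$ which, together with the surviving edge from $a$ to $x$, reconstitutes a $y$-to-$x$ path, so neither your weaker goal (no path after one added ballot) nor the stronger one ($(x,y)\in sc(\mathbf{P}+L)$) is attainable there. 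What closes the argument is an idea you never invoke: the parity constraint for linear profiles (Lemma~\ref{ParityLem}). Since the ballots produced by Lemma~\ref{lem:add-ballot} are linear, every profile in your sequence is linear, so all margins share one parity; hence if $\margin(x,y)=2$, all margins are even, and any majority path from $y$ to $x$ would have strength at least $2\ge\margin(x,y)$, contradicting $(x,y)\in sc$. So your troublesome case is vacuous---which is precisely how the paper disposes of margins at most $2$---and with that one observation your proof is complete. Without parity (i.e., with ties allowed) the case is genuinely problematic, and the paper must add Downward Homogeneity to handle it.
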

\begin{proof}
  Pick an arbitrary profile $\mathbf{P}$ and $x, y
  \in X(\mathbf{P})$ such that $(x, y) \in sc(\mathbf{P})$. We only need to show
  that $(x, y) \in f(\mathbf{P})$. Let $k = \margin_{\mathbf{P}}(x, y)$ and $\M
  = \M(\mathbf{P})$. Now if $k \le 2$, then it is easy to see that there
  cannot be a majority path from $y$ to $x$, as then by the parity constraint from Lemma \ref{ParityLem},
  any majority path has strength at least $k$ and hence $x$ cannot defeat $y$ according to
  Split Cycle. So if $k \le 2$, we already have $(x, y) \in f(\mathbf{P})$
  by Coherent Defeat. If $k > 2$, then we inductively define $\mathbf{P}_0,
  \mathbf{P}_1, \mathbf{P}_2, \dots, \mathbf{P}_{k-2}$ and $L_1, L_2, \cdots,
  L_{k-2}$ where $\mathbf{P}_0 = \mathbf{P}$, $\mathbf{P}_{i+1} = \mathbf{P}_i +
  L_{i+1}$, and $L_{i+1}$ is obtained by applying Lemma \ref{lem:add-ballot} to
  $\mathbf{P}_i$. By the lemma, (1) $(y, x)$ is in each $L_i$, and (2) in each
  $\mathbf{P}_i$, $(x, y) \in sc(\mathbf{P}_i)$. As the margin from $x$ to
  $y$ decreases by $k-2$ times in this sequence, $(x, y) \in
  sc(\mathbf{P}_{k-2})$ but $\margin_{\mathbf{P}_{k-2}}(x, y) = 2$. This
  means that there is no majority path from $y$ to $x$. Thus, by Coherent
  Defeat, $(x, y) \in f(\mathbf{P}_{k-2})$. Finally, by Positive Involvement in
  Defeat in its contrapositive form, if $(x, y) \in f(\mathbf{P}_{i+1})$, then $(x, y)
  \in f(\mathbf{P}_i)$, since $x$ is ranked below $y$ in $L_{i+1}$. Thus, by an
  induction from ${k-2}$ back to $0$, $(x, y) \in f(\mathbf{P}_0) = f(\mathbf{P})$.
\end{proof}

\begin{figure}[h]
  \centering
  \begin{tikzpicture}[baseline]
    \node[circle,draw, minimum width=0.25in] at (0,0) (x) {$x$}; 
    \node[circle,draw,minimum width=0.25in] at (3,0) (y) {$y$}; 
    \node[circle,draw,minimum width=0.25in] at (1.5,1.5) (a) {$a$}; 
    \node[circle,draw,minimum width=0.25in] at (1.5,-1.5) (b) {$b$}; 
    \path[->,draw,thick] (x) to node[fill=white] {$4$} (y);
    \path[->,draw,very thick,medgreen] (y) to node[fill=white] {$2$} (b);
    \path[->,draw,thick] (b) to node[fill=white] {$2$} (x);
    \path[->,draw,thick] (a) to node[fill=white] {$4$} (x);
  \end{tikzpicture}
  $\xrightarrow[\text{add ballot}]{b \succ y \succ x \succ a}$
  \begin{tikzpicture}[baseline]
    \node[circle,draw, minimum width=0.25in] at (0,0) (x) {$x$}; 
    \node[circle,draw,minimum width=0.25in] at (3,0) (y) {$y$}; 
    \node[circle,draw,minimum width=0.25in] at (1.5,1.5) (a) {$a$}; 
    \node[circle,draw,minimum width=0.25in] at (1.5,-1.5) (b) {$b$}; 
    \path[->,draw,thick] (x) to node[near start,fill=white] {$3$} (y);
    \path[->,draw,very thick,medgreen] (y) to node[fill=white] {$1$} (b);
    \path[->,draw,very thick,medgreen] (y) to node[fill=white] {$1$} (a);
    \path[->,draw,thick] (b) to node[fill=white] {$3$} (x);
    \path[->,draw,thick] (b) to node[near start,fill=white] {$1$} (a);
    \path[->,draw,thick] (a) to node[fill=white] {$3$} (x);
  \end{tikzpicture}
  $\xrightarrow[\text{add ballot}]{a \succ b \succ y \succ x}$
  \begin{tikzpicture}[baseline]
    \node[circle,draw, minimum width=0.25in] at (0,0) (x) {$x$}; 
    \node[circle,draw,minimum width=0.25in] at (3,0) (y) {$y$}; 
    \node[circle,draw,minimum width=0.25in] at (1.5,1.5) (a) {$a$}; 
    \node[circle,draw,minimum width=0.25in] at (1.5,-1.5) (b) {$b$}; 
    \path[->,draw,thick] (x) to node[fill=white] {$2$} (y);
    \path[->,draw,thick] (b) to node[fill=white] {$4$} (x);
    \path[->,draw,thick] (a) to node[fill=white] {$4$} (x);
  \end{tikzpicture}
  \caption{An example of eliminating all paths from $y$ to $x$ by adding ballots
    that put $y$ in a sufficiently high position. For each of the first two
    margin graphs, a minimal cut from $y$ to $x$ is highlighted by thickened
    arrows. There are many ways to eliminate all paths from $y$ to $x$, and the
    key is to do this quickly before the positive margin from $x$ to $y$ runs 
    out.}
  \label{fig:reduction-example}
\end{figure}

Combining Theorems \ref{thm:at-least-as-resolute-vccr} and \ref{Thm:HP}, any
VCCR satisfying the axioms in Theorem \ref{Thm:Main} refines and is refined by Split Cycle, so it is equal to Split Cycle. This completes the
proof of Theorem \ref{Thm:Main}.

\section{Axioms on VSCCs}\label{Sec:AxVSCC}
In this section, we characterize Split Cycle as a VSCC. As before, we first introduce the standard axioms (\ref{Sec:VSCCstandard}). Then we define Coherent IIA for VSCCs (\ref{Sec:CohIIAVSCC}). Care must be taken here as existing IIA-like axioms for SCCs have implicit commitments irrelevant to the spirit of IIA. Next we define Coherent Defeat for VSCCs (\ref{Sec:CohDefVSCC}), the obvious analogue of Coherent Defeat for VCCRs, and finally a new axiom of Tolerant Positive Involvement (\ref{Sec:PosInvVSCC}), a proper strengthening of Positive Involvement necessary for our characterization.

\subsection{Standard axioms}\label{Sec:VSCCstandard}

Adapting the standard axioms for VCCRs from Section \ref{Sec:StandAxVCCR} to VSCCs, we obtain the following.

\begin{definition}
  Let $F$ be a VSCC.
  \begin{enumerate}
  \item $F$ satisfies \textit{Anonymity} if for any profiles $\mathbf{P}$ and $\mathbf{P}'$, if $V(\pfp) = V(\pfp')$ and there is a bijection $\pi$ from $V(\mathbf{P})$ to $V(\mathbf{P}')$ such that for any $i \in V(\mathbf{P}')$, $\mathbf{P}'(i) = \mathbf{P}(\pi(i))$, then  $F(\mathbf{P}) = F(\mathbf{P}')$; and $F$ satisfies \textit{Neutrality} if for any profiles $\mathbf{P}$ and $\mathbf{P}'$, if $V(\mathbf{P}) = V(\mathbf{P}')$, $X(\pfp) = X(\pfp')$, and there is a bijection $\pi$ from $X(\mathbf{P})$ to $X(\mathbf{P}')$ such that for any $i \in V(\mathbf{P})$ and  $x, y \in X(\mathbf{P})$, $(x, y) \in \mathbf{P}(i)$ iff $(\pi(x), \pi(y)) \in \mathbf{P}'(i)$, then for any $x \in X(\mathbf{P})$, $x \in F(\mathbf{P})$ iff $\pi(x) \in F(\mathbf{P}')$.

  \item $F$ satisfies \textit{Homogeneity} (resp.~\textit{Upward Homogeneity})
    if for any profile $\mathbf{P}$ and $2\mathbf{P}$, where $2\mathbf{P}$ is
    the result of replacing each voter in $\mathbf{P}$ by $2$ copies of that
    voter, we have $F(\mathbf{P}) = F(2\mathbf{P})$ (resp. $F(\mathbf{P})
    \supseteq F(2\mathbf{P})$\footnote{Equivalently: $X(\mathbf{P}) \setminus
      F(\mathbf{P}) \subseteq X(\mathbf{P}) \setminus F(2\mathbf{P})$. In short,
      Upward Homogeneity for VSCCs means losing is preserved under doubling the
      profile.}).
  \item $F$ satisfies \textit{Monotonicity} (resp.~\textit{Monotonicity for two-candidate profiles}) if for any profile (resp. two-candidate profile) $\mathbf{P}$ and $\mathbf{P}'$ obtained from $\mathbf{P}$ by moving $x \in X(\mathbf{P})$ up one place in some voter's ballot (see Definition \ref{Def:StandAxVCCR} for a precise formulation), we have $x \in F(\mathbf{P})$ only if $x \in F(\mathbf{P}')$.

  \item $F$ satisfies \textit{Neutral Reversal} if for any profile $\mathbf{P}$ and $\mathbf{P}'$ obtained from $\mathbf{P}$ by adding two voters whose ballots are converses of each other, we have $F(\mathbf{P}) = F(\mathbf{P}')$.
  \end{enumerate}
\end{definition}

Observe that we no longer have the Availability axiom for VSCCs, since by definition they must return a non-empty set of winners.

\subsection{Coherent IIA}\label{Sec:CohIIAVSCC}

We now turn to the crucial question of how to formulate the analogue for VSCCs of Coherent IIA for VCCRs. First, we must ask: what is the analogue for VSCCs of Arrow's IIA for VCCRs? One  answer is the following, adopting terminology of Denicolo \citeyearpar{Denicolo2000}, based on \citealt{Hansson1969}.

\begin{definition} A VSCC $F$ satisfies \textit{Hansson's Pairwise Independence} (HPI) if for any profiles $\mathbf{P}$ and $\mathbf{P}'$ with $x,y\in X(\mathbf{P})$, if $x\in F(\mathbf{P})$, $y\not\in F(\mathbf{P})$, and $\mathbf{P}_{\mid\{x,y\}}=\mathbf{P}_{\mid\{x,y\}}'$, then $y\not\in F(\mathbf{P}')$.
\end{definition}
\noindent The problem with this proposal, simply put, is this: who said $x$ is the candidate who defeats $y$ in $\mathbf{P}$? If we knew on the basis of $x\in F(\mathbf{P})$ and $y\not\in F(\mathbf{P})$ that $x$ defeats $y$ in $\mathbf{P}$, then we could indeed conclude that $x$ still defeats $y$ in $\mathbf{P}'$, so $y\not\in F(\mathbf{P}')$. But it does not follow (without assumptions beyond IIA) from $x\in F(\mathbf{P})$ and $y\not\in F(\mathbf{P})$ that $x$ in particular is the candidate who defeats $y$ in $\mathbf{P}$; all that follows it that $x$ is undefeated and $y$ is defeated by someone or other. HPI in effect assumes that $F$ is rationalized by a variable-election \textit{social welfare function} (see \citealt[Theorem 1]{Denicolo1993}), i.e., a VCCR $f$ for which $f(\mathbf{P})$ is always a strict weak order, in which case $x\in F(\mathbf{P})$ and $y\not\in F(\mathbf{P})$ do imply that $x$ defeats $y$. Thus, HPI smuggles in an additional  ``social rationality'' assumption, which should not be part of a pure independence condition. But we can fix this problem with the following weaker definition.

\begin{definition} A VSCC $F$ satisfies \textit{Pure IIA} if for any profile $\mathbf{P}$ and $y\in X(\mathbf{P})$, if $y\not\in F(\mathbf{P})$, then there is an $x\in X(\mathbf{P})$ such that for any profile $\mathbf{P}'$ with $\mathbf{P}_{\mid\{x,y\}}=\mathbf{P}_{\mid\{x,y\}}'$, we have $y\not\in F(\mathbf{P}')$.
\end{definition}

The following proposition verifies that Pure IIA is the correct analogue of IIA
for VSCCs.

\begin{proposition} For any VSCC $F$, the following are equivalent:
\begin{enumerate}
\item\label{PureIIAProp1} $F$ satisfies Pure IIA;
\item\label{PureIIAProp2} $F$ is defeat-rationalized by a VCCR satisfying IIA.
\end{enumerate}
\end{proposition}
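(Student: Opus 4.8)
The plan is to prove the two implications separately. The easy direction is (\ref{PureIIAProp2}) $\Rightarrow$ (\ref{PureIIAProp1}): suppose $F = \overline{f}$ for some VCCR $f$ satisfying IIA. Given a profile $\mathbf{P}$ and $y \notin F(\mathbf{P})$, by definition of $\overline{f}$ there is some $x \in X(\mathbf{P})$ with $(x,y) \in f(\mathbf{P})$. I claim this $x$ witnesses Pure IIA for $y$. Indeed, if $\mathbf{P}'$ is any profile with $\mathbf{P}_{\mid\{x,y\}} = \mathbf{P}'_{\mid\{x,y\}}$, then IIA for $f$ gives $(x,y) \in f(\mathbf{P}')$, hence $y$ is defeated in $\mathbf{P}'$ according to $f$, so $y \notin \overline{f}(\mathbf{P}') = F(\mathbf{P}')$. (One should note $x \in X(\mathbf{P}')$ since $x, y$ appear in the common restriction.) This is essentially immediate.

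The substantive direction is (\ref{PureIIAProp1}) $\Rightarrow$ (\ref{PureIIAProp2}): given that $F$ satisfies Pure IIA, I must construct a VCCR $f$ with $f$ satisfying IIA and $\overline{f} = F$. The natural construction: for each profile $\mathbf{P}$ and each $y \in X(\mathbf{P})$ with $y \notin F(\mathbf{P})$, Pure IIA hands us a ``witness'' candidate; fix a choice function selecting, for each such $(\mathbf{P}, y)$, one such witness $w(\mathbf{P}, y) \in X(\mathbf{P})$. Then define $(x,y) \in f(\mathbf{P})$ iff $y \notin F(\mathbf{P})$ and $x = w(\mathbf{P}, y)$. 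I would then check three things. (i) \emph{$f$ is a VCCR}, i.e., $f(\mathbf{P})$ is asymmetric on $X(\mathbf{P})$: if $(x,y) \in f(\mathbf{P})$ then $y \notin F(\mathbf{P})$; if also $(y,x) \in f(\mathbf{P})$ then $x \notin F(\mathbf{P})$ and $y = w(\mathbf{P},x)$. Asymmetry needs $x \neq y$, which holds because a witness candidate can be taken distinct from $y$ — here one must be slightly careful: Pure IIA as stated allows $x = y$ in principle, but $\mathbf{P}_{\mid\{y\}} = \mathbf{P}'_{\mid\{y\}}$ holds for \emph{every} $\mathbf{P}'$ containing $y$, which would force $y$ to lose in every such profile, contradicting e.g. the one-voter profile on $\{y\}$ where $F$ must return $\{y\}$; so any valid witness is automatically $\neq y$, and asymmetry follows. (Alternatively, restrict to $|X(\mathbf{P})| \ge 2$ for the defeat relation to be nonempty and handle singletons trivially.) (ii) \emph{$\overline{f} = F$}: by construction $y$ is defeated in $\mathbf{P}$ according to $f$ (namely by $w(\mathbf{P},y)$) exactly when $y \notin F(\mathbf{P})$, so the undefeated set is precisely $F(\mathbf{P})$; this also re-confirms $\overline{f}$ is a VSCC. (iii) \emph{$f$ satisfies IIA}: suppose $(x,y) \in f(\mathbf{P})$ and $\mathbf{P}_{\mid\{x,y\}} = \mathbf{P}'_{\mid\{x,y\}}$. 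Then $y \notin F(\mathbf{P})$ and $x = w(\mathbf{P},y)$ is a Pure IIA witness, so $y \notin F(\mathbf{P}')$; but I need $(x,y) \in f(\mathbf{P}')$, i.e., $x = w(\mathbf{P}',y)$, and the witness chosen for $(\mathbf{P}',y)$ need not be $x$.

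This last point is the main obstacle, and the fix is to make the choice function \emph{coherent across profiles} rather than choosing witnesses profile-by-profile independently. The clean way: define an equivalence-type relation on pairs $(\mathbf{P},y)$ with $y\notin F(\mathbf{P})$ — say $(\mathbf{P},y)$ and $(\mathbf{P}',y)$ are linked if $\mathbf{P}_{\mid\{z,y\}} = \mathbf{P}'_{\mid\{z,y\}}$ for the common witness $z$ — but more robustly, observe that for a fixed $y$, if $z$ is \emph{any} Pure IIA witness for $(\mathbf{P},y)$, then $z$ is also a Pure IIA witness for every $(\mathbf{P}',y)$ with $\mathbf{P}_{\mid\{z,y\}}=\mathbf{P}'_{\mid\{z,y\}}$ (since the defining condition of the witness is transitive through equality of restrictions). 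So I would group all profiles losing $y$ into classes according to ``there is a single candidate $z$ that is a simultaneous witness for all of them,'' using Zorn's lemma or a direct maximal-class argument to pick, for each such class, one fixed $z$ and assign it as the witness to every profile in the class; any two profiles agreeing on $\{x,y\}$ where $x$ is the assigned witness of one of them must lie in the same class and hence share that witness. Formalizing the ``classes'' and verifying they partition the relevant pairs — i.e., checking that the relation ``share a common simultaneous witness for $y$'' is well-behaved enough to carry a consistent global choice — is the real content; once that is set up, IIA for $f$ becomes a direct unwinding of definitions. I would present it by first stating the key lemma (a witness $z$ for $(\mathbf{P},y)$ serves every profile agreeing with $\mathbf{P}$ on $\{z,y\}$), then defining $f$ via a globally consistent witness assignment guaranteed by that lemma plus the axiom of choice, and finally verifying (i)--(iii) as above.
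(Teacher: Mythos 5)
Your direction (2) $\Rightarrow$ (1) is exactly the paper's argument and is fine. The genuine gap is in (1) $\Rightarrow$ (2): the design in which each losing candidate $y$ gets \emph{exactly one} defeater $w(\mathbf{P},y)$ cannot be made to satisfy IIA, no matter how cleverly the witnesses are coordinated, so the class-plus-choice construction you sketch does not exist in general. The reason: for any VCCR $f$ with $\overline{f}=F$ satisfying IIA, every defeater of $y$ is automatically a Pure IIA witness (IIA propagates the defeat $(x,y)$ to every profile agreeing with $\mathbf{P}$ on $\{x,y\}$, making $y$ lose there), and IIA can force two distinct defeaters of the same $y$ in one profile. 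Concretely, let $F$ be the VSCC of Pareto-undominated candidates, which is defeat-rationalized by the IIA-satisfying Pareto VCCR and hence satisfies Pure IIA; its Pure IIA witnesses for $y$ in $\mathbf{P}$ are exactly the candidates ranked above $y$ by all voters. With two voters and candidates $y,z_1,z_2$, take $\mathbf{P}_3$ in which both voters rank $z_1$ and $z_2$ above $y$; take $\mathbf{P}_1$ agreeing with $\mathbf{P}_3$ on $\{z_1,y\}$ but with one voter ranking $y$ above $z_2$; and take $\mathbf{P}_2$ agreeing with $\mathbf{P}_3$ on $\{z_2,y\}$ but with one voter ranking $y$ above $z_1$. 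In any IIA defeat-rationalization of $F$, the only possible defeater of $y$ in $\mathbf{P}_1$ is $z_1$ and in $\mathbf{P}_2$ is $z_2$ (the defeater must be a witness, and $y$ must be defeated since $y\notin F$), and then IIA forces both $(z_1,y)$ and $(z_2,y)$ into $f(\mathbf{P}_3)$. So the relation ``share a simultaneous witness'' does not partition the relevant pairs (your classes overlap at $\mathbf{P}_3$), and no single-valued assignment $w$---by Zorn's lemma or otherwise---yields an IIA VCCR of the one-defeater-per-loser form.

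The repair is to drop the choice function and let \emph{all} robust witnesses defeat, which is what the paper does: define $(x,y)\in f(\mathbf{P})$ iff for every profile $\mathbf{P}'$ with $x\in X(\mathbf{P}')$ and $\mathbf{P}_{|\{x,y\}}=\mathbf{P}'_{|\{x,y\}}$ we have $y\notin F(\mathbf{P}')$. Since this condition depends only on $\mathbf{P}_{|\{x,y\}}$, IIA is automatic; Pure IIA supplies at least one such defeater for every $y\notin F(\mathbf{P})$; taking $\mathbf{P}'=\mathbf{P}$ shows every $y\in F(\mathbf{P})$ is undefeated, so $\overline{f}=F$; and asymmetry (with $x\neq y$) follows by considering $\mathbf{P}'=\mathbf{P}_{|\{x,y\}}$, where $F(\mathbf{P}')\neq\varnothing$. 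Your key lemma---that a witness for $(\mathbf{P},y)$ serves every profile agreeing with $\mathbf{P}$ on the witness pair---is exactly the right observation, but its correct use is to make witness-hood a property of the pairwise restriction and define $f$ by that property directly, not to coordinate a single global choice.
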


\begin{proof} From \ref{PureIIAProp1} to \ref{PureIIAProp2}, suppose $F$ satisfies Pure IIA. Define $f$ as follows: for $x,y\in X(\mathbf{P})$, $x$ defeats $y$ in $\mathbf{P}$ according to $f$ if for every $\mathbf{P}'$ with $x\in X(\mathbf{P}')$ and $\mathbf{P}_{\mid\{x,y\}}=\mathbf{P}_{\mid\{x,y\}}'$, we have $y\not\in F(\mathbf{P}')$. Thus, $y$ is undefeated in $\mathbf{P}$ according to $f$ if and only if for every $x\in X(\mathbf{P})$, there is a $\mathbf{P}'$ with $x\in X(\mathbf{P}')$,  $\mathbf{P}_{\mid\{x,y\}}=\mathbf{P}_{\mid\{x,y\}}'$, and  $y\in F(\mathbf{P}')$; this implies, by Pure IIA, that $y\in F(\mathbf{P})$. Conversely, $y\in F(\mathbf{P})$ implies that $y$ is undefeated in $\mathbf{P}$ according to $f$ by taking $\mathbf{P}'=\mathbf{P}$. Thus, $F$ is deafeat-rationalized by $f$.

From \ref{PureIIAProp2} to \ref{PureIIAProp1}, suppose $F$ is defeat-rationalized by a VCCR $f$ satisfying IIA. To show that $F$ satisfies Pure IIA, suppose $\mathbf{P}$ is a profile with $y\not\in F(\mathbf{P})$. Then since $f$ defeat-rationalizes $F$, there is an $x\in X(\mathbf{P})$ such that $x$ defeats $y$ in $\mathbf{P}$ according to $f$. Then since $f$ satisfies IIA, for any profile $\mathbf{P}'$ with $\mathbf{P}_{\mid\{x,y\}}=\mathbf{P}_{\mid\{x,y\}}'$, we have that $x$ defeats $y$ in $\mathbf{P}'$ according to $f$, which by defeat-rationalization implies that $y\not\in F(\mathbf{P}')$. This shows that $F$ satisfies Pure IIA.\end{proof}

Now, just as we translated IIA for VCCRs to Pure IIA for VSCCs, we
define Coherent IIA for VSCCs as follows, using the notation
$\rightsquigarrow _{x, y}$ from Definition \ref{def:lc-accessibility}.

\begin{definition}\label{def:coherent-IIA-VSCC}
  A VSCC $F$ satisfies \textit{Coherent IIA} if for any profile $\mathbf{P}$ and
  $y \in X(\mathbf{P})$, if $y \not\in F(\mathbf{P})$, then there is $x \in
  X(\mathbf{P})$ such that for any profile $\mathbf{P}'$ with $\mathbf{P}
  \rightsquigarrow_{x, y}\mathbf{P}'$, we have  $y \not\in F(\mathbf{P}')$.
\end{definition}

 \begin{remark}Dong \citeyearpar[Theorem 12]{Dong2021} arrives at a similar axiom (dubbed \textit{crucial defeat}) for VSCCs inspired by Coherent~IIA for VCCRs, which Dong uses to characterize Split Cycle as the VSCC satisfying several axioms and refining all VSCCs satisfying those axioms, as Theorem \ref{Thm:HP} did for VCCRs. We will give our analogue of Theorem \ref{Thm:HP} for VSCCs in Theorem \ref{thm:no-more-resolute-vscc} below. \end{remark}

Due to Definition \ref{def:coherent-IIA-VSCC} being largely a translation, we immediately have the following.
\begin{proposition}\label{prop:ciia-vccr-to-vscc}
  For any VCCR $f$ satisfying Coherent IIA, $\overline{f}$ also satisfies Coherent IIA.
\end{proposition}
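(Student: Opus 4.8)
The plan is simply to unwind the definitions of $\overline{f}$ and of Coherent IIA for VCCRs and VSCCs, mirroring the proof of the equivalence between Pure IIA and defeat-rationalization by an IIA VCCR. Fix a VCCR $f$ satisfying Coherent IIA and a profile $\mathbf{P}$ with $y \notin \overline{f}(\mathbf{P})$. By Definition \ref{InducedVSCC}, there is some $x \in X(\mathbf{P})$ with $(x,y) \in f(\mathbf{P})$, i.e., $x$ defeats $y$ in $\mathbf{P}$ according to $f$. The claim will be that this particular $x$ is the witness required by Coherent IIA for $\overline{f}$.

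Next I would take an arbitrary profile $\mathbf{P}'$ with $\mathbf{P} \rightsquigarrow_{x,y} \mathbf{P}'$ and show $y \notin \overline{f}(\mathbf{P}')$. First note that, by Definition \ref{def:lc-accessibility}, $\mathbf{P} \rightsquigarrow_{x,y} \mathbf{P}'$ requires $x, y \in X(\mathbf{P}) \cap X(\mathbf{P}')$, so in particular $x \in X(\mathbf{P}')$. Now apply Coherent IIA for the VCCR $f$ to the pair $\mathbf{P}, \mathbf{P}'$: since $(x,y) \in f(\mathbf{P})$ and $\mathbf{P} \rightsquigarrow_{x,y} \mathbf{P}'$, we get $(x,y) \in f(\mathbf{P}')$. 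Hence there is a candidate in $X(\mathbf{P}')$, namely $x$, that defeats $y$ in $\mathbf{P}'$ according to $f$, so $y \notin \overline{f}(\mathbf{P}')$ by Definition \ref{InducedVSCC}. Since $\mathbf{P}'$ was arbitrary, $x$ witnesses the condition in Definition \ref{def:coherent-IIA-VSCC}, and $\overline{f}$ satisfies Coherent IIA.

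I do not anticipate any real obstacle here: the proposition is essentially a translation, and the only point that needs a moment's attention is that $x$ survives as a candidate in $\mathbf{P}'$, which is guaranteed by the side condition $x \in X(\mathbf{P}')$ built into $\rightsquigarrow_{x,y}$ (and noted in the remark following Definition \ref{def:lc-accessibility}, that $\mathbf{P}'$ cannot have fewer of the relevant candidates). One might also remark that the argument does not require $f$ to be acyclic or $\overline{f}$ to be a genuine VSCC; it only uses the definition of $\overline{f}$ and Coherent IIA for $f$, so if $\overline{f}$ happens to be a VSCC (e.g.\ by Lemma \ref{VCCRtoVSCC} when $f$ is acyclic) then it is one satisfying Coherent IIA.
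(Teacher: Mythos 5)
Your proof is correct and is exactly the routine unwinding of the definitions that the paper has in mind when it states the proposition as following "immediately" from the fact that Definition \ref{def:coherent-IIA-VSCC} is a translation of Coherent IIA for VCCRs; the defeater $x$ of $y$ in $\mathbf{P}$ serves as the witness, and Coherent IIA for $f$ preserves the defeat $(x,y)$ under $\rightsquigarrow_{x,y}$, so $y$ stays outside $\overline{f}(\mathbf{P}')$. Your observation that $x\in X(\mathbf{P}')$ is built into $\rightsquigarrow_{x,y}$ is the only point needing care, and you handled it.
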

The converse is not true, and this is naturally due to the weaker expressivity of VSCCs: many VCCRs can defeat-rationalize one VSCC. For example, we can define a VCCR $f$ by $(x, y) \in f(\pfp)$ iff $y \not\in SC(\pfp)$. Then $SC = \overline{f}$, but $f$ does not satisfy Coherent IIA, since we may have $\margin_\pfp(x, y) < 0$ and yet $(x, y) \in f(\pfp)$, while we must have $(x, y) \not\in f(\pfp_{|\{x,y\}})$.

\subsection{Coherent Defeat}\label{Sec:CohDefVSCC}
Translating Coherent Defeat from VCCRs to VSCCs is straightforward.
\begin{definition}
  A VSCC $F$ satisfies \textit{Coherent Defeat} if for any profile $\mathbf{P}$
  and $x, y \in X(\mathbf{P})$, if $\margin_{\mathbf{P}}(x,y) > 0$ and there is
  no majority path from $y$ to $x$, then $y \not\in F(\mathbf{P})$.
\end{definition}
\noindent In fact, this is precisely Immunity to Binary Arguments ($\mathrm{Im}_A$) in
\citealt{Heitzig2002} with $A$ being the majority preference relation, only phrased
in our variable-election setting.

\subsection{Tolerant Positive Involvement}\label{Sec:PosInvVSCC}
As we noted when we first introduced Positive Involvement in Section \ref{Sec:PosInvVCCR}, a strengthening is required for characterizing Split Cycle. For the Split Cycle VCCR, we used the strengthening of Positive Involvement in Defeat. For the Split Cycle VSCC, we use the following.
\begin{definition}
  A VSCC $F$ satisfies \textit{\tolerantpi} if for any profile $\mathbf{P}$, if $x \in F(\mathbf{P})$ and $\mathbf{P}'$ is obtained
  from $\mathbf{P}$ by adding one new voter who ranks $x$ above every other candidate
  $y$ such that $x$ is not majority preferred to $y$ in $\mathbf{P}$, then $x
  \in F(\mathbf{P}')$.
\end{definition}
We use the word `tolerant' since Tolerant Positive Involvement is applicable
more broadly than Positive Involvement; it says that for a winner $x$'s winning
to be preserved under the addition of a ballot $L$, $x$ can tolerate being
ranked below some candidates as long as $x$ is ranked high enough: $x$ is above
all those $y$ to which $x$ is not majority preferred before adding $L$. In other
words, if $L$ weakens all potential weaknesses of $x$ in the sense that $L$
decreases all non-negative margins over $x$, then adding $L$ will not cause $x$
to lose. In contrast, Positive Involvement guarantees the preservation of $x$'s
winning only for ballots that weaken all margins over $x$ (by putting $x$ at the
top), not just the non-negative margins.

\begin{example}To illustrate the difference between Positive Involvement and Tolerant Positive Involvement, let us consider Instant Runoff Voting---in particular, the version (as in \citealt[p.~7]{Taylor2008}) where at each stage, all candidates with the fewest first-place votes are eliminated, unless all candidates would thereby be eliminated. Consider the following anonymized profile $\mathbf{P}$:
\begin{center}
$\mathbf{P}$\qquad
\begin{tabular}[]{ccc}
  $3$ & $4$ & $2$ \\
  \hline 
  $x$ & $y$ & $z$ \\
  $y$ & $x$ & $x$ \\
  $z$ & $z$ & $y$ \\
\end{tabular}
\end{center}
According to Instant Runoff, $x$ is the winner in
$\mathbf{P}$. It is also not hard to see that Instant Runoff satisfies
Positive Involvement, and in this particular case, if we add a voter whose
ballot puts $x$ at the top, then $x$ will remain the sole winner as $z$ and $y$
will still be eliminated consecutively. Now note that the margin graph of
$\mathbf{P}$ is 
\begin{center}
  \begin{tikzpicture}[baseline]
    \node[circle,draw,minimum width=0.25in] at (0,0) (x) {$x$}; 
    \node[circle,draw,minimum width=0.25in] at (3,0) (z) {$z$}; 
    \node[circle,draw,minimum width=0.25in] at (1.5,1.5) (y) {$y$}; 
    \path[->,draw,thick] (x) to node[fill=white] {$1$} (y);
    \path[->,draw,thick] (x) to node[fill=white] {$5$} (z);
    \path[->,draw,thick] (y) to node[fill=white] {$5$} (z);
  \end{tikzpicture}
\end{center}
This shows that $x$ is the Condorcet winner, the candidate who is majority
preferred to every other candidate. In other words, there are no other
candidates to whom $x$ is not majority preferred. This makes the
precondition for Tolerant Positive Involvement trivial, and for Tolerant
Positive Involvement to hold for Instant Runoff, $x$ should remain a winner no
matter what ballot is added to $\mathbf{P}$. But this is not the case. Let 
$\mathbf{Q}$ be the result of adding to $\mathbf{P}$ a new voter whose ballot is
$z \succ x \succ y$: 
\begin{center}
\begin{tabular}[]{ccc}
  $3$ & $4$ & $3$ \\
  \hline 
  $x$ & $y$ & $z$ \\
  $y$ & $x$ & $x$ \\
  $z$ & $z$ & $y$ \\
\end{tabular}
\end{center}
According to Instant Runoff, $x$ is eliminated in the first round and therefore does not win in $\mathbf{Q}$. Thus, Instant Runoff satisfies Positive Involvement but not Tolerant Positive Involvement.
\end{example}

Since Tolerant Positive Involvement is a strengthening of Positive Involvement, which many Condorcet methods violate (see \citealt{Perez2001}), and has a majoritarian component, which scoring rules like Borda are unlikely to respect, it is perhaps not surprising that many well-known VSCCs fail Tolerant Positive Involvement. In fact, we know of only two standard VSCCs other than Split Cycle that satisfy the axiom, namely Minimax and Weighted Covering as a VSCC. 

\begin{proposition}
  The Minimax VSCC $MM$, defined by 
  \[
  MM(\pfp) = \{x \in X(\pfp) \mid \forall y \in X(\pfp), \weakness_\pfp(x) \le \weakness_\pfp(y)\},
  \]
  satisfies Tolerant Positive Involvement.
\end{proposition}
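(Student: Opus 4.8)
The plan is to compute directly how Minimax's weakness values shift when the single new ballot $L$ is added, and to verify that $x$ remains a minimizer. Write $\pfp'$ for the augmented profile (note $X(\pfp') = X(\pfp)$) and set $k = \weakness_\pfp(x)$. Since $x \in MM(\pfp)$ we have $\weakness_\pfp(w) \ge k$ for every $w \in X(\pfp)$, and since $\margin_\pfp(x,x) = 0$ we have $k \ge 0$. The one structural fact about $L$ that I will use is that $x \succ_L z$ whenever $x$ is not majority preferred to $z$ in $\pfp$; equivalently, $z \succ_L x$ forces $\margin_\pfp(x,z) > 0$, i.e.\ $\margin_\pfp(z,x) < 0$. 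Since $L$ is a linear order, adding it changes every margin between distinct candidates by exactly $\pm 1$.

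First I would bound $\weakness_{\pfp'}(x)$ from above. For $z$ with $x \succ_L z$, we get $\margin_{\pfp'}(z,x) = \margin_\pfp(z,x) - 1 \le k - 1$ (using $\weakness_\pfp(x) = k$); for $z$ with $z \succ_L x$, we have $\margin_\pfp(z,x) < 0$, so $\margin_{\pfp'}(z,x) = \margin_\pfp(z,x) + 1 \le 0$; and $\margin_{\pfp'}(x,x) = 0$. Hence $\weakness_{\pfp'}(x) \le \max(0, k-1)$.

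Next I would bound $\weakness_{\pfp'}(y)$ from below for an arbitrary $y \in X(\pfp)$ with $y \neq x$. Choose $z^*$ with $\margin_\pfp(z^*, y) = \weakness_\pfp(y) \ge k$. If $z^* \succ_L y$, then $\margin_{\pfp'}(z^*, y) = \weakness_\pfp(y) + 1 \ge k + 1$, so $\weakness_{\pfp'}(y) \ge k+1 > \max(0,k-1)$. Otherwise ($y \succ_L z^*$, or the degenerate case $z^* = y$, which forces $\weakness_\pfp(y) = 0$ and hence $k = 0$), combine $\margin_{\pfp'}(z^*, y) \ge \weakness_\pfp(y) - 1 \ge k - 1$ with the trivial bound $\weakness_{\pfp'}(y) \ge \margin_{\pfp'}(y,y) = 0$ to get $\weakness_{\pfp'}(y) \ge \max(0, k-1)$. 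In either case $\weakness_{\pfp'}(y) \ge \max(0, k-1) \ge \weakness_{\pfp'}(x)$, so $x$ is still a minimizer of $\weakness$ and thus $x \in MM(\pfp')$.

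I do not foresee a real obstacle here; the only bookkeeping nuisance is the degenerate case in which a candidate's weakness is realized by its own reflexive margin $0$, which is exactly why the clean bound is $\max(0, k-1)$ rather than $k-1$. Once that is noted, the upper bound on $\weakness_{\pfp'}(x)$ and the lower bound on $\weakness_{\pfp'}(y)$ meet, and the conclusion is immediate.
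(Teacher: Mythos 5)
Your proof is correct and follows essentially the same route as the paper's: a direct computation showing that adding the tolerant ballot sends $\weakness_{\pfp'}(x)$ down to $\max(0,\weakness_\pfp(x)-1)$ while no other candidate's weakness can fall below that value, since any candidate's weakness changes by at most $1$ and is bounded below by $0$. You simply spell out the case analysis (including the reflexive-margin degeneracy) that the paper's terser argument leaves implicit.
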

\begin{proof}
  The idea is similar to that in the proof showing that Weighted Covering satisfies Positive Involvement in Defeat (Proposition \ref{WCPID}). Let $\pfp$ be any profile, $x \in MM(\pfp)$, and $\pfp'$ the result of adding to $\pfp$ a single voter who ranks $x$ above all $y \in X(\pfp)$ such that $\margin_{\pfp}(y, x) \ge 0$. Then observe that $\weakness_{\pfp'}(x)$ is either $0$ or $\weakness_{\pfp}(x) - 1$. The weakness of any candidate from $\pfp$ to $\pfp'$ can decrease by at most $1$ and must be at least $0$. Thus $\weakness_{\pfp'}(x)$ is still the smallest possible in $\pfp'$.
\end{proof}

The following proposition relates Tolerant Positive Involvement to Positive Involvement in Defeat, which can be used to show that Split Cycle and Weighted Covering as VSCCs satisfy Tolerant Positive Involvement. For this we recall the axiom of Majority Defeat (Definition \ref{MajDefeat}): for any profile $\mathbf{P}$ and $x, y \in X(\mathbf{P})$, if $(x, y) \in f(\mathbf{P})$, then $\margin_{\mathbf{P}}(x,y) > 0$, i.e., $x$ defeats $y$ only if $x$ is majority preferred to $y$.

\begin{proposition}\label{prop:pid-to-tpi}
  If an acyclic VCCR $f$ satisfies Positive Involvement in Defeat and Majority Defeat, then $\overline{f}$ satisfies Tolerant Positive Involvement. In particular, the Split Cycle VSCC and the Weighted Covering VSCC $WC = \overline{wc}$ satisfy Tolerant Positive Involvement since both $sc$ and $wc$ satisfy Positive Involvement in Defeat and Majority Defeat.
\end{proposition}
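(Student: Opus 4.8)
The plan is to prove the statement directly by contradiction. Fix a profile $\mathbf{P}$ with $x\in\overline{f}(\mathbf{P})$, i.e.\ $x$ is undefeated in $\mathbf{P}$ according to $f$, and let $\mathbf{P}'=\mathbf{P}+L$, where the added ballot $L\in\mathcal{L}(X(\mathbf{P}))$ ranks $x$ above every $y\in X(\mathbf{P})$ with $\margin_{\mathbf{P}}(x,y)\le 0$. Since $f$ is acyclic, $\overline{f}$ is a VSCC by Lemma \ref{VCCRtoVSCC}, and $X(\mathbf{P}')=X(\mathbf{P})$; so it suffices to show that no candidate defeats $x$ in $\mathbf{P}'$ according to $f$, which gives $x\in\overline{f}(\mathbf{P}')$ as required.

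So fix $z\in X(\mathbf{P})$ and suppose toward a contradiction that $(z,x)\in f(\mathbf{P}')$. I would split on how $x$ and $z$ compare in $\mathbf{P}$. If $\margin_{\mathbf{P}}(x,z)\le 0$, then by construction the added voter ranks $x$ above $z$; since $x\in\overline{f}(\mathbf{P})$ we have $(z,x)\notin f(\mathbf{P})$, so Positive Involvement in Defeat (taken with $z$ as the would-be defeater and $x$ as the candidate being defended) yields $(z,x)\notin f(\mathbf{P}')$, a contradiction. If instead $\margin_{\mathbf{P}}(x,z)>0$, then $\margin_{\mathbf{P}}(z,x)\le -1$, and since $\mathbf{P}'$ adds exactly one ballot, $\margin_{\mathbf{P}'}(z,x)\le 0$; then Majority Defeat (Definition \ref{MajDefeat}) forces $(z,x)\notin f(\mathbf{P}')$, again a contradiction. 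Hence no such $z$ exists, so $x\in\overline{f}(\mathbf{P}')$ and $\overline{f}$ satisfies Tolerant Positive Involvement.

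For the ``in particular'' clause, I would observe that $sc$ and $wc$ are acyclic (so $SC=\overline{sc}$ and $WC=\overline{wc}$ are genuine VSCCs), that both satisfy Majority Defeat immediately from their definitions, each of which requires $\margin_{\mathbf{P}}(x,y)>0$ for $(x,y)$ to be a defeat, and that both satisfy Positive Involvement in Defeat by Propositions \ref{SCPID} and \ref{WCPID}; the first part of the proposition then applies to each.

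I do not expect a genuine obstacle here. The only subtlety is recognizing that the case $\margin_{\mathbf{P}}(x,z)>0$ is precisely the case where $L$ need not rank $x$ above $z$, so Positive Involvement in Defeat is unavailable; but in that case a single additional ballot cannot turn $z$'s strictly negative margin against $x$ into a positive one, so Majority Defeat disposes of it outright. Beyond that, the only care required is in matching the two candidates to the correct slots of the Positive Involvement in Defeat axiom.
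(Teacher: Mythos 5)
Your proof is correct and matches the paper's argument essentially step for step: the same case split on whether $x$ is majority preferred to $z$ in $\mathbf{P}$, with Positive Involvement in Defeat handling the case where the new ballot must rank $x$ above $z$, and Majority Defeat handling the case where one added ballot cannot make $\margin(z,x)$ positive. The only cosmetic difference is that you phrase it as a contradiction while the paper argues directly, and your verification of the ``in particular'' clause is exactly the paper's intended justification.
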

\begin{proof}
  Let $f$ be an acyclic VCCR satisfying Positive Involvement in Defeat and Majority
  Defeat. Let $\mathbf{P}$ be a profile, $x \in
  F(\mathbf{P})$, and $\mathbf{P}'$ a profile obtained by adding to $\mathbf{P}$
  a new voter with ballot $L$. Moreover, assume that for any $y \in
  X(\mathbf{P}) \setminus \{x\}$, if $\margin_{\mathbf{P}}(y, x) \ge 0$, then $x
  L y$. Our goal is to show that $x \in \overline{f}(\mathbf{P}')$, i.e., that for any $y \in X(\mathbf{P})
  \setminus \{x\}$, we have $(y, x) \not\in f(\mathbf{P}')$. Now there are two cases:
  either $\margin_{\mathbf{P}}(y, x) \ge 0$ or $\margin_{\mathbf{P}}(y, x) < 0$.
  In the former case, by the assumption on $L$, $x L y$. Also, since $x \in
  \overline{f}(\mathbf{P})$, we have  $(y, x) \not\in f(\mathbf{P})$. So $(y, x) \not\in
  f(\mathbf{P}')$ since $f$ satisfies Positive Involvement in Defeat. In the
  latter case, since only one new voter is added, $\margin_{\mathbf{P}'}(y, x)
  \le 0$. Then $(y, x) \not\in f(\mathbf{P}')$ by Majority Defeat.
\end{proof}

\section{Characterization of the Split Cycle VSCC}\label{Sec:CharVSCC}
We are now ready to axiomatically characterize Split
Cycle as a VSCC.

\begin{theorem}
  \label{thm:axiomatization-vscc}
  The Split Cycle VSCC is the unique VSCC satisfying Anonymity,
  Neutrality, (Upward) Homogeneity, Neutral Reversal, Monotonicity
  (for two-candidate profiles), Coherent IIA, Coherent Defeat, and Tolerant
  Positive Involvement.
\end{theorem}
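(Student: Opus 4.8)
The plan is to characterize the Split Cycle VSCC by the same two-sided strategy used for the VCCR in Theorem~\ref{Thm:Main}. One inclusion is supplied by the VSCC analogue of Theorem~\ref{Thm:HP} (stated in this paper as Theorem~\ref{thm:no-more-resolute-vscc}): any VSCC $F$ satisfying Anonymity, Neutrality, (Upward) Homogeneity, Neutral Reversal, Monotonicity (for two-candidate profiles), and Coherent IIA is refined by the Split Cycle VSCC, i.e.\ $SC(\mathbf{P}) \subseteq F(\mathbf{P})$ for every profile $\mathbf{P}$; this can in turn be obtained by transferring Theorem~\ref{Thm:HP} through defeat-rationalization and Proposition~\ref{prop:ciia-vccr-to-vscc}. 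It then remains to prove the reverse inclusion, $F(\mathbf{P}) \subseteq SC(\mathbf{P})$, using \emph{only} Coherent Defeat and Tolerant Positive Involvement, which is the VSCC counterpart of Theorem~\ref{thm:at-least-as-resolute-vccr}. Combining the two inclusions yields $F = SC$.

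So I would state and prove the key lemma: \emph{if a VSCC $F$ satisfies Coherent Defeat and Tolerant Positive Involvement, then $F$ refines the Split Cycle VSCC}. Fix a profile $\mathbf{P}$ and a candidate $x \in X(\mathbf{P}) \setminus SC(\mathbf{P})$; the goal is $x \notin F(\mathbf{P})$. Since $x \notin SC(\mathbf{P}) = \overline{sc}(\mathbf{P})$, there is some $y$ with $(y,x) \in sc(\mathbf{P})$; put $k = \margin_{\mathbf{P}}(y,x) > 0$. \textbf{Base case} $k \le 2$: by the parity constraint (Lemma~\ref{ParityLem}), $k \in \{1,2\}$ is the smallest possible margin in $\mathbf{P}$, so every majority path in $\mathcal{M}(\mathbf{P})$ has strength at least $k$; hence by Lemma~\ref{EquivDef}(4) the fact that $(y,x) \in sc(\mathbf{P})$ forces there to be no majority path from $x$ to $y$ in $\mathbf{P}$, and Coherent Defeat (applied to the pair $y,x$) gives $x \notin F(\mathbf{P})$ directly.

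\textbf{Inductive case} $k > 2$: I would iterate Lemma~\ref{lem:add-ballot} with the roles of its ``$x$'' and ``$y$'' played by $y$ and $x$, building profiles $\mathbf{P} = \mathbf{P}_0, \mathbf{P}_1, \dots, \mathbf{P}_{k-2}$ with $\mathbf{P}_{i+1} = \mathbf{P}_i + L_{i+1}$, where each ballot $L_{i+1}$ ranks $x$ above every candidate $z$ with $\margin_{\mathbf{P}_i}(x,z) \le 0$ (in particular above $y$) and where $(y,x) \in sc(\mathbf{P}_{i+1})$. The hypothesis $\margin_{\mathbf{P}_i}(y,x) > 2$ needed to apply Lemma~\ref{lem:add-ballot} holds at every step $i = 0, \dots, k-3$, since the margin of $y$ over $x$ drops by exactly $1$ per added ballot. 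Thus $(y,x) \in sc(\mathbf{P}_{k-2})$ while $\margin_{\mathbf{P}_{k-2}}(y,x) = 2$, so as in the base case there is no majority path from $x$ to $y$ in $\mathbf{P}_{k-2}$, and Coherent Defeat gives $x \notin F(\mathbf{P}_{k-2})$. Now the crucial observation is that each $L_{i+1}$ is precisely a ballot of the sort licensed by Tolerant Positive Involvement applied at $\mathbf{P}_i$: it ranks $x$ above exactly the candidates to which $x$ is not majority preferred in $\mathbf{P}_i$. Hence the contrapositive of Tolerant Positive Involvement yields $x \notin F(\mathbf{P}_{i+1}) \Rightarrow x \notin F(\mathbf{P}_i)$, and a downward induction from $i = k-2$ to $i = 0$ gives $x \notin F(\mathbf{P}_0) = F(\mathbf{P})$, completing the lemma and hence the theorem.

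I expect the genuinely hard work to have already been done in Lemma~\ref{lem:add-ballot} (the minimal-cut construction of the added ballot), so the obstacles here are mostly matters of careful bookkeeping: (i) confirming, clause by clause, that the ballot produced by Lemma~\ref{lem:add-ballot} matches the precondition of Tolerant Positive Involvement (both are phrased in terms of the candidates $z$ with $\margin(x,z) \le 0$); (ii) handling the parity-based base case so that the iteration is stopped when the margin reaches $2$ rather than $0$, and checking that $\mathbf{P}_{k-2}$ indeed has all-even margins so ``no majority path from $x$ to $y$'' can be read off; and (iii) verifying that the $\margin > 2$ hypothesis of Lemma~\ref{lem:add-ballot} remains available at each stage. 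The one substantive external ingredient is the VSCC analogue of Theorem~\ref{Thm:HP}; if one prefers a self-contained treatment, it is proved separately by constructing from $F$ (via the rationalization implicit in Definition~\ref{def:coherent-IIA-VSCC}) a VCCR that inherits the six standard-plus-Coherent-IIA axioms and then invoking Theorem~\ref{Thm:HP}.
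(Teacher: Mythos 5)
Your second inclusion, $F(\mathbf{P})\subseteq SC(\mathbf{P})$ from Coherent Defeat and Tolerant Positive Involvement alone, is correct and is essentially the paper's own argument (Theorem \ref{thm:at-least-as-resolute-vscc}): the base case via the parity constraint, the iteration of Lemma \ref{lem:add-ballot} with the roles of the two candidates swapped, the check that the constructed ballot ranks the Split-Cycle loser above exactly the candidates to which it is not majority preferred (so that Tolerant Positive Involvement applies in contrapositive form), and the backward induction all match the paper.

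The gap is in how you propose to obtain the other inclusion, $SC(\mathbf{P})\subseteq F(\mathbf{P})$. You claim Theorem \ref{thm:no-more-resolute-vscc} "can be obtained by transferring Theorem \ref{Thm:HP} through defeat-rationalization and Proposition \ref{prop:ciia-vccr-to-vscc}," and, in your fallback, by constructing from $F$ a VCCR that "inherits the six standard-plus-Coherent-IIA axioms" and invoking Theorem \ref{Thm:HP}. This does not go through as stated. Proposition \ref{prop:ciia-vccr-to-vscc} runs in the wrong direction (VCCR to VSCC), and the canonical rationalizer suggested by Definition \ref{def:coherent-IIA-VSCC} (declare $(x,y)$ a defeat in $\mathbf{P}$ iff $y\notin F(\mathbf{P}')$ for every $\mathbf{P}'$ with $\mathbf{P}\rightsquigarrow_{x,y}\mathbf{P}'$) can be shown to defeat-rationalize $F$, to be asymmetric, and to satisfy Coherent IIA and Availability, but there is no evident argument that it inherits Upward Homogeneity, Neutral Reversal, or Monotonicity from the corresponding VSCC axioms: those VSCC axioms only say that \emph{losing} is preserved, without tracking \emph{by whom} a candidate is defeated, so the universally-quantified defeat relation need not be preserved under doubling, reversal pairs, etc. (there are also parity/realizability obstacles in relating the profiles $\mathbf{Q}$ with $2\mathbf{P}\rightsquigarrow_{x,y}\mathbf{Q}$ back to profiles accessible from $\mathbf{P}$). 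The paper explicitly flags exactly this recovery problem as open (see the discussion after Theorem \ref{thm:axiomatization-vscc} and in Section \ref{Sec:Conclusion}) and therefore does \emph{not} transfer Theorem \ref{Thm:HP}; instead it proves Theorem \ref{thm:no-more-resolute-vscc} directly, by padding $\mathbf{P}$ with reversal pairs to secure enough voters, extracting the crucial pair $x,y$ from Coherent IIA for VSCCs, building the McGarvey-style cyclic profiles $\mathbf{Q}$ and $\mathbf{Q}'$ with $\mathbf{P}'\rightsquigarrow_{x,y}\mathbf{Q}'$, and using Anonymity and Neutrality on the rotationally symmetric $\mathbf{Q}$ to force $F(\mathbf{Q})=\varnothing$, a contradiction. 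Without either that direct construction or a genuine proof that your recovered VCCR satisfies all the needed axioms, your proof of the theorem is incomplete.
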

\noindent Unlike the situation with the Split Cycle VCCR where we can directly use
the theorem in \cite{HP2021} to finish one direction of the theorem, we need to
prove both directions specifically for the VSCC. On the face of it, this can be
explained by the fact that the standard axioms, when stated for VSCCs, are weaker than
their counterparts for VCCRs, since a VSCC carries less information than a VCCR, and the standard axioms involve preservation of information over
profiles. For example, Upward Homogeneity for VSCCs says that losing is
preserved under doubling profiles. However, one could be losing for different
reasons, i.e., by being defeated by different candidates. Upward Homogeneity for
VCCRs also preserves the defeaters of the losers by preserving the whole defeat
relation, while Upward Homogeneity for VSCCs cannot track the defeaters since a
VSCC does not carry the full information of defeat. It is an interesting
question when we can recover from a VSCC a VCCR that defeat rationalizes the
VSCC while satisfying various natural preservation axioms for the defeat relation, such as IIA, Coherent IIA, and Positive Involvement in Defeat. We leave this for future work.

We start with the direction that uses the standard axioms. First, by a standard
argument using symmetries, Anonymity, Neutrality, and Monotonicity for
two-candidate profiles together ensure that on two-candidate
profiles, the loser (if there is one) is majority dispreferred. This will be
used a few times later and is  already used in \citealt[Lemma~4.2]{HP2021}, so we formally
state it here. 

\begin{definition}
  A VCCR $f$ satisfies \emph{Majority Defeat for two-candidate profiles} if for
  any profile $\mathbf{P}$ with $X(\mathbf{P}) = \{x, y\}$ and $x \not= y$, if $x
  f(\mathbf{P}) y$ then $\margin_{\mathbf{P}}(x, y) > 0$.

  A VSCC $F$ satisfies \emph{Majority Defeat for two-candidate profiles} if for
  any profile $\mathbf{P}$ with $X(\mathbf{P}) = \{x, y\}$ and $x \not= y$, if $y \not\in
  F(\mathbf{P})$ then $\margin_{\mathbf{P}}(x, y) > 0$.
\end{definition}
\begin{lemma}\label{lem:majority-defeat}
  Any VCCR or VSCC $f$ satisfying Anonymity, Neutrality, and Monotonicity for
  two-candidate profiles satisfies Majority Defeat for
  two-candidate profiles.
\end{lemma}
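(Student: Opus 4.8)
The plan is to argue by contraposition in both cases at once. Fix a two-candidate profile $\mathbf{P}$ with $X(\mathbf{P})=\{x,y\}$, $x\neq y$, and suppose $x$ ``defeats'' $y$ in $\mathbf{P}$ --- that is, $(x,y)\in f(\mathbf{P})$ in the VCCR case, or $y\notin F(\mathbf{P})$ in the VSCC case. I want to conclude $\margin_{\mathbf{P}}(x,y)>0$, so suppose for contradiction that $\margin_{\mathbf{P}}(x,y)\le 0$. Write $a$ for the number of voters whose ballot is $x\succ y$ and $b$ for the number whose ballot is $y\succ x$; then $a\le b$ and $a+b=|V(\mathbf{P})|\ge 1$. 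In the VSCC case I also record now that, since $\varnothing\neq F(\mathbf{P})\subseteq\{x,y\}$, from $y\notin F(\mathbf{P})$ we get $x\in F(\mathbf{P})$.

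First I pass to the reversal profile $\mathbf{P}^{\mathrm{sw}}$ on the same voter set $V(\mathbf{P})$ and candidate set $\{x,y\}$, where each voter's ballot has $x$ and $y$ interchanged; so in $\mathbf{P}^{\mathrm{sw}}$ there are $b$ voters with $x\succ y$ and $a$ voters with $y\succ x$. Since $\mathbf{P}^{\mathrm{sw}}$ is obtained from $\mathbf{P}$ via the candidate bijection $\pi$ swapping $x$ and $y$, Neutrality gives $(x,y)\in f(\mathbf{P})$ iff $(y,x)\in f(\mathbf{P}^{\mathrm{sw}})$ in the VCCR case, and $x\in F(\mathbf{P})$ iff $y\in F(\mathbf{P}^{\mathrm{sw}})$ in the VSCC case. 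In either case the hypothesis yields that $y$ ``defeats'' $x$ in $\mathbf{P}^{\mathrm{sw}}$, i.e.\ $(y,x)\in f(\mathbf{P}^{\mathrm{sw}})$ resp.\ $y\in F(\mathbf{P}^{\mathrm{sw}})$.

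Next I transform $\mathbf{P}^{\mathrm{sw}}$ into a profile $\mathbf{P}^{*}$ with the same anonymized form as $\mathbf{P}$ by flipping $b-a\ (\ge 0)$ of the $x\succ y$ voters of $\mathbf{P}^{\mathrm{sw}}$ to $y\succ x$, one voter at a time; $\mathbf{P}^{*}$ then has $a$ voters with $x\succ y$ and $b$ voters with $y\succ x$. Each single flip is precisely an instance of ``moving $y$ up one place'' in one voter's ballot, so Monotonicity for two-candidate profiles (with $y$ playing the role of the moved candidate) preserves, at each step, the statement $(y,x)\in f(\cdot)$ resp.\ $y\in F(\cdot)$; hence $(y,x)\in f(\mathbf{P}^{*})$ resp.\ $y\in F(\mathbf{P}^{*})$. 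Since $\mathbf{P}^{*}$ and $\mathbf{P}$ have the same multiset of ballots over the same voter set, Anonymity gives $f(\mathbf{P}^{*})=f(\mathbf{P})$ resp.\ $F(\mathbf{P}^{*})=F(\mathbf{P})$, so $(y,x)\in f(\mathbf{P})$ resp.\ $y\in F(\mathbf{P})$. In the VCCR case this contradicts asymmetry of $f(\mathbf{P})$ together with $(x,y)\in f(\mathbf{P})$; in the VSCC case it contradicts $y\notin F(\mathbf{P})$. The subcase $\margin_{\mathbf{P}}(x,y)=0$ is covered automatically: then $b-a=0$, so $\mathbf{P}^{*}=\mathbf{P}^{\mathrm{sw}}$ and no monotonicity step is needed.

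I expect no genuine obstacle --- this is the standard symmetry argument (cf.\ \citealt[Lemma~4.2]{HP2021}). The only places needing minor care are (i) keeping the voter set fixed throughout, so that Neutrality and Anonymity apply exactly as stated, and (ii) the direction of Monotonicity: it is $y$, not $x$, that gets moved up in passing from $\mathbf{P}^{\mathrm{sw}}$ toward $\mathbf{P}$, which is what lets us propagate ``$y$ defeats $x$'' / ``$y\in F$'' rather than the hypothesis about $x$.
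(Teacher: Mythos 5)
Your proof is correct and is exactly the ``standard argument using symmetries'' that the paper invokes (citing Lemma~4.2 of Holliday and Pacuit 2021) without spelling out: Neutrality applied to the swap of $x$ and $y$, Monotonicity for two-candidate profiles to flip voters one at a time, and Anonymity to identify the resulting profile with the original, yielding a contradiction with asymmetry of $f(\mathbf{P})$ (resp.\ with $y\notin F(\mathbf{P})$). The handling of the zero-margin case and of the direction of Monotonicity (moving $y$ up) is exactly as needed, so no gaps.
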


Now we prove the analogue for VSCCs of Theorem \ref{Thm:HP}.

\begin{theorem}
  \label{thm:no-more-resolute-vscc}
  Let $SC$ be the Split Cycle VSCC and $F$ a VSCC satisfying Anonymity,
  Neutrality, (Upward) Homogeneity, Neutral Reversal, Monotonicity
  (for two-candidate profiles), and Coherent IIA. Then $SC$ refines $F$: for any profile $\mathbf{P}$, $SC(\mathbf{P}) \subseteq
  F(\mathbf{P})$.
\end{theorem}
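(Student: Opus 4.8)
The plan is to transcribe Holliday and Pacuit's argument for Theorem~\ref{Thm:HP} into the VSCC setting, replacing its two VCCR-specific moves — ``start from the hypothesis that $x$ defeats $y$'' and ``derive a contradiction with Availability plus asymmetry of the defeat relation'' — by ``use Coherent IIA for VSCCs to locate a suitable $x$'' and ``derive a contradiction with the non-emptiness of $F$''. I argue the contrapositive: assume $y \notin F(\mathbf{P})$; I must show $y \notin SC(\mathbf{P})$, i.e.\ that $y$ is Split~Cycle--defeated by some candidate. By Coherent IIA for VSCCs there is $x \in X(\mathbf{P})$ with $y \notin F(\mathbf{P}')$ for every $\mathbf{P}'$ such that $\mathbf{P} \rightsquigarrow_{x,y} \mathbf{P}'$. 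Applying this to $\mathbf{P}' = \mathbf{P}_{\mid\{x,y\}}$ and invoking Lemma~\ref{lem:majority-defeat} (Majority Defeat for two-candidate profiles, which follows from Anonymity, Neutrality, and Monotonicity for two-candidate profiles) yields $\margin_{\mathbf{P}}(x,y) = k > 0$. By Lemma~\ref{lem:restrict-to-k} it then suffices to prove that there is no majority path from $y$ to $x$ in $\mathcal{M}(\mathbf{P})\uph k$.

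Suppose for contradiction such a path exists; take it simple, say $\rho = (y, z_1, \dots, z_m, x)$ with every edge of weight $\ge k$, so that together with the edge from $x$ to $y$ of weight $k$ we have a simple cycle on $n := m + 2 \ge 3$ distinct candidates. The idea is to use the $\rightsquigarrow_{x,y}$-reduction licensed by Coherent IIA to collapse $\mathbf{P}$ to a profile $\mathbf{P}'$ whose margin graph is the ``clean'' directed $n$-cycle $x \to y \to z_1 \to \cdots \to z_m \to x$ with all weights equal to $k$ — delete every candidate off the cycle, delete every chord, and lower each edge of $\rho$ to weight exactly $k$, the edge from $x$ to $y$ being left untouched — and moreover to choose $\mathbf{P}'$ so that it is invariant, as an anonymized profile, under the cyclic relabeling $\phi$ of the $n$ cycle-candidates. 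Granting such a $\mathbf{P}'$: since $\mathbf{P} \rightsquigarrow_{x,y} \mathbf{P}'$ we have $y \notin F(\mathbf{P}')$; but by Anonymity and Neutrality $F(\mathbf{P}')$ is $\phi$-invariant, and $\langle\phi\rangle$ acts transitively on $X(\mathbf{P}')$, so $F(\mathbf{P}') \in \{\varnothing, X(\mathbf{P}')\}$, forcing $F(\mathbf{P}') = X(\mathbf{P}') \ni y$ — a contradiction. Hence no such path exists, $(x,y) \in sc(\mathbf{P})$, and $y \notin SC(\mathbf{P})$, as desired.

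I expect the main obstacle to be producing the reduced profile $\mathbf{P}'$ with all three required properties at once: its margin graph must be reachable from $\mathcal{M}(\mathbf{P})$ by the moves allowed in $\rightsquigarrow_{x,y}$; it must be a \emph{linear} profile on the \emph{same} voter set as $\mathbf{P}$ whose restriction to $\{x,y\}$ agrees with $\mathbf{P}_{\mid\{x,y\}}$; and it must be $\phi$-symmetric. (Upward) Homogeneity is the tool that makes this possible: replacing $\mathbf{P}$ by a large power-of-two multiple $2^j\mathbf{P}$ keeps $y$ out of $F$ (Upward Homogeneity of $F$), rescales $k$ and every path strength by $2^j$ while preserving which majority paths exist, and in particular makes $k$ even and arbitrarily large, which is what is needed for a $\phi$-symmetric linear realization of the clean $n$-cycle of weight $k$ to exist — such a realization can be assembled from $\phi$-orbits of suitably chosen ballots (for instance the orbit of $v_1v_2\cdots v_n$, whose off-cycle margin contributions are cancelled by the orbit of a reversed ``distance-two'' ballot), with Debord's theorem and Neutral Reversal used to match voter counts and to align the $\{x,y\}$-restriction. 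Everything else is routine: the reduction itself is just repeated application of the defining moves of $\rightsquigarrow_{x,y}$, and the role played in the VCCR proof by ``$\overline{f}(\mathbf{P}')$ is nonempty'' (Availability) is here played by ``$F(\mathbf{P}')$ is nonempty and $\phi$-invariant''.
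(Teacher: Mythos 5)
Your overall strategy is the paper's own: use Coherent IIA for VSCCs to locate a candidate $x$, get $\margin_{\mathbf{P}}(x,y)>0$ from the two-candidate restriction together with Lemma~\ref{lem:majority-defeat}, and then, assuming a majority path from $y$ back to $x$ all of whose margins are at least $\margin_{\mathbf{P}}(x,y)$, collapse via $\rightsquigarrow_{x,y}$ to a rotationally symmetric pure-cycle profile and contradict nonemptiness of $F$ using Anonymity and Neutrality. However, the step you yourself flag as the main obstacle is resolved in a way that does not work as written, and this is exactly where the paper's proof is different in its details. First, there is a quantifier-order problem: the witness $x$ supplied by Coherent IIA is tied to the profile to which the axiom is applied, and reductions of $2^j\mathbf{P}$ are not reductions of $\mathbf{P}$, since $\mathbf{P}\rightsquigarrow_{x,y}\mathbf{P}'$ forces $V(\mathbf{P}')=V(\mathbf{P})$. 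So all padding must be done \emph{before} invoking Coherent IIA (this is harmless: Upward Homogeneity preserves losing, and a uniform bound $k\le M$, cycle length $\le |X(\mathbf{P})|$ lets you pad once and for all), whereas your sketch fixes $x$ on $\mathbf{P}$ and rescales afterwards.

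Second, and more substantively, Upward Homogeneity cannot supply the resources the construction needs. The symmetric realization of the clean cycle must be assembled from voters whose individual $x$-versus-$y$ rankings match the ballots assigned to them, and any such realization of a cycle with all margins $m$ requires numbers of voters of each orientation that grow linearly in $m$ (times the cycle length, for the McGarvey-style core the paper uses). Doubling the profile multiplies the margin and the voter counts by the same factor, so the shortfall ratio is scale-invariant: if $\mathbf{P}$ lacks enough voters of the needed orientations, so does every $2^j\mathbf{P}$. What actually creates slack is adding converse-ballot pairs, which increase $|V|$ without changing any margin; the paper pads with such pairs until $|V(\mathbf{P}')|\ge M\cdot|X(\mathbf{P})|$ and only then applies Coherent IIA. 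Relatedly, the reachable profile cannot in general itself be $\phi$-symmetric, since its voter set and its $\{x,y\}$-column are dictated by $\rightsquigarrow_{x,y}$; the correct move (and the paper's) is to take the reachable profile $\mathbf{Q}'$ to be a $\sigma$-symmetric core $\mathbf{Q}$, built only from voters whose $x$-versus-$y$ orientation matches, plus the leftover voters given converse ballots in pairs, and then to use Neutral Reversal to transfer $y\notin F(\mathbf{Q}')$ to $y\notin F(\mathbf{Q})$ before running the symmetry argument. With these repairs your argument becomes precisely the paper's proof of Theorem~\ref{thm:no-more-resolute-vscc}.
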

\begin{proof}
  The proof is only slightly different than the corresponding one in
  \citealt{HP2021}. Let $\mathbf{P}$ be any profile and $y \in SC(\mathbf{P})$. Now
  we need to show that $y \in F(\mathbf{P})$. Using Upward Homogeneity,
  $F(\mathbf{P}) \supseteq F(2\mathbf{P})$. Hence we only need to show that $y
  \in F(2\mathbf{P})$. Since $SC(\mathbf{P}) = SC(2\mathbf{P})$, for notational
  convenience, let us assume from now on that $\mathbf{P}$ has an even number of
  voters and $y \in SC(\mathbf{P})$.

  Let $M$ be the largest margin of $\mathbf{P}$. Since $\mathbf{P}$ is a linear profile, $M$ 
   is even. Let $\mathbf{P}'$ be the result of adding to $\mathbf{P}$ enough
  pairs of voters with converse ballots so that $|V(\mathbf{P}')| \ge M \cdot
  |X(\mathbf{P})|$. Since adding these pairs of voters does not change the margins
  or the number of candidates, $M$ is also the largest margin of $\mathbf{P}'$ and 
  $|V(\mathbf{P}')| \ge M \cdot |X(\mathbf{P}')|$.

  By Neutral Reversal, it is enough to show that $y \in F(\mathbf{P}')$. Toward
  a contradiction, suppose that $y \not\in F(\mathbf{P}')$. Then by Coherent
  IIA, there is an $x \in X(\mathbf{P}') = X(\mathbf{P})$ such that for any profile
  $\mathbf{Q}'$ with $\mathbf{P}' \rightsquigarrow_{x, y} \mathbf{Q}'$, $y
  \not\in F(\mathbf{Q}')$. Consider the two-candidate profile $\mathbf{Q}' =
  \mathbf{P}'_{|\{x, y\}}$. Clearly $\mathbf{P}' \rightsquigarrow_{x, y} \mathbf{Q}'$, and
  by assumption $y \not\in F(\mathbf{Q}')$. Thus, by Lemma
  \ref{lem:majority-defeat}, $\margin_{\mathbf{P}'}(x, y) =
  \margin_{\mathbf{Q}'}(x, y) > 0$.

  Let $m = \margin_{\mathbf{P}'}(x, y) > 0$, which is even since
  $|V(\mathbf{P}')|$ is even. Since $y \in SC(\mathbf{P}')$, there must be a
  majority cycle $( x, y, z_1, z_2, \cdots, z_n, x )$ such that all
  the margins of the consecutive edges are at least $m$. Now construct
  $\mathbf{Q}$ and $\mathbf{Q}'$ as in \citealt{HP2021}: 
\renewcommand{\arraystretch}{1.7}
\begin{figure}[t]
  \begin{center}
\begin{tabular}{cc|cc|cc|c|cc}
 $m/2$ & $m/2$ & $m/2$ & $m/2$ & $m/2$ & $m/2$ & $\cdots$ & $m/2$ & $m/2$    \\\hline
$\boldsymbol{x}$ & $z_n$ & $\boldsymbol{y}$ & $x$ & $\boldsymbol{z_1}$ & $y$ & $\cdots$ & $ \boldsymbol{z_n}$ & $z_{n-1}$      \\
$\boldsymbol{y}$ &  $\vdots$ & $\boldsymbol{z_1}$ & $z_n$ &  $\boldsymbol{z_2}$ & $x$ & $\cdots$ & $\boldsymbol{x}$ & $\vdots$   \\
$z_1$ &  $\vdots$ & $z_2$ & $\vdots$ & $\vdots$ & $z_n$ & $\cdots$ & $y$ & $z_1$    \\
$\vdots$ &  $z_1$ & $\vdots$ & $z_2$ & $z_n$ & $\vdots$ & $\cdots$ & $z_1$ & $y$   \\
$\vdots$ &  $\boldsymbol{x}$ & $z_n$ & $\boldsymbol{y}$ & $x$ & $\boldsymbol{z_1}$ & $\cdots$ & $\vdots$ & $ \boldsymbol{z_n}$   \\
$z_n$ &  $\boldsymbol{y}$ & $x$ & $\boldsymbol{z_1}$ & $y$ & $\boldsymbol{z_2}$ & $\cdots$ & $z_{n-1}$ & $\boldsymbol{x}$   \\
\end{tabular}
\end{center}
\caption{the profile $\mathbf{Q}$.} \label{McGarvey}
\end{figure}
  \begin{itemize}
  \item First, take $m$ voters in $\mathbf{P}'$ who rank $x$ above $y$, and let
    half of them vote $x \succ y \succ z_1 \succ z_2 \succ \cdots \succ z_n$
    and the other half vote $z_n \succ z_{n-1} \succ \cdots \succ z_1 \succ x
    \succ y$. Since $m = \margin_{\mathbf{P}'}(x, y) = \margin_{\mathbf{P}}(x,
    y)$, the required $m$ voters can be found.
  \item The rest of the voters can then be split evenly according to whether
    they rank $x$ above $y$ or $y$ above $x$. So take $m/2$ fresh voters in
    $\mathbf{P}'$ ranking $y$ above $x$ and let them vote $y \succ z_1 \succ z_2
    \succ \cdots \succ z_n \succ x$. Then take $m/2$ fresh voters in
    $\mathbf{P}'$ ranking $x$ above $y$ and let them vote $x \succ z_n \cdots
    \succ z_2 \succ y \succ z_1$.
  \item For any $i = 1, 2, \cdots, n-1$, first take $m/2$ fresh voters in
    $\mathbf{P}'$ ranking $x$ above $y$ and let them vote $z_i \succ z_{i+1}
    \succ z_{i+2} \succ \cdots \succ z_n \succ x \succ y \succ z_1 \succ \cdots
    \succ z_{i-1}$. Then take $m/2$ fresh voters in $\mathbf{P}'$ ranking $y$
    above $x$ and let them vote $z_{i-1} \succ \cdots \succ z_1 \succ y \succ x
    \succ z_n \succ \cdots \succ z_{i+2} \succ z_i \succ z_{i+1}$.
  \item Take $m/2$ fresh voters in $\mathbf{P}'$ ranking $x$ above $y$ and let
    them vote $z_n \succ x \succ y \succ z_1 \succ \cdots \succ z_{n-1}$. Then
    take $m/2$ fresh voters in $\mathbf{P}'$ ranking $y$ above $x$ and let them
    vote $z_{n-1} \succ \cdots \succ z_1 \succ y \succ z_n \succ x$.
  \item The above uses $(n + 2) \cdot m$ many voters in $\mathbf{P}'$. Since $n
    + 2 \le X(\mathbf{P}')$, $m \le M$, and $V(\mathbf{P}') \ge M \cdot
    |X(\mathbf{P}')|$, we have enough voters for the above construction. Thus,
    let $\mathbf{Q}$ be the profile using the voters used above with their
    ballots also specified as above. Figure \ref{McGarvey} gives the anonymized
    summary of profile $\mathbf{Q}$. There may be unused voters in
    $\mathbf{P}'$. That is, $V(\mathbf{P}') \setminus V(\mathbf{Q})$ may be
    non-empty. However, they must come in pairs with respect to whether they
    rank $x$ above $y$ or $y$ above $x$, since $\margin_{\mathbf{Q}}(x, y) =
    \margin_{\mathbf{P}'}(x, y)$. Thus pick a ballot $L \in \mathcal{L}(\{x, y, z_1,
    \cdots, z_n\})$ with $x L y$, and construct $\mathbf{Q}'$ by adding to
    $\mathbf{Q}$ the voters in $X(\mathbf{P}')\setminus X(\mathbf{Q})$ and let
    them vote $L$ if they rank $x$ above $y$ in $\mathbf{P}'$ and $L^{-1}$
    otherwise.
    \end{itemize}
    Observe now that $\mathbf{P}' \rightsquigarrow_{x, y} \mathbf{Q}'$ since the
    margin graph of $\mathbf{Q}'$ is a pure cycle with $x, y, z_1, z_2, \cdots,
    z_n$ with each consecutive edge's margin being $m$, which is no greater
    than their original margin in $\mathbf{P}'$. Thus, by Coherent IIA, $y
    \not\in F(\mathbf{Q}')$. Since $\mathbf{Q}'$ is the result of adding
    (possibly zero) reversal pairs to $\mathbf{Q}$, $y \not\in F(\mathbf{Q})$.
    However, using the symmetries of the profile $\mathbf{Q}$, by Anonymity and
    Neutrality, $x$ and the $z_i$'s are also not in $F(\mathbf{Q})$, making $F(\mathbf{Q}) = \varnothing$. (This is the same argument as in \citealt{HP2021}; if we
    consider the rotation $\sigma$ of candidates with $\sigma(x) = y, \sigma(y)
    = z_1, \sigma(z_i) = z_{i+1}, $ and $\sigma(z_n) = x$, then applying
    $\sigma$ to $\mathbf{Q}$ results in the same anonymized profile. See Figure
    \ref{McGarveyPermuted}.) This contradicts the assumption that $F$ is a VSCC.    
    \end{proof}

\begin{figure}[t]
  \begin{center}
\begin{tabular}{cc|cc|cc|c|cc}
 $m/2$ & $m/2$ & $m/2$ & $m/2$ & $m/2$ & $m/2$ & $\cdots$ & $m/2$ & $m/2$    \\\hline
  $\boldsymbol{y}$ & $x$ & $\boldsymbol{z_1}$ & $y$ & $\boldsymbol{z}_2$  & $z_1$ & $\cdots$ & $\boldsymbol{x}$ & $z_n$      \\
  $\boldsymbol{z_1}$ & $z_n$ &  $\boldsymbol{z_2}$ & $x$ & $\boldsymbol{z}_3$ & $y$ & $\cdots$ & $\boldsymbol{y}$ &  $\vdots$   \\
 $z_2$ & $\vdots$ & $\vdots$ & $z_n$ & $\vdots$ & $x$  & $\cdots$  & $z_1$ &  $\vdots$    \\
  $\vdots$ & $z_2$ & $z_n$ & $\vdots$ & $z_n$ & $\vdots$  & $\cdots$  & $\vdots$ &  $z_1$  \\
  $z_n$ & $\boldsymbol{y}$ & $x$ & $\boldsymbol{z_1}$ & $y$ & $\boldsymbol{z}_2$ & $\cdots$  & $\vdots$ &  $\boldsymbol{x}$   \\
  $x$ & $\boldsymbol{z_1}$ & $y$ & $\boldsymbol{z_2}$ & $z_1$ & $\boldsymbol{z}_3$  & $\cdots$ & $z_n$ &  $\boldsymbol{y}$   \\
\end{tabular}
\end{center} 
\caption{the profile $\sigma(\mathbf{Q})$.} \label{McGarveyPermuted}
\end{figure}

Now we prove the other direction.
\begin{theorem}
  \label{thm:at-least-as-resolute-vscc}
  Let $SC$ be the Split Cycle VSCC and $F$ any VSCC satisfying Coherent Defeat 
  and Tolerant Positive Involvement. Then $F$ refines $SC$:
  for any profile $\mathbf{P}$, $F(\mathbf{P}) \subseteq SC(\mathbf{P})$.
\end{theorem}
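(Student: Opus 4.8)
The plan is to prove the contrapositive, closely mirroring the proof of the VCCR analogue (Theorem \ref{thm:at-least-as-resolute-vccr}): by adding cleverly chosen ballots, reduce the problem to a profile on which Coherent Defeat alone decides the relevant non-membership, and then propagate that non-membership back to the original profile using Tolerant Positive Involvement in contrapositive form. Concretely, fix a profile $\mathbf{P}$ and a candidate $x$ with $x \notin SC(\mathbf{P})$, so that $(y,x) \in sc(\mathbf{P})$ for some $y$; I must show $x \notin F(\mathbf{P})$. Write $k = \margin_{\mathbf{P}}(y,x) > 0$.

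For the base case, if $k \le 2$ then by the parity constraint of Lemma \ref{ParityLem} (combined with Lemma \ref{lem:restrict-to-k}, or clause~4 of Lemma \ref{EquivDef}) every edge of $\mathcal{M}(\mathbf{P})$ has weight at least $k$, so the fact that $x$ is unreachable from $y$ in $\mathcal{M}(\mathbf{P})\uph k$ means there is no majority path from $x$ to $y$ at all; Coherent Defeat then gives $x \notin F(\mathbf{P})$. For $k > 2$, I would set $\mathbf{P}_0 = \mathbf{P}$ and, for $i = 0, \dots, k-3$, apply Lemma \ref{lem:add-ballot} to $\mathbf{P}_i$ with the distinguished candidates $y$ and $x$ taking the roles of ``$x$'' and ``$y$'' of that lemma, respectively, which is legitimate because, inductively, $(y,x)\in sc(\mathbf{P}_i)$ and $\margin_{\mathbf{P}_i}(y,x) = k - i > 2$; this yields a ballot $L_{i+1}$, and I set $\mathbf{P}_{i+1} = \mathbf{P}_i + L_{i+1}$. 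The lemma guarantees that $L_{i+1}$ ranks $x$ above $y$ (so $\margin_{\mathbf{P}_{i+1}}(y,x) = k-i-1$) and that $(y,x) \in sc(\mathbf{P}_{i+1})$, keeping the induction alive. After $k-2$ steps we reach $\mathbf{P}_{k-2}$ with $(y,x) \in sc(\mathbf{P}_{k-2})$ and $\margin_{\mathbf{P}_{k-2}}(y,x) = 2$; the base-case parity argument applies verbatim (each $\mathbf{P}_i$ is still a linear profile, hence subject to Lemma \ref{ParityLem}), so there is no majority path from $x$ to $y$ in $\mathbf{P}_{k-2}$ and Coherent Defeat gives $x \notin F(\mathbf{P}_{k-2})$. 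The crucial observation is that Lemma \ref{lem:add-ballot} also makes $L_{i+1}$ rank $x$ above every candidate $z$ with $\margin_{\mathbf{P}_i}(x,z) \le 0$, i.e., above every candidate to which $x$ is not majority preferred in $\mathbf{P}_i$ --- which is precisely the hypothesis on the new ballot in Tolerant Positive Involvement with $x$ as the tracked candidate. Hence Tolerant Positive Involvement, contrapositively, gives $x \notin F(\mathbf{P}_{i+1}) \Rightarrow x \notin F(\mathbf{P}_i)$, and descending induction from $i = k-2$ to $i = 0$ yields $x \notin F(\mathbf{P})$.

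The main obstacle is not conceptual but careful bookkeeping: one must keep straight the interchange of the roles of ``defeater'' and ``defeated'' between the statement of Lemma \ref{lem:add-ballot} and its use here, and check that the ballot the lemma produces matches the precondition of Tolerant Positive Involvement exactly --- it does, since that lemma ranks the defeated candidate above precisely the alternatives it does not beat head-to-head. One should also note that, although the profiles $\mathbf{P}_i$ have voter counts of alternating parity, this causes no trouble: each $\mathbf{P}_i$ is a linear profile, so all its margins share the parity of $\margin_{\mathbf{P}_i}(y,x)$, which is all that the parity argument requires.
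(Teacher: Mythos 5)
Your proposal is correct and takes essentially the same route as the paper's proof: argue the contrapositive, dispose of the case $\margin \le 2$ by the parity constraint plus Coherent Defeat, then repeatedly apply Lemma \ref{lem:add-ballot} (with the defeated candidate ranked above everything it does not beat head-to-head, exactly matching the hypothesis of Tolerant Positive Involvement) and finish by backward induction. The only blemish is a phrasing slip in your base case: what Lemma \ref{lem:restrict-to-k} gives is that the defeater $y$ is unreachable \emph{from} the defeated $x$ in $\mathcal{M}(\mathbf{P})\uph k$ (no path from $x$ to $y$), which is the premise your conclusion correctly uses.
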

\begin{proof}
  We prove the contrapositive. Suppose that $y \not\in SC(\mathbf{P})$. Then there is
  an $x \in X(\mathbf{P})$ such that $(x, y) \in sc(\mathbf{P})$, where $sc$ is the
  Split Cycle VCCR. Let $k = \margin_{\mathbf{P}}(x, y)$ and $\mathcal{M} =
  \mathcal{M}(\mathbf{P})$. If $k \le 2$, then since $\mathbf{P}$ is a linear profile, $k = 2$ and there cannot be a
  majority path from $y$ to $x$. Then by Coherent Defeat, $y \not\in
  F(\mathbf{P})$, and we are done. If $k > 2$, then we use Lemma
  \ref{lem:add-ballot} again. Inductively construct $\mathbf{P}_0, \mathbf{P}_1,
  \cdots, \mathbf{P}_{k-2}$ and $L_1, L_2, \cdots, L_{k-2}$ where
  $\mathbf{P}_{i+1} = \mathbf{P}_i + L_{i+1}$ and $L_{i+1}$ is obtained by Lemma
  \ref{lem:add-ballot} applied to $\mathbf{P}_i$. Then according to the lemma
  and a simple induction:
  \begin{itemize}
  \item $y$ is ranked by $L_i$ above any candidate whom $y$ does not beat
    head-to-head in $\mathbf{P}_{i-1}$;
  \item then by Tolerant Positive Involvement, if $y \not\in
    F(\mathbf{P}_{i+1})$, then $y \not\in F(\mathbf{P}_i)$ for each $i = 0,
    \ldots, k-3$;
  \item $(x, y)$ is in each $sc(\mathbf{P}_i)$, and in particular, $(x, y) \in
    sc(\mathbf{P}_{k-2})$;
  \item $\margin_{\mathbf{P}_{k-2}}(x,y) = 2$.
  \end{itemize}
  Since $\margin_{\mathbf{P}_{k-2}}(x,y) = 2$ and $(x, y) \in
  sc(\mathbf{P}_{k-2})$, there cannot be a majority path from $y$ to $x$. Then
  by Coherent Defeat, $y \not\in F(\mathbf{P}_{k-2})$. Then, by backward
  induction from $k-2$ to $0$, $y \not\in F(\mathbf{P}_0) = F(\mathbf{P})$.\end{proof}

Combining Theorems \ref{thm:no-more-resolute-vscc} and
\ref{thm:at-least-as-resolute-vscc}, the main theorem Theorem
\ref{thm:axiomatization-vscc} follows.

\section{Ballots with Ties}\label{Sec:Ties}
In many applications of voting methods, it is unreasonable to expect voters to
submit a linear ordering of all candidates, since the voter may lack information
or willingness to make strict comparison between all pairs of candidates.
Mathematically, this can be accommodated by using weak orders instead of linear
orders, where candidates are first put into equivalence classes of tied candidates, and then the equivalence classes are linearly ordered. In this
section, we show that a slight modification of the above axioms also axiomatizes
Split Cycle in this generalized framework.

For any $X \subseteq \mathcal{X}$, let $\mathcal{W}(X)$ be the set of all strict
weak orders on $X$. Recall that $\succ$ is a strict weak order on $X$ iff
$\succ$ is irreflexive, transitive, and \emph{negatively transitive}: for all
$x, y, z \in X$, if $x \not \succ y$ and $y \not \succ z$, then $x \not \succ
z$. Given a strict weak order $\succ$ on $X$, we say that $x, y \in X$ are in a
\emph{tie}, written $x \sim y$, if $x \not\succ y$ and $y \not\succ x$. The
relation $\succ$ being a strict weak order guarantees that being in a tie is an
equivalence relation, and note that the empty set $\varnothing$ is a strict
weak order where all candidates are in a big tie. Now a \emph{profile} is a
function $\mathbf{P}: V \to \mathcal{W}(X)$ for some nonempty finite $V \subset
\mathcal{V}$ and $X \subset \mathcal{X}$. Given a nonempty finite set $X$, we typically
specify a strict weak order ${\succ} \in \mathcal{W}(X)$ by the notation
\begin{align*}
  P_1 \succ P_2 \succ \cdots \succ P_m
\end{align*}
where $\{P_1, P_2, \cdots, P_m\}$ is a partition of $X$; e.g., where
$X=\{a,b,c,d,e\}$, we may write $\{c,e\} \succ \{b\} \succ \{a,d\}$. This means
that $\{P_1, P_2, \cdots, P_m\}$ is the set of all $\sim$-equivalence classes
(groups that are tied internally) and for any $i < j$, $x \in P_i$, and $y \in
P_j$, $x \succ y$; these two conditions uniquely determine a strict weak
ordering of $X$. When some $P_i$ is a singleton $\{x\}$, we may write `$\cdots
\succ x \succ \cdots$' instead of `$\cdots \succ \{x\} \succ \cdots$'.

In this section, all notions previously defined on linear profiles are now
defined relative to all profiles, including the definitions of VCCR and VSCC.
Thus, for example, a VCCR must return an asymmetric relation for any profile of
strict weak orders. In some previous definitions, we used the word `above'
applied to places in a ballot. Those uses are intended for strict preference.
Thus, when understanding previous definitions in the context of profiles of
strict weak orders, `$x$ is above $y$ in a ballot $\succ$' is understood as $x
\succ y$, even though in this section we will typically say `strictly above'.
Moreover, the concept of margin is now relative to strict weak orders and is
still counting and comparing only strict preferences; the margin of $x$ over $y$
does not depend directly on how many voters put $x$ or $y$ in a tie. As before,
we say `$x$ is majority preferred to $y$' when the margin of $x$ over $y$ is
greater than $0$, but because of the possibility of ties, the number of voters
putting $x$ strictly above $y$ may not really be a majority among all voters.
But we take this as only a minor verbal inconvenience, as it can be understood
that `majority' here is relative to those voters who express a strict preference
between $x$ and $y$.

When defining the axiom of Monotonicity, we used the notion of `moving a
candidate up one place in a ballot'. To make this precise for strict weak
orders, we use the following definition.

\begin{definition}
  Let $X$ be a finite set, $x \in X$, and $\succ, \succ'$ two strict weak orders
  on $X$ where $x$ is not the greatest element in $\succ$, i.e., there is $x'
  \in X \setminus \{x\}$ such that $x \not\succ x'$. Since $X$ is finite, we can
  find a minimal (relative to $\succ$) $x' \not= x$ such that $x \not \succ x'$.
  We say that $\succ'$ is the result of \textit{moving $x$ up one place in
    $\succ$} if
  \begin{itemize}
  \item for any $y, z \in X \setminus \{x\}$, $y \succ z$ iff $y \succ' z$;
  \item if $x$ is in a tie with $x'$ in $\succ$, then in $\succ'$, $x$ is not in
    a tie with any other element, $x \succ x'$, and for any $y \in X \setminus
    \{x\}$, if $y \succ x$ then $y \succ' x$;
  \item if $x$ is not in a tie (relative to $\succ$) with any other element,
    then $x$ is in a tie with $x'$ relative to $\succ'$.
  \end{itemize}
\end{definition}
\noindent Thus, intuitively, to move $x$ up one place, we either break $x$ from a tie so
that $x$ is immediately above those with whom $x$ previously tied, or we 
merge $x$ into the tied group that was immediately above $x$ if $x$ was not in a tie.
Then Lemma \ref{lem:majority-defeat} holds under this definition of `moving up one place'.

Still, the generalization to strict weak orders requires a new axiom. As we
noted, the empty set that puts every candidate in a single tie is a legal
ballot, and intuitively such a ballot should not affect the outcome of an
election. We codify this requirement as an axiom.
\begin{definition}
  A VCCR or VSCC $f$ satisfies \textit{Neutral Indifference} if for any profile
  $\mathbf{P}$ and $\mathbf{P}'$ obtained from $\mathbf{P}$ by adding a voter
  whose ballot is the empty set, we have $f(\mathbf{P}) = f(\mathbf{P}')$.
\end{definition} 
\noindent While Neutral Indifference is similar to Neutral Reversal, it does not follow
from Neutral Reversal without enough help from other axioms. For example, we can
easily devise VSCCs satisfying Anonymity, Neutrality, and Neutral Reversal by
running different rules depending on the parity of the number of voters.
Full Homogeneity (requiring precisely the same output for $\mathbf{P}$ and $2\mathbf{P}$, in contrast to Upward Homogeneity) is one axiom that can reduce Neutral Indifference to
Neutral Reversal.
\begin{proposition}\label{FullHomogeneity}
  Any VCCR or VSCC $f$ satisfying Neutral Reversal and Homogeneity
  satisfies Neutral Indifference.
\end{proposition}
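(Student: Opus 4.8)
The plan is to reduce Neutral Indifference to a single application of Neutral Reversal after first passing to the doubled profile. Let $\mathbf{P}'$ be obtained from $\mathbf{P}$ by adding one voter $v$ whose ballot is the empty set $\varnothing$ (the strict weak order putting all candidates in one tie). The key observation is that $\varnothing$, viewed as a strict weak order, is its own converse: the converse of a relation $R$ is $\{(b,a)\mid (a,b)\in R\}$, which equals $\varnothing$ when $R=\varnothing$. Hence \emph{two} voters carrying the empty ballot form a pair of voters whose ballots are converses of each other, precisely the configuration to which Neutral Reversal applies. Homogeneity will be used to manufacture the second empty-ballot voter.

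Concretely, I would argue as follows. First, apply Homogeneity to $\mathbf{P}'$ to get $f(\mathbf{P}') = f(2\mathbf{P}')$. Now $2\mathbf{P}'$ consists of two copies of each voter of $\mathbf{P}$ together with two copies of $v$, both of which carry the empty ballot; let $\mathbf{Q}$ be the profile obtained from $2\mathbf{P}'$ by deleting the two copies of $v$. Then $2\mathbf{P}'$ is obtained from $\mathbf{Q}$ by adjoining a reversal pair (the two empty-ballot voters), so Neutral Reversal gives $f(2\mathbf{P}') = f(\mathbf{Q})$. But $\mathbf{Q}$ is two copies of each voter of $\mathbf{P}$, i.e.\ $\mathbf{Q}=2\mathbf{P}$, so Homogeneity again gives $f(\mathbf{Q}) = f(\mathbf{P})$. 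Chaining the three equalities yields $f(\mathbf{P}') = f(2\mathbf{P}') = f(\mathbf{Q}) = f(\mathbf{P})$, which is exactly Neutral Indifference. Nothing in the argument depends on whether $f$ is a VCCR or a VSCC, since Neutral Reversal, Homogeneity, and Neutral Indifference are stated identically in both settings; the same line of reasoning serves both.

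I do not expect a genuine obstacle here; the only point requiring a word of care is the bookkeeping in the middle step, namely that the doubling operation commutes with the addition of a voter, so that $2\mathbf{P}'$ really does decompose as ``$2\mathbf{P}$ with two empty-ballot voters adjoined.'' This is immediate from the definition of $2(\cdot)$ as replacing every voter — including the newly added empty-ballot voter — by two copies of itself, so that the copies of $v$ are fresh relative to $\mathbf{Q}$ and $\mathbf{Q}$ is itself a two-fold copy of $\mathbf{P}$. In short: Homogeneity ``creates room'' for a reversal pair, Neutral Reversal eliminates it, and a second use of Homogeneity returns to the original profile, with the self-converse property of the empty ballot being the one special feature that makes the reversal step legal.
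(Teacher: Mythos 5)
Your proof is correct and is essentially the paper's own argument, run in the reverse direction: both exploit that the empty ballot is its own converse, use Homogeneity twice (on $\mathbf{P}$ and on $\mathbf{P}+\varnothing$), and use Neutral Reversal once in the doubled profile, where the single empty-ballot voter becomes a reversal pair. No substantive difference.
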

\begin{proof}
  Let $\mathbf{P}: V \to \mathcal{W}(X)$ be any profile,
  $\mathbf{P}+\varnothing$ a profile obtained by adding a voter $v$ with
  $\varnothing$ as her ballot to $\mathbf{P}$, and $f$ a VCCR or VSCC satisfying
  Neutral Reversal and Homogeneity. Note that we can first double $\mathbf{P}$
  to $2\mathbf{P}$ (not using $v$) and obtain
  $2\mathbf{P}+\varnothing+\varnothing$ by adding $v$ and another $v'$ with
  $\varnothing$ as both of their ballots. Now by Homogeneity, $f(\mathbf{P}) =
  f(2\mathbf{P})$. By Neutral Reversal and noting that $\varnothing =
  \varnothing^{-1}$, $f(2\mathbf{P}) = f(2\mathbf{P} + \varnothing +
  \varnothing)$. But $2\mathbf{P} + \varnothing + \varnothing$ is a doubling of
  $\mathbf{P} + \varnothing$. So by Homogeneity and chaining the previous
  equalities, $f(\mathbf{P}) = f(\mathbf{P} + \varnothing)$.
\end{proof}

We are now prepared to characterize the Split Cycle VCCR over profiles of strict weak orders. We replace Upward Homogeneity from Theorem \ref{Thm:Main} with Neutral Indifference.

\begin{theorem}
  \label{thm:no-more-resolute-vccr-with-ties}
  Let $f$ be a VCCR satisfying Anonymity, Neutrality, Availability, Neutral
  Indifference, Monotonicity (for two-candidate profiles), Neutral Reversal,
  and Coherent IIA. Then Split Cycle refines $f$: for any profile $\mathbf{P}$, $sc(\mathbf{P}) \supseteq
  f(\mathbf{P})$.
\end{theorem}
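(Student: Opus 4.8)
The plan is to follow the architecture of the proof of Theorem~\ref{Thm:HP} (as adapted for the VSCC in Theorem~\ref{thm:no-more-resolute-vscc}), arguing the contrapositive. Fix a profile $\mathbf{P}$ and $(x,y)\in f(\mathbf{P})$; the goal is $(x,y)\in sc(\mathbf{P})$. Since trivially $\mathbf{P}\rightsquigarrow_{x,y}\mathbf{P}_{|\{x,y\}}$, Coherent IIA gives $(x,y)\in f(\mathbf{P}_{|\{x,y\}})$, and then Lemma~\ref{lem:majority-defeat} (which the excerpt already notes continues to hold for profiles of weak orders under the revised notion of moving a candidate up one place, so Anonymity, Neutrality, and Monotonicity for two-candidate profiles still suffice) yields $m:=\margin_{\mathbf{P}}(x,y)>0$. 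Now suppose, for contradiction, that $(x,y)\notin sc(\mathbf{P})$. By Lemma~\ref{EquivDef}(4) there is a simple majority path $\rho=(y,w_1,\dots,w_k,x)$ from $y$ to $x$ with $\strength_{\mathbf{P}}(\rho)\ge m$; here $k\ge 1$, since $k=0$ would give $\margin_{\mathbf{P}}(y,x)\ge m>0$, contradicting $m>0$. Put $Z=\{x,y,w_1,\dots,w_k\}$, so $p:=|Z|=k+2\ge 3$, and let $\sigma$ be the cyclic permutation of $Z$ with $x\mapsto y\mapsto w_1\mapsto\cdots\mapsto w_k\mapsto x$.

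The core of the proof is to produce a profile $\mathbf{Q}'$ of weak orders on $Z$ such that: (i) $\mathbf{Q}'$ has the same voter set as a profile $\mathbf{P}^{*}$ obtained from $\mathbf{P}$ by adding reversal pairs and empty ballots -- hence $f(\mathbf{P}^{*})=f(\mathbf{P})$ by Neutral Reversal and Neutral Indifference -- with each voter comparing $x$ and $y$ the same way in $\mathbf{P}^{*}$ and $\mathbf{Q}'$; (ii) $\mathcal{M}(\mathbf{Q}')$ is the pure directed cycle $x\to y\to w_1\to\cdots\to w_k\to x$ with every edge of weight exactly $m$; and (iii) the anonymized form of $\mathbf{Q}'$ is invariant under $\sigma$. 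Given (i) and (ii), $\mathbf{P}^{*}\rightsquigarrow_{x,y}\mathbf{Q}'$: the unique edge of $\mathcal{M}(\mathbf{P}^{*})=\mathcal{M}(\mathbf{P})$ between $x$ and $y$ (namely $x\to y$, weight $m$) is untouched; the cycle edges $y\to w_1,\dots,w_k\to x$ occur in $\mathbf{P}$ with weight $\ge m$ (by $\strength_{\mathbf{P}}(\rho)\ge m$) and are reduced to $m$; all other edges inside $Z$ are deleted; and the candidates in $X(\mathbf{P})\setminus Z$ are deleted. So Coherent IIA gives $(x,y)\in f(\mathbf{Q}')$. Now apply Neutrality with the permutation $\sigma$, followed by Anonymity (applicable because, by (iii), relabelling $\mathbf{Q}'$ by $\sigma$ yields an anonymized-equal profile), repeatedly: $(x,y)\in f(\mathbf{Q}')$ forces $(y,w_1),(w_1,w_2),\dots,(w_k,x)\in f(\mathbf{Q}')$, so every candidate of $X(\mathbf{Q}')=Z$ is defeated, i.e.\ $\overline{f}(\mathbf{Q}')=\varnothing$, contradicting Availability.

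The one genuinely new point -- and the main obstacle -- is building the $\sigma$-symmetric realization in (ii)--(iii) with \emph{no} parity restriction on $m$. In the linear case one first doubles the profile to make the number of voters, hence every margin, even; but for weak orders the parity constraint of Lemma~\ref{ParityLem} is gone, and doubling is unavailable since Homogeneity has been dropped in favor of Neutral Indifference, so one must realize a pure $p$-cycle of possibly odd weight directly. This is where ties earn their keep: writing $c_0=x$, $c_1=y$, $c_i=w_{i-1}$, for each $j\in\mathbb{Z}/p\mathbb{Z}$ let $b_j$ be the weak order on $Z$ placing $c_j$ uniquely first, $c_{j+1}$ uniquely second, and the remaining $p-2$ candidates in a single tie below; the multiset of $m$ copies of each $b_j$ is $\sigma$-symmetric, and a short count shows each consecutive margin receives contributions $(+1)+(+1)+(-1)=1$ (scaled by $m$) while each chord margin cancels to $0$. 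Padding this multiset $\sigma$-symmetrically by reversal pairs $\{b_j,b_j^{-1}\}$ and by empty ballots leaves $\mathcal{M}(\mathbf{Q}')$ and the $\sigma$-symmetry unchanged while allowing its numbers of $(x\succ y)$-, $(y\succ x)$-, and $(x\sim y)$-voters to be matched to those of a suitably padded $\mathbf{P}$; since on both sides the first minus the second of these counts equals $m$, the bookkeeping of how many reversal pairs and empty ballots to add to each side is consistent, and an Anonymity relabelling of voters then delivers a profile $\mathbf{Q}'$ realizing (i). Checking that the $\rightsquigarrow_{x,y}$ side conditions (in particular $\mathbf{P}^{*}_{|\{x,y\}}=\mathbf{Q}'_{|\{x,y\}}$) hold also for voters indifferent between $x$ and $y$ is then routine.
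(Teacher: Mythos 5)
Your proposal is correct and follows essentially the same strategy as the paper's proof: reduce to a positive margin via Coherent IIA plus two-candidate Majority Defeat, use tied ballots (with Neutral Reversal and Neutral Indifference padding to match the voters' $x$-versus-$y$ comparisons) to realize the offending cycle as a margin graph of uniform weight $m$, apply Coherent IIA, and derive a contradiction with Availability from the rotational symmetry via Anonymity and Neutrality. The only differences are cosmetic: the paper's gadget pairs each ballot $c_j \succ c_{j+1} \succ (\text{rest tied})$ with a complementary ballot $(\text{rest tied}) \succ \{c_j, c_{j+1}\}$ so that each pair isolates a single edge (rather than relying on orbit-wide cancellation of chords as in your $b_j$ construction), and it runs the symmetry argument on the core profile $\mathbf{Q}$ and then transfers to the padded $\mathbf{Q}'$ by Neutral Reversal and Neutral Indifference, whereas you keep the padded profile itself $\sigma$-symmetric.
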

\begin{proof}
  Let $f$ be a VCCR satisfying the stated axioms. Toward a contradiction, let
  $\mathbf{P}$ be a profile and $x, y \in X(\mathbf{P})$ such that $(x, y) \in
  f(\mathbf{P}) \setminus sc(\mathbf{P})$. By Lemma \ref{lem:majority-defeat}
  applied to $\mathbf{P}_{|\{x, y\}}$ and Coherent IIA, $m :=
  \margin_{\mathbf{P}}(x, y) = \margin_{\mathbf{P}_{|\{x, y\}}}(x, y) > 0$. By the
  definition of Split Cycle, there are $z_1, z_2, \cdots, z_n \in X(\mathbf{P})$
  such that $( x, y, z_1, z_2, \cdots, z_n, x )$ is a majority cycle
  and $\margin_{\mathbf{P}}(x, y)$ is smallest among the margins of the
  consecutive edges.

  Now we add voters to $\mathbf{P}$ without affecting the result of $f$. Let
  $N_{\mathbf{P}}(x \succ y)$ be the number of voters in $\mathbf{P}$ who rank
  $x$ above $y$, $N_{\mathbf{P}}(x \prec y)$ the number of voters who rank $x$
  below $y$, and $N_{\mathbf{P}}(x \sim y)$ the number of voters who put $x$ and
  $y$ in a tie. Fix an arbitrary strict weak order $L$ on $X(\mathbf{P})$ in which $x \succ y$. Then let $\mathbf{P}'$ be the result of adding to $\mathbf{P}$
  \begin{itemize}
  \item $\max(3m - N_\mathbf{P}(x \succ y), 0)$ many pairs of voters who vote $L$ and $L^{-1}$, respectively, and
  \item $\max((2n - 1)m - N_\mathbf{P}(x \sim y), 0)$ many voters whose ballots
    have only ties.
  \end{itemize}
  Note first that by Neutral Reversal and Neutral Indifference, $(x, y)$ is
  still in $f(\mathbf{P}')$. Also, by the number of pairs of voters and voters
  added, $\margin_{\mathbf{P}'}(x, y)$ is still $m$, and we have  $N_{\mathbf{P}'}(x \succ y)
  \ge 3m$, $N_{\mathbf{P}'}(x \prec y) = N_{\mathbf{P}'}(x \succ y) - m \ge 2m$,
  and $N_{\mathbf{P}'}(x\sim y) \ge (2n-1)m$.

  Next we construct a profile $\mathbf{Q}$, depicted in Figure
  \ref{McGarveyTie}, using some voters in $\mathbf{P}'$ such that
  $\mathcal{M}(\mathbf{Q})$ consists of precisely a majority cycle $( x,
  y, z_1, \cdots, z_n, x )$ with each edge's weight being precisely $m$;
  the tallying at the end of the last paragraph ensures that we have enough
  voters with desired types.
  \renewcommand{\arraystretch}{1.7}
  \begin{figure}[t]
    \begin{center}
      \begin{tabular}{cc|cc|cc|c|cc}
        $m$ & $m$ & $m$ & $m$ & $m$ & $m$ & $\cdots$ & $m$ & $m$ \\
        \hline
        $\boldsymbol{x}$ & \tikzmarknode{12}{$z_n$} & $\boldsymbol{y}$ & \tikzmarknode{14}{$x$} & $\boldsymbol{z_1}$ & \tikzmarknode{16}{$y$} & $\cdots$ & $ \boldsymbol{z_n}$ & \tikzmarknode{19}{$z_{n-1}$}      \\
        $\boldsymbol{y}$ &  $\vdots$ & $\boldsymbol{z_1}$ & $z_n$ &  $\boldsymbol{z_2}$ & $x$ & $\cdots$ & $\boldsymbol{x}$ & $\vdots$   \\
        \tikzmarknode{31}{$z_1$} & $\vdots$ & \tikzmarknode{33}{$z_2$} & $\vdots$ & \tikzmarknode{35}{\raisebox{-4pt}{$\vdots$}} & $z_n$ & $\cdots$ & \tikzmarknode{38}{$y$} & $z_1$    \\
        $\vdots$ &  \tikzmarknode{42}{$z_1$} & $\vdots$ & \tikzmarknode{44}{$z_2$} & $z_n$ & \tikzmarknode{46}{$\vdots$} & $\cdots$ & $z_1$ & \tikzmarknode{49}{$y$}   \\
        $\vdots$ &  \tikzmarknode{52}{$\boldsymbol{x}$} & $z_n$ & \tikzmarknode{54}{$\boldsymbol{y}$} & $x$ & \tikzmarknode{56}{$\boldsymbol{z_1}$} & $\cdots$ & $\vdots$ & \tikzmarknode{59}{$\boldsymbol{z_n}$}   \\
        \tikzmarknode{61}{$z_n$} & \tikzmarknode{62}{$\boldsymbol{y}$} & \tikzmarknode{63}{$x$} & \tikzmarknode{64}{$\boldsymbol{z_1}$} & \tikzmarknode{65}{$y$} & \tikzmarknode{66}{$\boldsymbol{z_2}$} & {$\cdots$} & \tikzmarknode{68}{$z_{n-1}$} & \tikzmarknode{69}{$\boldsymbol{x}$}   \\
      \end{tabular}
      \begin{tikzpicture}[overlay, remember picture]
        \draw[line width=1.4em, line cap=round, draw=gray, opacity=0.3] 
        ([yshift=0.3em] pic cs:31)--([yshift=-0.1em] pic cs:61);
        \draw[line width=1.4em, line cap=round, draw=gray, opacity=0.3] 
        ([yshift=0.3em] pic cs:12)--([yshift=-0.1em] pic cs:42);
        \draw[line width=1.4em, line cap=round, draw=gray, opacity=0.3] 
        ([yshift=0.3em] pic cs:52)--([yshift=-0.1em] pic cs:62);

        \draw[line width=1.4em, line cap=round, draw=gray, opacity=0.3] 
        ([yshift=0.3em] pic cs:33)--([yshift=-0.1em] pic cs:63);
        \draw[line width=1.4em, line cap=round, draw=gray, opacity=0.3] 
        ([yshift=0.3em] pic cs:14)--([yshift=-0.1em] pic cs:44);
        \draw[line width=1.4em, line cap=round, draw=gray, opacity=0.3] 
        ([yshift=0.3em] pic cs:54)--([yshift=-0.1em] pic cs:64);

        \draw[line width=1.4em, line cap=round, draw=gray, opacity=0.3] 
        ([yshift=0.3em] pic cs:35)--([yshift=-0.1em] pic cs:65);
        \draw[line width=1.4em, line cap=round, draw=gray, opacity=0.3] 
        ([yshift=0.3em] pic cs:16)--([yshift=-0.1em] pic cs:46);
        \draw[line width=1.4em, line cap=round, draw=gray, opacity=0.3] 
        ([yshift=0.3em] pic cs:56)--([yshift=-0.1em] pic cs:66);

        \draw[line width=2.5em, line cap=round, draw=gray, opacity=0.3] 
        ([yshift=-0.1em] pic cs:38)--([yshift=0.4em] pic cs:68);
        \draw[line width=2.5em, line cap=round, draw=gray, opacity=0.3] 
        ([yshift=-0.1em] pic cs:19)--([yshift=0.4em] pic cs:49);
        \draw[line width=1.4em, line cap=round, draw=gray, opacity=0.3] 
        ([yshift=0.3em] pic cs:59)--([yshift=-0.1em] pic cs:69);
      \end{tikzpicture}
    \end{center}
    \caption{the profile $\mathbf{Q}$ for Theorem \ref{thm:no-more-resolute-vccr-with-ties}. Candidates inside the same gray area are tied.} \label{McGarveyTie}
  \end{figure}
  \begin{itemize}
  \item Take $m$ voters in $\mathbf{P}'$ ranking $x$ above $y$, and let their
    ballots in $\mathbf{Q}$ be $x \succ y \succ \{z_1, \cdots, z_n\}$. Then take
    $m$ voters in $\mathbf{P}'$ tying $x$ with $y$, and let their ballots in
    $\mathbf{Q}$ be $\{z_1, \cdots, z_n\} \succ \{x, y\}$.
  \item Take another $m$ voters ranking $y$ above $x$ in $\mathbf{P}'$, and let
    their ballots in $\mathbf{Q}$ be $y \succ z_1 \succ \{x, z_2, \cdots,
    z_n\}$. Then take another $m$ voters with $x$ above $y$ in $\mathbf{P}'$,
    and let their ballots in $\mathbf{Q}$ be $\{z_2, \cdots, z_n, x\} \succ \{y,
    z_1\}$.
  \item For each $i = 1, \cdots, n-1$, first take another $m$ voters tying $x$
    with $y$ in $\mathbf{P}'$ and let their ballots in $\mathbf{Q}$ be $z_i
    \succ z_{i+1} \succ \{z_1, \cdots, z_{i-1}, z_{i+2}, \cdots, z_n, x, y\}$.
    Then take another $m$ voters tying $x$ with $y$ in $\mathbf{P}'$ and let
    their ballots in $\mathbf{Q}$ be $\{z_1, \cdots, z_{i-1}, z_{i+2}, \cdots,
    z_n, x, y\} \succ \{z_i, z_{i+1}\}$.
  \item Finally, take another $m$ voters ranking $x$ above $y$ in $\mathbf{P}'$
    and let their ballots in $\mathbf{Q}$ be $z_n \succ x \succ \{y, z_1,
    \cdots, z_{n-1}\}$. Then take another $m$ voters ranking $y$ above $x$ in
    $\mathbf{P}'$ and let their ballots in $\mathbf{Q}$ be $\{y, z_1, \cdots,
    z_{n-1}\} \succ \{x, z_n\}$.
  \end{itemize}
  By a standard argument with permutations, using Anonymity, Neutrality, and
  Availability, $(x, y) \not\in f(\mathbf{Q})$. Now let $L$ be an arbitrary
  strict weak order on $x, y, z_1, \cdots, z_n$ with $x L y$. Then let
  $\mathbf{Q}'$ be the result of adding the remaining voters in $V(\mathbf{P}')
  \setminus V(\mathbf{Q})$ to $\mathbf{Q}$ with their ballots given by:
  \begin{itemize}
  \item if the voter ranks $x$ above $y$ in $\mathbf{P}'$, her ballot in
    $\mathbf{Q}'$ is $L$;
  \item if the voter ranks $y$ above $x$ in $\mathbf{P}'$, her ballot in
    $\mathbf{Q}'$ is $L^{-1}$;
  \item if the voter ties $x$ with $y$, then her ballot in $\mathbf{Q}'$ is
    $\varnothing$.
  \end{itemize}
  Two immediate observations follow. First, it is easy to check that
  $\mathbf{P}'_{|\{x, y\}} = \mathbf{Q}'_{|\{x, y\}}$. Second, since
  $\margin_{\mathbf{Q}}(x, y)$ is easily seen to be $m$, the number of voters
  from $V(\mathbf{P}') \setminus V(\mathbf{Q})$ ranking $x$ above $y$ in
  $\mathbf{P}'$ must be equal to the number of voters from $V(\mathbf{P}')
  \setminus V(\mathbf{Q})$ ranking $y$ above $x$. Thus, $\mathbf{Q}'$ is the
  result of adding to $\mathbf{Q}$ some (possibly zero) pairs of voters with
  converse ballots and some (possibly zero) voters with the fully tied ballot.
  Thus:
  \begin{itemize}
  \item $\mathcal{M}(\mathbf{Q}') = \mathcal{M}(\mathbf{Q})$. Since the latter
    is just a majority cycle with each edge's weight being $m$,
    $\mathcal{M}(\mathbf{Q}')$ can be obtained from $\mathcal{M}(\mathbf{P}') =
    \mathcal{M}(\mathbf{P})$ by deleting candidates and edges and lowering
    weights not involving the edge between $x$ and $y$ (the weights of the edges
    in the cycle $x, y, z_1, \cdots, z_n, x$ are originally all at least $m$).
    So Coherent IIA applies, and $(x, y) \in f(\mathbf{Q}')$.
  \item Since $(x, y) \not\in f(\mathbf{Q})$, and $\mathbf{Q}'$ is the result
    adding pairs of converse ballots and ballots with one big tie, by Neutral
    Reversal and Neutral Indifference, $(x, y) \not\in f(\mathbf{Q}')$.
  \end{itemize}
  So we have a contradiction.
\end{proof}
Appeal to Neutral Indifference cannot be eliminated in the proof since we may
have to add an odd number of big tie ballots at some point. In particular, 
$\max((2n - 1)M - N_\mathbf{P}(x \sim y), 0)$ may be odd. 

We now prove the analogous theorem for the Split Cycle VSCC over profiles of strict weak orders.

\begin{theorem}
  \label{thm:no-more-resolute-VSCC-with-ties}
  Let $F$ be a VSCC satisfying Anonymity, Neutrality, Neutral
  Indifference, Monotonicity (for two-candidate profiles), Neutral Reversal,
  and Coherent IIA. Then  $SC$ refines $F$: for any profile $\mathbf{P}$, $SC(\mathbf{P}) \subseteq F(\mathbf{P})$.
\end{theorem}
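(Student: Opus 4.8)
The plan is to run the proof of Theorem~\ref{thm:no-more-resolute-vccr-with-ties} inside the VSCC setting, exactly as Theorem~\ref{thm:no-more-resolute-vscc} adapts Theorem~\ref{Thm:HP} to VSCCs in the linear case. Fix a profile $\mathbf{P}$ and a candidate $y \in SC(\mathbf{P})$; we want $y \in F(\mathbf{P})$, and we argue by contradiction assuming $y \notin F(\mathbf{P})$. The first step is to pad $\mathbf{P}$ into a profile $\mathbf{P}'$ with the same margin graph (so $SC(\mathbf{P}') = SC(\mathbf{P})$ and $y \in SC(\mathbf{P}')$) and, by Neutral Reversal and Neutral Indifference, the same winner set (so $y \notin F(\mathbf{P}')$), but with generous voter counts of every relevant type. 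Concretely, adding enough reversal pairs of an arbitrary fixed linear order and enough fully tied ($\varnothing$) ballots, we ensure that for \emph{every} pair of distinct $u, v \in X(\mathbf{P})$ we have $N_{\mathbf{P}'}(u \succ v) \ge 3M$, $N_{\mathbf{P}'}(v \succ u) \ge 3M$, and $N_{\mathbf{P}'}(u \sim v) \ge 2\lvert X(\mathbf{P})\rvert M$, where $M$ is the largest margin of $\mathbf{P}$ (and of $\mathbf{P}'$). We pad uniformly over all pairs because we do not yet know which candidate defeats $y$; this is the analogue of the bound $\lvert V(\mathbf{P}')\rvert \ge M \cdot \lvert X(\mathbf{P})\rvert$ used in Theorem~\ref{thm:no-more-resolute-vscc}.

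Second, apply Coherent IIA for VSCCs to $\mathbf{P}'$: since $y \notin F(\mathbf{P}')$, there is some $x \in X(\mathbf{P}')$ such that $y \notin F(\mathbf{Q}')$ for every $\mathbf{Q}'$ with $\mathbf{P}' \rightsquigarrow_{x,y} \mathbf{Q}'$. Taking $\mathbf{Q}' = \mathbf{P}'_{|\{x,y\}}$ and using Lemma~\ref{lem:majority-defeat}, we get $m := \margin_{\mathbf{P}'}(x,y) > 0$, with $m \le M$. Since $y \in SC(\mathbf{P}')$, i.e.\ $(x,y) \notin sc(\mathbf{P}')$, and $\margin_{\mathbf{P}'}(x,y) > 0$, clause~(3) of Lemma~\ref{EquivDef} yields a simple majority cycle $(x, y, z_1, \dots, z_n, x)$ in $\mathbf{P}'$ all of whose consecutive margins are at least $m$; note $n + 2 \le \lvert X(\mathbf{P})\rvert$.

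Third, build the profile $\mathbf{Q}$ of Figure~\ref{McGarveyTie} using voters of $\mathbf{P}'$: the construction consumes $3m$ voters with $x \succ y$, $2m$ voters with $y \succ x$, and $(2n-1)m$ voters with $x \sim y$, all available by the padding, and it produces $\mathcal{M}(\mathbf{Q})$ equal to the pure cycle on $\{x, y, z_1, \dots, z_n\}$ with every edge of weight exactly $m$, where $X(\mathbf{Q}) = \{x,y,z_1,\dots,z_n\}$. Then let $\mathbf{Q}'$ extend $\mathbf{Q}$ by the unused voters of $\mathbf{P}'$, each voting a fixed strict weak order $L$ on $\{x,y,z_1,\dots,z_n\}$ with $x \mathbin{L} y$ if they ranked $x \succ y$ in $\mathbf{P}'$, $L^{-1}$ if they ranked $y \succ x$, and $\varnothing$ if they tied $x \sim y$. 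Since $\margin_{\mathbf{Q}}(x,y) = m = \margin_{\mathbf{P}'}(x,y)$, the unused $x\succ y$ voters and $y\succ x$ voters are equinumerous, so $\mathbf{Q}'$ arises from $\mathbf{Q}$ by adding reversal pairs and fully tied ballots. Hence $\mathbf{P}'_{|\{x,y\}} = \mathbf{Q}'_{|\{x,y\}}$, and $\mathcal{M}(\mathbf{Q}') = \mathcal{M}(\mathbf{Q})$ is obtained from $\mathcal{M}(\mathbf{P}')$ by deleting the candidates outside the cycle and by deleting or lowering weights only on edges not joining $x$ and $y$ (the $x$–$y$ edge keeps weight $m$); thus $\mathbf{P}' \rightsquigarrow_{x,y} \mathbf{Q}'$, giving $y \notin F(\mathbf{Q}')$, and then $y \notin F(\mathbf{Q})$ by Neutral Reversal and Neutral Indifference. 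But $\mathbf{Q}$ is invariant, as an anonymized profile, under the rotation $\sigma$ with $\sigma(x) = y$, $\sigma(y) = z_1$, $\sigma(z_i) = z_{i+1}$, $\sigma(z_n) = x$, so by Anonymity and Neutrality $F(\mathbf{Q})$ is $\sigma$-invariant; since $\sigma$ acts transitively on $X(\mathbf{Q})$ and $F(\mathbf{Q}) \neq \varnothing$, we get $F(\mathbf{Q}) = X(\mathbf{Q}) \ni y$, a contradiction. Therefore $y \in F(\mathbf{P}') = F(\mathbf{P})$.

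The main obstacle is the quantifier form of Coherent IIA for VSCCs: it hands us merely \emph{some} defeating witness $x$ for $y$, whereas the profile $\mathbf{Q}$ must be tailored to that particular $x$ and to the particular $SC$-cycle through $(x,y)$. This is overcome by padding $\mathbf{P}$ into $\mathbf{P}'$ in advance with enough voters of each type ($\ge 3M$ of each strict direction and $\ge 2\lvert X(\mathbf{P})\rvert M$ tied) for every candidate pair, so that the construction goes through whatever $x$ and whatever cycle length emerge. Everything else---verifying the margin graph of $\mathbf{Q}$, the accessibility $\mathbf{P}' \rightsquigarrow_{x,y} \mathbf{Q}'$, and the rotational symmetry---is as in the proofs of Theorems~\ref{thm:no-more-resolute-vscc} and~\ref{thm:no-more-resolute-vccr-with-ties}, and Neutral Indifference is genuinely needed since an odd number of fully tied ballots may have to be added both in the padding and in passing from $\mathbf{Q}$ to $\mathbf{Q}'$.
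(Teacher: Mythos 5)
Your proposal is correct and follows essentially the same route as the paper's proof: pad $\mathbf{P}$ with reversal pairs of a fixed linear order and fully tied ballots (using Neutral Reversal and Neutral Indifference to preserve the $F$-output and the margin graph), invoke Coherent IIA for VSCCs to extract the witness $x$ with $\margin_{\mathbf{P}'}(x,y)=m>0$ via Lemma \ref{lem:majority-defeat}, reuse the Figure \ref{McGarveyTie} construction of $\mathbf{Q}$ and $\mathbf{Q}'$ from Theorem \ref{thm:no-more-resolute-vccr-with-ties}, and derive the contradiction from the rotational symmetry of $\mathbf{Q}$ together with Anonymity, Neutrality, and nonemptiness of $F(\mathbf{Q})$. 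The only differences are presentational (e.g., the order in which the contradiction is assembled), so there is nothing to add.
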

\begin{proof}
  The proof essentially combines the strategies used for Theorems 
  \ref{thm:no-more-resolute-vscc} and
  \ref{thm:no-more-resolute-vccr-with-ties}. Let $F$ be a VSCC satisfying the
   axioms and $\mathbf{P}$ a profile (allowing ties in ballots). Since
  we are dealing with a VSCC, as in the proof for Proposition
  \ref{thm:no-more-resolute-vscc}, we need to first add voters to prepare for
  a later construction. Let $M$ be the largest margin in the margin graph of
  $\mathbf{P}$. Then fix a linear order $L^*$ on $X(\mathbf{P})$, and let
  $\mathbf{P}'$ be the result of adding to $\mathbf{P}$
  \begin{itemize}
  \item $3M$ pairs of voters in which one voter's ballot is $L^*$ and the other
    voter's ballot is $(L^*)^{-1}$ and
  \item $2|X(\mathbf{P})| \cdot M$ many voters whose ballot is the empty set.
  \end{itemize}
  Clearly $\mathbf{P}$ and $\mathbf{P}'$ have the same margin graph, and by the assumed 
  axioms, $F(\mathbf{P}') = F(\mathbf{P})$. So to show that
  $SC(\mathbf{P}) \subseteq F(\mathbf{P})$, it is enough to show that
  $SC(\mathbf{P}') \subseteq F(\mathbf{P}')$.

  Toward a contradiction, suppose that $y \in SC(\mathbf{P}')$ but $y \not\in
  F(\mathbf{P}')$. Then by Coherent IIA there is ${x \in X(\mathbf{P}')}$ such
  that for any profile $\mathbf{Q}'$ with $\mathbf{P}' \rightsquigarrow_{x, y}
  \mathbf{Q}'$, we have $y \not\in F(\mathbf{Q}')$. Using Lemma
  \ref{lem:majority-defeat} and Coherent IIA, $m := \margin_{\mathbf{P}'}(x, y)
  = \margin_{\mathbf{P}}(x, y)> 0$. Since $y \in SC(\mathbf{P})$, there is a
  majority cycle $( x, y, z_1, \cdots, z_n, x )$ with $(x, y)$ being
  an edge with the smallest margin. Moreover, by the construction of
  $\mathbf{P}'$, using the notation in the proof of Theorem
  \ref{thm:no-more-resolute-vccr-with-ties}, $N_{\mathbf{P'}}(x \succ y) \ge
  3m$, $N_{\mathbf{P}'}(y \succ x) \ge 2m$, and ${M_{\mathbf{P}'}(x \sim y) \ge
  (2n-1)m}$ since $M \ge m$ and $2|X(\mathbf{P})| \ge 2n-1$. Thus, the construction
  of $\mathbf{Q}$ and $\mathbf{Q}'$ in the proof of Theorem~\ref{thm:no-more-resolute-vccr-with-ties} can be applied in exactly the same
  way. Using the symmetries of $\mathbf{Q}$ and Anonymity and Neutrality, if $y \not\in F(\mathbf{Q})$, then $F(\mathbf{Q}) = \varnothing$, contradicting that $F$ is a VSCC. Thus, $y \in F(\mathbf{Q})$, and hence by Neutral Reversal and Neutral Indifference, $y \in F(\mathbf{Q}')$. On the other hand, by construction $\mathbf{P}' \rightsquigarrow_{x, y} \mathbf{Q}'$, so $y \not\in  F(\mathbf{Q}')$. Hence we obtain a contradiction as desired.
\end{proof}

Now we consider the other direction for axiomatization where, in the case of a
VCCR $f$, we seek to show that $(x,y) \in f(\mathbf{P})$ given that $(x, y) \in
sc(\mathbf{P})$. The strategy is to use Lemma \ref{lem:add-ballot} repeatedly
until we can resort to Coherent Defeat since $\margin_{\mathbf{P}}(x, y)$ is
still positive but there is no majority path back from $y$ to $x$. Note that
Lemma \ref{lem:add-ballot} can only be used to lower the margin from $x$ to $y$
to $2$, and to finish the argument, we used the parity constraint for the margin
graphs obtained from profiles using only linear ballots (Lemma \ref{ParityLem}):
the margins must share the same parity, and thus if $\margin_{\mathbf{P}}(x, y)
= 2$, then there cannot be a majority path from $y$ back to $x$ using only
margins smaller than $2$ since the number $1$ is not available. However, now
that we are allowing ballots with ties, the parity constraint is no longer there
to help.

Note first that Lemma \ref{lem:add-ballot} works also for profiles allowing
ties in ballots, since nothing in the proof depends on the parity constraint for
linear profiles; the proof is exactly the same, and the constructed ballot $L$
is still linear. We record the result here.
\begin{lemma}
  \label{lem:add-ballot-with-ties}
  For any profile $\mathbf{P}:V \to \mathcal{W}(X)$ and any $(x, y) \in
  sc(\mathbf{P})$, if $\margin_{\mathbf{P}}(x, y) > 2$, then there is a
  linear ballot $L \in \mathcal{L}(X)$ such that
  \begin{itemize}
  \item for any $z \in X \setminus \{y\}$ with $\margin_{\mathbf{P}}(y, z) \le
    0$, we have $(y, z) \in L$ (in particular, $(y, x) \in L$), and
  \item $(x, y) \in sc(\mathbf{P} + L)$.
  \end{itemize}
\end{lemma}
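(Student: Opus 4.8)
The plan is to observe that the proof of Lemma~\ref{lem:add-ballot} transfers verbatim to profiles allowing ties, so the real work is to check that none of its steps relied on the parity constraint of Lemma~\ref{ParityLem} or on the ballots of $\mathbf{P}$ being linear. Concretely, I would set $\mathcal{M} = \mathcal{M}(\mathbf{P})$ and $k = \margin_{\mathbf{P}}(x,y) > 2$. Since Lemma~\ref{lem:restrict-to-k} refers only to the margin graph and hence holds for arbitrary profiles, $(x,y) \in sc(\mathbf{P})$ gives that $x$ is unreachable from $y$ in $\mathcal{M}\uph k$; thus the set of edges of $\mathcal{M}$ of weight at most $k-1$ is a cut from $y$ to $x$, and by finiteness there is a minimal such cut $C$, all of whose edges have weight at most $k-1$. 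Minimality yields, for each $(u,v) \in C$, a $C$-disjoint path from $y$ to $u$ and a $C$-disjoint path from $v$ to $x$, and hence that $C$ (and $C^{-1}$) contains no connecting pair of edges --- this is pure graph theory, independent of ballot structure.

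Next I would form $B = \{(y,z) \mid z \ne y,\ \margin_{\mathbf{P}}(y,z) \le 0\}$, which (exactly as before) coincides with $\{(y,z) \mid z \ne y,\ (y,z) \notin \mathcal{M}\}$, since an edge is present iff its margin is positive --- a characterization that uses nothing about linearity. The case analysis showing $C^{-1}\cup B$ is acyclic then goes through word for word: a putative cycle cannot lie inside $C^{-1}$ (no connecting pairs), must therefore use an edge $(y,z)\in B$, and tracing the next one or two edges forces either $(y,z)\in\mathcal{M}$ (contradicting $(y,z)\in B$) or a connecting pair in $C^{-1}$. By Lemma~\ref{lemma:extension} take $L\in\mathcal{L}(X)$ extending $C^{-1}\cup B$; note that $L$ is a genuine linear order even though $\mathbf{P}$'s ballots may have ties, and the first bullet of the lemma holds because $L \supseteq B$, in particular $(y,x)\in L$.

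Finally, with $\mathbf{P}' = \mathbf{P}+L$ and $\mathcal{M}' = \mathcal{M}(\mathbf{P}')$: since $(y,x)\in L$ we get $\margin_{\mathbf{P}'}(x,y)=k-1>0$; for each $(u,v)\in C$ we have $(v,u)\in L$, so $\margin_{\mathbf{P}'}(u,v)\le k-2$; and every edge of $\mathcal{M}'$ that is not in $\mathcal{M}$ has margin exactly $1$, hence at most $k-2$ because $k>2$. Thus the weight-at-most-$(k-2)$ edges of $\mathcal{M}'$ contain $C$ together with all newly created edges, so they form a cut from $y$ to $x$ in $\mathcal{M}'$; there is then no majority path from $y$ to $x$ in $\mathcal{M}'\uph(k-1)$, and Lemma~\ref{lem:restrict-to-k} gives $(x,y)\in sc(\mathbf{P}')$. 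The only point that even superficially looks parity-sensitive is the claim that a new edge has margin exactly $1$; but this holds because a single \emph{linear} ballot shifts every pairwise margin by exactly $\pm 1$, so a margin that crosses from non-positive to positive must move from $0$ to $1$ --- an argument insensitive to the parities of the margins of $\mathbf{P}$. Since the constructed $L$ is linear regardless of whether $\mathbf{P}$'s ballots are, there is in the end no genuine obstacle: the generalization is a matter of re-reading the earlier proof and confirming its parity-freeness.
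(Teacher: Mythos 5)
Your proposal is correct and matches the paper exactly: the paper itself proves this lemma simply by observing that the proof of Lemma \ref{lem:add-ballot} goes through verbatim for profiles with ties, since no step uses the parity constraint and the constructed ballot $L$ is still linear. Your careful check of each step (including the point that a single linear ballot shifts every margin by exactly $\pm 1$, so new edges have margin $1$) is just a more explicit rendering of the same argument.
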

To apply Lemma \ref{lem:add-ballot-with-ties} directly to our axiomatization
for all profiles allowing ties, one may hope to relax the assumption that
$\margin_{\mathbf{P}}(x, y) > 2$ to $\margin_{\mathbf{P}}(x, y) > 1$. But this
cannot be done as shown by the following example:
\begin{center}
  \begin{tikzpicture}

    \node[circle,draw, minimum width=0.25in] at (0,0) (x) {$x$}; 
    \node[circle,draw,minimum width=0.25in] at (3,0) (y) {$y$}; 
    \node[circle,draw,minimum width=0.25in] at (1.5,1.5) (a) {$a$}; 
    \node[circle,draw,minimum width=0.25in] at (1.5,-1.5) (b) {$b$}; 

    \path[->,draw,thick] (x) to node[fill=white] {$2$} (y);
    \path[->,draw,thick] (y) to node[fill=white] {$1$} (b);
    \path[->,draw,thick] (b) to node[fill=white] {$1$} (x);
    \path[->,draw,thick] (a) to node[fill=white] {$2$} (x);

  \end{tikzpicture}
\end{center}
For any ballot $L$ in which $y$ is ranked above $x$ and $a$, adding $L$ to 
any profile $\mathbf{P}$ generating the above margin graph would render $(x, y)$
no longer a defeat edge according to Split Cycle, since there will be a
majority cycle $(x, y, a, x)$ in which $(x, y)$ is a weakest edge with margin $1$.

For this reason, we resort to the axiom of Downward Homogeneity to restore
the parity constraint.
\begin{definition}
  A VCCR $f$ (resp.~VSCC $F$) satisfies \emph{Downward Homogeneity} if for any profile
  $\mathbf{P}$ and $2\mathbf{P}$, where $2\mathbf{P}$ is the result of replacing
  each voter in $\mathbf{P}$ by $2$ copies of that voter, $f(\mathbf{P})
  \supseteq f(2\mathbf{P})$ (resp. $F(\mathbf{P}) \subseteq F(2\mathbf{P})$).
\end{definition}
\begin{theorem}
  \label{thm:at-least-as-resolute-vccr-with-ties}
  Let $f$ be a VCCR satisfying Downward Homogeneity, Coherent Defeat, and
  Positive Involvement in Defeat. Then $f$ refines the Split Cycle VCCR: for any
  profile $\mathbf{P}$, $f(\mathbf{P}) \supseteq sc(\mathbf{P})$.
\end{theorem}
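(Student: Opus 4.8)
The strategy is to replay the proof of Theorem~\ref{thm:at-least-as-resolute-vccr}, using Downward Homogeneity to substitute for the parity constraint (Lemma~\ref{ParityLem}) that was available for linear profiles. Fix a profile $\mathbf{P}$ (ties allowed) and $x,y\in X(\mathbf{P})$ with $(x,y)\in sc(\mathbf{P})$; the goal is to show $(x,y)\in f(\mathbf{P})$. Since Split Cycle depends only on the margin graph and doubling multiplies every margin by $2$, every strict inequality in the definition of Split Cycle is preserved, so $sc(2\mathbf{P})=sc(\mathbf{P})$ and hence $(x,y)\in sc(2\mathbf{P})$. By Downward Homogeneity, $f(2\mathbf{P})\subseteq f(\mathbf{P})$, so it suffices to prove $(x,y)\in f(2\mathbf{P})$. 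The point of passing to $2\mathbf{P}$ is that all of its margins are even.

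Write $\mathbf{P}_0=2\mathbf{P}$ and $k_0=\margin_{\mathbf{P}_0}(x,y)$, a positive even integer. If $k_0=2$, then since every edge of $\mathcal{M}(\mathbf{P}_0)$ has positive even weight, hence weight $\ge 2$, we have $\mathcal{M}(\mathbf{P}_0)\uph 2=\mathcal{M}(\mathbf{P}_0)$, so by Lemma~\ref{lem:restrict-to-k} there is no majority path from $y$ to $x$ in $\mathbf{P}_0$ at all, and Coherent Defeat gives $(x,y)\in f(\mathbf{P}_0)$. If $k_0>2$, the plan is to apply Lemma~\ref{lem:add-ballot-with-ties} repeatedly: for $i=0,\dots,k_0-3$, let $L_{i+1}$ be the linear ballot produced by that lemma applied to $\mathbf{P}_i$ and set $\mathbf{P}_{i+1}=\mathbf{P}_i+L_{i+1}$; the lemma guarantees that $y$ is ranked above $x$ in each $L_{i+1}$, that $(x,y)\in sc(\mathbf{P}_{i+1})$, and that $\margin_{\mathbf{P}_{i+1}}(x,y)=\margin_{\mathbf{P}_i}(x,y)-1$. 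After these $k_0-2$ steps one obtains $\mathbf{P}_{k_0-2}=2\mathbf{P}+L_1+\cdots+L_{k_0-2}$ with $(x,y)\in sc(\mathbf{P}_{k_0-2})$ and $\margin_{\mathbf{P}_{k_0-2}}(x,y)=2$.

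The crucial observation — where Downward Homogeneity earns its keep — is that $\mathbf{P}_{k_0-2}$ again has all even margins. A linear ballot ranks every pair of distinct candidates strictly one way or the other, so adding one linear ballot changes the margin of every pair by exactly $\pm 1$ and therefore flips the parity of every margin at once; starting from the all-even margins of $2\mathbf{P}$ and adding $k_0-2$ linear ballots, with $k_0-2$ even, returns every margin to even parity. As in the case $k_0=2$, it then follows from Lemma~\ref{lem:restrict-to-k} that there is no majority path from $y$ to $x$ in $\mathbf{P}_{k_0-2}$, so Coherent Defeat gives $(x,y)\in f(\mathbf{P}_{k_0-2})$. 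Finally, since each $\mathbf{P}_{i+1}$ arises from $\mathbf{P}_i$ by adding a ballot ranking $y$ above $x$, the contrapositive of Positive Involvement in Defeat yields $(x,y)\in f(\mathbf{P}_{i+1})\Rightarrow(x,y)\in f(\mathbf{P}_i)$; a downward induction from $i=k_0-2$ to $i=0$ gives $(x,y)\in f(\mathbf{P}_0)=f(2\mathbf{P})\subseteq f(\mathbf{P})$, as required.

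The main obstacle, and the only place the argument genuinely differs from the linear case, is precisely this parity bookkeeping: one must verify both that exactly $k_0-2$ ballots are added (so that their number is even because $k_0$ is, which in turn is because we doubled) and that a single linear ballot really does flip the parity of every margin, including for pairs whose margin in $2\mathbf{P}$ is $0$. Beyond that, the construction is a transcription of the proof of Theorem~\ref{thm:at-least-as-resolute-vccr}, with Lemma~\ref{lem:add-ballot-with-ties} in place of Lemma~\ref{lem:add-ballot}; one should also note in passing that none of the added ballots changes $X(\mathbf{P})$, so all invocations of the lemma are on profiles over the same candidate set.
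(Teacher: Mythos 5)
Your proposal is correct and follows essentially the same route as the paper's proof: pass to $2\mathbf{P}$ via Downward Homogeneity, iterate Lemma~\ref{lem:add-ballot-with-ties} exactly $k-2$ times, use the fact that the added ballots are linear to keep all margins of uniform (hence even) parity so that no margin-$1$ edge and thus no majority path from $y$ to $x$ exists in the final profile, then apply Coherent Defeat and push the defeat back by the contrapositive of Positive Involvement in Defeat. Your bookkeeping via the evenness of the number $k-2$ of added ballots is just a slight repackaging of the paper's observation that the shared-parity constraint is preserved at every stage.
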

\begin{proof}
  Pick an arbitrary profile $\mathbf{P}: V \to \mathcal{W}(X)$ and  $x, y \in
  X$ such that $(x, y) \in sc(\mathbf{P})$. Using Downward Homogeneity, to show
  that $(x, y) \in f(\mathbf{P})$, it is enough to show that $(x, y) \in
  f(2\mathbf{\mathbf{P}})$. Let $k = \margin_{2\mathbf{P}}(x, y)$ (which must be
  at least $2$) and inductively define $\mathbf{P}_0, \cdots, \mathbf{P}_{k-2}$
  where $\mathbf{P}_0 = 2\mathbf{P}$ and $\mathbf{P}_{i+1} = \mathbf{P}_i + L_i$
  with $L_i$ obtained by applying Lemma \ref{lem:add-ballot-with-ties} to
  $\mathbf{P}_i$. An easy induction shows that $(x, y) \in sc(\mathbf{P}_{k-2})$
  and $\margin_{\mathbf{P}_{k-2}}(x, y) = 2$. But more importantly, since each
  $L_i$ is linear, the parity constraint is preserved, and in each
  $\mathbf{P}_i$ inductively constructed, the margins in $\mathbf{P}_i$'s margin
  graph share the same parity. This means that there is no majority edge in
  $\mathbf{P}_{k-2}$ with margin $1$. Thus, there is no majority path from $y$
  back to $x$ in $\mathbf{P}_{k-2}$. By Coherent Defeat, $(x, y) \in
  f(\mathbf{P}_{k-2})$, and by backward induction using Positive Involvement in
  Defeat, $(x, y) \in f(2\mathbf{P})$.
\end{proof}
\begin{theorem}
  \label{thm:at-least-as-resolute-vscc-with-ties}
  Let $F$ be a VSCC satisfying Downward Homogeneity, Coherent Defeat, and
  Tolerant Positive Involvement. Then $F$ refines the Split Cycle VSCC: for any
  profile $\mathbf{P}$, $F(\mathbf{P}) \subseteq SC(\mathbf{P})$.
\end{theorem}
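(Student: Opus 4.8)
The plan is to prove the contrapositive, following the template of the proof of Theorem~\ref{thm:at-least-as-resolute-vccr-with-ties} but working with the VSCC axioms, just as Theorem~\ref{thm:at-least-as-resolute-vscc} adapts the VCCR argument to VSCCs. So suppose $y \notin SC(\mathbf{P})$; the goal is to show $y \notin F(\mathbf{P})$. Since $SC = \overline{sc}$, there is an $x \in X(\mathbf{P})$ with $(x, y) \in sc(\mathbf{P})$, and because Split Cycle depends only on the margin graph and $\mathcal{M}(2\mathbf{P})$ is just $\mathcal{M}(\mathbf{P})$ with every weight doubled, also $(x, y) \in sc(2\mathbf{P})$. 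By Downward Homogeneity, $F(\mathbf{P}) \subseteq F(2\mathbf{P})$, so it suffices to establish $y \notin F(2\mathbf{P})$.

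The device replacing the parity constraint of Lemma~\ref{ParityLem} (which fails once ties are allowed) is the following observation: all margins between distinct candidates in $2\mathbf{P}$ are even, and adding a single \emph{linear} ballot shifts the margin of every ordered pair of distinct candidates by exactly $\pm 1$; hence, after adding $i$ linear ballots to $2\mathbf{P}$, all margins in the resulting margin graph have the same parity, namely $i$ modulo $2$.

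Concretely, I would set $k = \margin_{2\mathbf{P}}(x, y) \geq 2$ and define $\mathbf{P}_0 = 2\mathbf{P}$, $\mathbf{P}_{i+1} = \mathbf{P}_i + L_{i+1}$, where $L_{i+1}$ is the linear ballot furnished by Lemma~\ref{lem:add-ballot-with-ties} applied to $\mathbf{P}_i$ (applicable while $\margin_{\mathbf{P}_i}(x,y) > 2$, which a trivial induction guarantees for $i < k-2$, since each $L_j$ ranks $y$ above $x$ and so drops the margin by one). The lemma gives that each $L_{i+1}$ ranks $y$ strictly above every $z$ with $\margin_{\mathbf{P}_i}(y, z) \leq 0$ and that $(x, y) \in sc(\mathbf{P}_{i+1})$. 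Running this $k-2$ times yields $\margin_{\mathbf{P}_{k-2}}(x, y) = 2$ with $(x, y) \in sc(\mathbf{P}_{k-2})$; and by the parity observation with $i = k-2$ even, every margin in $\mathcal{M}(\mathbf{P}_{k-2})$ is even, so every edge has weight at least $2$ and $\mathcal{M}(\mathbf{P}_{k-2}) \uph 2 = \mathcal{M}(\mathbf{P}_{k-2})$. Lemma~\ref{lem:restrict-to-k} then shows there is no majority path from $y$ to $x$ in $\mathbf{P}_{k-2}$, so Coherent Defeat for VSCCs gives $y \notin F(\mathbf{P}_{k-2})$.

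The argument closes with a backward induction from $k-2$ to $0$: since $\mathbf{P}_{i+1} = \mathbf{P}_i + L_{i+1}$ and $L_{i+1}$ ranks $y$ above precisely the candidates to which $y$ is not majority preferred in $\mathbf{P}_i$, the contrapositive of Tolerant Positive Involvement (taking its distinguished winner to be $y$) gives that $y \notin F(\mathbf{P}_{i+1})$ implies $y \notin F(\mathbf{P}_i)$; hence $y \notin F(\mathbf{P}_0) = F(2\mathbf{P})$, and Downward Homogeneity finishes. I expect the only point needing genuine care to be the uniform-parity bookkeeping for the intermediate profiles $\mathbf{P}_i$ together with the verification that even parity rules out any majority path from $y$ to $x$ in $\mathbf{P}_{k-2}$; the rest is a direct transcription of the VCCR proof into the VSCC setting.
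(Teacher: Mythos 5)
Your proposal is correct and follows exactly the route the paper takes: it combines the doubling/parity device from the proof of Theorem~\ref{thm:at-least-as-resolute-vccr-with-ties} (Downward Homogeneity to pass to $2\mathbf{P}$, linear ballots from Lemma~\ref{lem:add-ballot-with-ties} preserving uniform parity, then Coherent Defeat at margin $2$) with the backward induction via Tolerant Positive Involvement from the proof of Theorem~\ref{thm:at-least-as-resolute-vscc}, which is precisely what the paper's one-line proof indicates. (Only a trivial wording slip: the lemma guarantees $y$ is ranked above \emph{at least} the candidates to which it is not majority preferred, not ``precisely'' those, but that is all Tolerant Positive Involvement needs.)
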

\begin{proof}
  By combining the parity argument in the proof of Theorem
  \ref{thm:at-least-as-resolute-vccr-with-ties} and the inductive argument
  using Tolerant Positive Involvement in the proof of Theorem
  \ref{thm:at-least-as-resolute-vscc}.
\end{proof}

To sum up, now we have the following axiomatizations of Split Cycle when ties
are allowed in ballots.
\begin{theorem}
  Allowing ties in ballots, the Split Cycle VCCR is the unique VCCR satisfying
  Anonymity, Neutrality, Availability, Downward Homogeneity, Neutral
  Indifference, Neutral Reversal, Monotonicity (for two-candidate profiles),
  Coherent IIA, Coherent Defeat, and Positive Involvement in Defeat. Downward
  Homogeneity and Neutral Indifference can be replaced by Homogeneity.
\end{theorem}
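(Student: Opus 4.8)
The plan is to obtain this theorem as an immediate corollary of Theorems \ref{thm:no-more-resolute-vccr-with-ties} and \ref{thm:at-least-as-resolute-vccr-with-ties}, once we check that the Split Cycle VCCR itself satisfies all ten axioms in the ties setting. First I would verify that $sc$ satisfies Anonymity, Neutrality, Availability, Homogeneity (hence both Upward and Downward Homogeneity), Neutral Reversal, Neutral Indifference, Monotonicity (for two-candidate profiles), Coherent IIA, Coherent Defeat, and Positive Involvement in Defeat. Most of these are immediate from the fact that $sc$ depends only on the margin graph: Anonymity, Neutrality, Neutral Reversal, and Neutral Indifference hold because the relevant profile operations leave the margin graph fixed, and Homogeneity holds because doubling scales every margin by $2$, which changes no comparison of margins within any cycle. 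Availability holds since $sc$ is acyclic (so $SC(\mathbf{P})\neq\varnothing$), and Coherent Defeat holds by definition of $sc$ (a majority edge on no majority cycle is never a weakest edge of any cycle). Monotonicity for two-candidate profiles is trivial since moving $x$ up can only increase $\margin(x,y)$. Positive Involvement in Defeat is Proposition \ref{SCPID}, and Coherent IIA for $sc$ follows as in \citealt{HP2021} (equivalently, via Lemma \ref{lem:restrict-to-k}); the one thing worth double-checking is that neither of these two proofs uses the parity constraint of Lemma \ref{ParityLem}, so that they remain valid for profiles with ties.

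Next, for uniqueness, let $f$ be any VCCR satisfying the ten listed axioms. Since $f$ satisfies Anonymity, Neutrality, Availability, Neutral Indifference, Monotonicity (for two-candidate profiles), Neutral Reversal, and Coherent IIA, Theorem \ref{thm:no-more-resolute-vccr-with-ties} gives $sc(\mathbf{P})\supseteq f(\mathbf{P})$ for every profile $\mathbf{P}$. Since $f$ also satisfies Downward Homogeneity, Coherent Defeat, and Positive Involvement in Defeat, Theorem \ref{thm:at-least-as-resolute-vccr-with-ties} gives $f(\mathbf{P})\supseteq sc(\mathbf{P})$ for every $\mathbf{P}$. Combining the two inclusions, $f=sc$. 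Together with the first paragraph, this shows $sc$ is the unique VCCR satisfying the ten axioms.

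Finally, for the last sentence of the statement, I would observe that Homogeneity implies Downward Homogeneity outright, and that Homogeneity together with Neutral Reversal implies Neutral Indifference by Proposition \ref{FullHomogeneity}. Hence any VCCR satisfying Anonymity, Neutrality, Availability, Homogeneity, Neutral Reversal, Monotonicity (for two-candidate profiles), Coherent IIA, Coherent Defeat, and Positive Involvement in Defeat also satisfies Downward Homogeneity and Neutral Indifference, so the argument above applies verbatim; and $sc$ satisfies Homogeneity, so it remains the unique such VCCR. The only real point requiring care in the whole argument is confirming that each axiom in the list is genuinely among the hypotheses of one of the two cited theorems (so that both theorems apply to $f$) and that the positive facts about $sc$—in particular Positive Involvement in Defeat and Coherent IIA—do not rely on parity; a brief inspection of those proofs confirms this, so no new work is needed.
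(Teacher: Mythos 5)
Your proposal is correct and follows exactly the paper's route: the theorem is stated there as a summary obtained by combining Theorem \ref{thm:no-more-resolute-vccr-with-ties} and Theorem \ref{thm:at-least-as-resolute-vccr-with-ties} (giving the two inclusions $f\subseteq sc$ and $f\supseteq sc$), noting that $sc$ itself satisfies all the listed axioms (with the proofs of Proposition \ref{SCPID} and Coherent IIA not relying on the parity constraint), and invoking Proposition \ref{FullHomogeneity} for the replacement of Downward Homogeneity and Neutral Indifference by Homogeneity. No gaps; no further commentary needed.
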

\begin{theorem}
  Allowing ties in ballots, the Split Cycle VSCC is the unique VSCC satisfying
  Anonymity, Neutrality, Downward Homogeneity, Neutral
  Indifference, Neutral Reversal, Monotonicity (for two-candidate profiles),
  Coherent IIA, Coherent Defeat, and Tolerant Positive Involvement. Downward
  Homogeneity and Neutral Indifference can be replaced by Homogeneity.
\end{theorem}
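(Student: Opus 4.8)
The plan is to read the theorem off the two one-directional refinement results already proved for the ties setting---Theorem~\ref{thm:no-more-resolute-VSCC-with-ties} and Theorem~\ref{thm:at-least-as-resolute-vscc-with-ties}---after checking that the Split Cycle VSCC itself satisfies all the listed axioms, so that the uniqueness claim is not vacuous. First I would do the uniqueness half: given an arbitrary VSCC $F$ satisfying Anonymity, Neutrality, Downward Homogeneity, Neutral Indifference, Neutral Reversal, Monotonicity (for two-candidate profiles), Coherent IIA, Coherent Defeat, and Tolerant Positive Involvement, observe that the hypotheses of Theorem~\ref{thm:no-more-resolute-VSCC-with-ties} (Anonymity, Neutrality, Neutral Indifference, Monotonicity for two-candidate profiles, Neutral Reversal, Coherent IIA) are all on this list, so $SC(\mathbf{P}) \subseteq F(\mathbf{P})$ for every profile $\mathbf{P}$; and the hypotheses of Theorem~\ref{thm:at-least-as-resolute-vscc-with-ties} (Downward Homogeneity, Coherent Defeat, Tolerant Positive Involvement) are also on this list, so $F(\mathbf{P}) \subseteq SC(\mathbf{P})$ for every $\mathbf{P}$. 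Hence $F(\mathbf{P}) = SC(\mathbf{P})$ for all $\mathbf{P}$, i.e., $F = SC$.

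Next I would verify that $SC$ satisfies all nine axioms in the ties setting. Since $sc$ depends only on the margin graph, Anonymity and Neutrality are immediate. Doubling a profile multiplies every margin and every splitting number by $2$, so it leaves $sc$---and hence $SC$---unchanged: full Homogeneity holds, and a fortiori Downward Homogeneity. Adding a pair of converse ballots or an all-ties ballot does not change the margin graph, which gives Neutral Reversal and Neutral Indifference. Monotonicity for two-candidate profiles holds because on two candidates $sc$ is just the majority-preference edge. Coherent Defeat is immediate from Lemma~\ref{lem:restrict-to-k}. Coherent IIA for $SC$ follows from Proposition~\ref{prop:ciia-vccr-to-vscc} together with the fact (from \citealt{HP2021}, whose argument concerns only margin graphs and the relation $\rightsquigarrow_{x,y}$ and so survives the passage to weak-order ballots) that $sc$ satisfies Coherent IIA. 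Finally, Tolerant Positive Involvement follows from Proposition~\ref{prop:pid-to-tpi}, since $sc$ satisfies Positive Involvement in Defeat (Proposition~\ref{SCPID}, whose proof nowhere invokes the parity constraint) and Majority Defeat.

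For the last clause, that Downward Homogeneity and Neutral Indifference may be replaced by Homogeneity: Homogeneity trivially implies Downward Homogeneity, and by Proposition~\ref{FullHomogeneity} Homogeneity together with Neutral Reversal implies Neutral Indifference, with Neutral Reversal already on the remaining list. So any VSCC satisfying Anonymity, Neutrality, Homogeneity, Neutral Reversal, Monotonicity (for two-candidate profiles), Coherent IIA, Coherent Defeat, and Tolerant Positive Involvement satisfies all the axioms of the first formulation and hence equals $SC$; and $SC$ itself satisfies full Homogeneity, as just noted.

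There is no real obstacle here: the theorem is a straightforward corollary of the two cited refinement theorems. The only point requiring attention is bookkeeping---confirming that every axiom invoked by those two theorems appears on the hypothesis list---together with the observation that the verifications that $sc$ satisfies Positive Involvement in Defeat and Coherent IIA do not secretly use the linear-ballot parity constraint. That is the only place the ties setting could have caused trouble, and it does not.
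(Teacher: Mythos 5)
Your proof is correct and takes essentially the same route as the paper, which presents this theorem as a direct corollary of Theorems \ref{thm:no-more-resolute-VSCC-with-ties} and \ref{thm:at-least-as-resolute-vscc-with-ties}, with Proposition \ref{FullHomogeneity} (plus the trivial implication from Homogeneity to Downward Homogeneity) handling the final replacement clause. Your explicit check that $SC$ itself satisfies all the listed axioms in the weak-order setting---in particular that the proofs that $sc$ satisfies Positive Involvement in Defeat and Coherent IIA never invoke the linear-ballot parity constraint---is left implicit in the paper but is precisely the right bookkeeping.
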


\section{The Necessity of Three Special Axioms}
\label{sec:necessity}
In this section, we show the necessity of the three special axioms in the axiomatization of Split Cycle: Coherent IIA, Coherent Defeat, and Positive Involvement in Defeat (and Tolerant Positive Involvement in the context of VSCCs). For simplicity, we go back to the linear ballot setting.

\subsection{Coherent IIA}
We first exhibit a VCCR satisfying all the standard axioms, Coherent Defeat, and Positive Involvement in Defeat, but not Coherent IIA. By Theorem \ref{thm:at-least-as-resolute-vccr}, this VCCR must refine  Split Cycle.

\begin{definition}\label{scwc}
  Recall from Definition \ref{WCdef} the weighted covering VCCR $wc$ defined by $(x, y) \in wc(\mathbf{P})$ iff $\margin_{\mathbf{P}}(x, y) > 0$ and for all $z \in X(\mathbf{P})$, $\margin_{\mathbf{P}}(x, z) \ge \margin_{\mathbf{P}}(y, z)$. Define the VCCR $scwc$ as the profile-wise union of $sc$ and $wc$.\footnote{A previous version of this paper used $\margin_{\mathbf{P}}(x, z) > \margin_{\mathbf{P}}(y, z)$ in place of $\margin_{\mathbf{P}}(x, z) \ge \margin_{\mathbf{P}}(y, z)$ in Definition~\ref{scwc}. We are grateful to Dan Lepor (personal communication) for pointing out that the version of $scwc$ with $\geq$ in place of $>$ also satisfies Positive Involvement. Clearly using $\geq$ makes $scwc$ more resolute.} That is, $scwc(\mathbf{P}) = sc(\mathbf{P}) \cup wc(\mathbf{P})$. In other words, $x$ defeats $y$ according to $scwc$ iff either $x$ defeats $y$ according to $sc$ or $x$ defeats $y$ according to $wc$.
\end{definition}

It requires a delicate argument to show that $scwc$ satisfies Availability. 
\begin{proposition}\label{SCWCacyclic}
  The VCCR $scwc$ satisfies Availability. In fact, $scwc(\mathbf{P})$ is always acyclic.
\end{proposition}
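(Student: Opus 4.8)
The plan is to establish the stronger statement that $scwc(\mathbf{P})$ is acyclic for every profile $\mathbf{P}$; Availability then follows at once from Lemma \ref{VCCRtoVSCC}. The engine of the argument is a single \emph{composition lemma}: for any profile $\mathbf{P}$ and $u,v,w\in X(\mathbf{P})$, if $(u,v)\in wc(\mathbf{P})$ and $(v,w)\in scwc(\mathbf{P})$, then $(u,w)\in scwc(\mathbf{P})$. Informally, a weighted-covering edge can always be ``slid forward'' along any $scwc$-edge.

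Granting this lemma, the acyclicity argument is brief. Both $sc(\mathbf{P})$ and $wc(\mathbf{P})$ are contained in the majority-preference relation (immediate from Lemma \ref{EquivDef} and Definition \ref{WCdef}), which is asymmetric and irreflexive, so $scwc(\mathbf{P})$ has no loops and no $2$-cycles. Suppose for contradiction that $scwc(\mathbf{P})$ has a cycle, and take one, $C=(v_0,v_1,\dots,v_{n-1},v_0)$, of minimal length $n$; then $n\ge 3$, and by minimality $C$ is simple. If some edge $(v_i,v_{i+1})$ of $C$ (indices mod $n$) lies in $wc(\mathbf{P})$, then the composition lemma with $u=v_i$, $v=v_{i+1}$, $w=v_{i+2}$ gives $(v_i,v_{i+2})\in scwc(\mathbf{P})$. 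If $n\ge 4$, deleting $v_{i+1}$ and using this new edge yields a shorter cycle in $scwc(\mathbf{P})$, contradicting minimality; if $n=3$, then $(v_{i+2},v_i)$ is already an edge of $C$, so $(v_i,v_{i+2})$ and $(v_{i+2},v_i)$ are both in $scwc(\mathbf{P})$, contradicting asymmetry. Hence no edge of $C$ lies in $wc(\mathbf{P})$, so every edge of $C$ lies in $sc(\mathbf{P})$, making $C$ a cycle in $sc(\mathbf{P})$ and contradicting the acyclicity of $sc$ (\citealt{HP2020}).

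The substance of the proof is thus the composition lemma, which I would argue as follows. From $(u,v)\in wc(\mathbf{P})$ we have $\margin_{\mathbf{P}}(u,v)>0$ and $\margin_{\mathbf{P}}(u,t)\ge\margin_{\mathbf{P}}(v,t)$ for all $t\in X(\mathbf{P})$; and since $scwc(\mathbf{P})$ lies in the majority-preference relation, $\margin_{\mathbf{P}}(v,w)>0$, whence $\margin_{\mathbf{P}}(u,w)\ge\margin_{\mathbf{P}}(v,w)>0$. If in addition $(v,w)\in wc(\mathbf{P})$, then $\margin_{\mathbf{P}}(v,t)\ge\margin_{\mathbf{P}}(w,t)$ for all $t$, so $\margin_{\mathbf{P}}(u,t)\ge\margin_{\mathbf{P}}(w,t)$ for all $t$, and therefore $(u,w)\in wc(\mathbf{P})\subseteq scwc(\mathbf{P})$. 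The remaining case is $(v,w)\in sc(\mathbf{P})$, where I would show $(u,w)\in sc(\mathbf{P})$ using Lemma \ref{lem:restrict-to-k}: as $\margin_{\mathbf{P}}(u,w)>0$, it suffices to rule out a majority path from $w$ to $u$ in $\mathcal{M}(\mathbf{P})\uph\margin_{\mathbf{P}}(u,w)$. Suppose such a path existed; it has length at least $2$ (a single edge $w\to u$ is impossible since $\margin_{\mathbf{P}}(w,u)<0$), so it has the form $(w,\dots,t,u)$, with $\margin_{\mathbf{P}}(t,u)\ge\margin_{\mathbf{P}}(u,w)$. Now $\margin_{\mathbf{P}}(u,t)\ge\margin_{\mathbf{P}}(v,t)$ is equivalent to $\margin_{\mathbf{P}}(t,v)\ge\margin_{\mathbf{P}}(t,u)$, so $\margin_{\mathbf{P}}(t,v)\ge\margin_{\mathbf{P}}(t,u)\ge\margin_{\mathbf{P}}(u,w)\ge\margin_{\mathbf{P}}(v,w)$. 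Hence, replacing the final edge $t\to u$ by the edge $t\to v$ turns the path into a majority walk from $w$ to $v$ all of whose edges have weight at least $\margin_{\mathbf{P}}(v,w)$; deleting any repeated vertex yields a genuine majority path from $w$ to $v$ in $\mathcal{M}(\mathbf{P})\uph\margin_{\mathbf{P}}(v,w)$, which by Lemma \ref{lem:restrict-to-k} contradicts $(v,w)\in sc(\mathbf{P})$. So no such path exists, and $(u,w)\in sc(\mathbf{P})$.

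I expect the main obstacle to be precisely this last case: one must convert a hypothetical return path from $w$ back to $u$ into a return path from $w$ back to $v$ without letting any edge weight fall below $\margin_{\mathbf{P}}(v,w)$, and the only available tool is the weighted-covering inequality $\margin_{\mathbf{P}}(u,t)\ge\margin_{\mathbf{P}}(v,t)$; the key observation is that this inequality reroutes exactly the last edge of the path (the one entering $u$) with no loss of weight. The bookkeeping---that the rerouted walk might revisit a vertex, and that $t\ne v$ (otherwise $\margin_{\mathbf{P}}(v,u)>0$ would contradict $(u,v)\in wc(\mathbf{P})$)---is harmless and dealt with by truncation. The rest (asymmetry of $scwc$, the minimal-cycle contraction, transitivity of the covering dominance) is routine.
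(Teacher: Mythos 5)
Your proof is correct and takes essentially the same approach as the paper's: the heart in both cases is a composition lemma that merges a weighted-covering edge with a Split Cycle edge by using the dominance inequality $\margin_{\mathbf{P}}(u,t)\ge\margin_{\mathbf{P}}(v,t)$ to reroute one edge of a hypothetical return path, combined with the transitivity of $wc$ and the acyclicity of $sc$. The differences are mirror images of each other and immaterial: the paper proves that an $sc$-edge followed by a $wc$-edge composes to an $sc$-edge (rerouting the first edge of the return path) and absorbs $wc$-edges around an arbitrary cycle, while you prove that a $wc$-edge followed by an $scwc$-edge composes (rerouting the last edge, with the $t\neq v$ check handled correctly) and contract a minimal cycle; both arguments are sound.
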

\begin{proof}
  Pick any profile $\mathbf{P}$. We claim that for any $x, y, z \in X(\mathbf{P})$, if $(x, y) \in sc(\mathbf{P})$ and $(y, z) \in wc(\mathbf{P})$, then $(x, z) \in sc(\mathbf{P})$. Suppose $(x, y) \in sc(\mathbf{P})$ and $(y, z) \in wc(\mathbf{P})$. From $(y, z) \in wc(\mathbf{P})$, we have  $\margin_{\mathbf{P}}(y, x) \ge \margin_{\mathbf{P}}(z, x)$, which implies (1) $\margin_{\mathbf{P}}(x, z) \ge \margin_{\mathbf{P}}(x, y)$. Now pick any majority path $\rho$ from $z$ to $x$, and our goal is to show that $\margin_{\mathbf{P}}(x, z) > \strength_{\mathbf{P}}(\rho)$. Let $a$ be the candidate immediately after $z$ in the majority path $\rho$, and let $\rho'$ be the majority path that starts with $y$, goes to $a$ immediately, and then continues to $x$ following $\rho$. Since $(y, z) \in wc(\mathbf{P})$, $\margin_{\mathbf{P}}(y, a) \ge \margin_{\mathbf{P}}(z, a)$. This means (2) $\strength_{\mathbf{P}}(\rho') \ge \strength_{\mathbf{P}}(\rho)$. Since $(x, y) \in sc(\mathbf{P})$, $\margin_{\mathbf{P}}(x, y) > \strength_{\mathbf{P}}(\rho')$. Connecting this with inequalities (1) and (2), $\margin_{\mathbf{P}}(x, z) > \strength_{\mathbf{P}}(\rho)$, which shows that $(x, z) \in sc(\mathbf{P})$.

  Now we note that $wc(\mathbf{P})$ is transitive and irreflexive and hence acyclic. So if there is a cycle in $scwc(\mathbf{P})$, the cycle must contain at least one edge in $sc(\mathbf{P})$. But then we can repeatedly use the above claim and obtain a cycle in $sc(\mathbf{P})$, contradicting the acyclicity of $sc$.
\end{proof}

\begin{proposition}
  The VCCR $scwc$ satisfies all the standard axioms, Coherent Defeat, and Positive Involvement in Defeat. Moreover, it satisfies Majority Defeat. Thus, using Proposition \ref{prop:pid-to-tpi}, $\overline{scwc}$ satisfies all the standard axioms, Coherent Defeat, and Tolerant Positive Involvement.
\end{proposition}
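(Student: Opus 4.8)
The plan is to exploit that $scwc$ is the profile-wise union $sc \cup wc$, so that each required property of $scwc$ reduces to the corresponding already-established property of $sc$ and of $wc$. First I would record a general bookkeeping observation about any VCCR of the form $f(\mathbf{P}) = g(\mathbf{P}) \cup h(\mathbf{P})$. If $g$ and $h$ both satisfy Anonymity, Neutrality, Homogeneity (hence Upward Homogeneity), Neutral Reversal, or Monotonicity (for two-candidate profiles), then so does $f$: in each case the hypothesis fixes or enlarges each of $g(\mathbf{P})$ and $h(\mathbf{P})$ in the prescribed way, and a union of the conclusions is the conclusion for the union. If merely $g$ satisfies Coherent Defeat, then so does $f$, since the conclusion ``$(x,y) \in g(\mathbf{P})$'' only gets easier when we pass to the larger relation $g(\mathbf{P}) \cup h(\mathbf{P})$. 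The one point deserving care is the preservation of \emph{negative}-type conditions. For Majority Defeat, ``$(x,y) \in f(\mathbf{P})$'' unpacks as a disjunction, so if $g$ and $h$ each satisfy Majority Defeat we get $\margin_{\mathbf{P}}(x,y) > 0$ in either disjunct. For Positive Involvement in Defeat, ``$y$ is not defeated by $x$ according to $f$'' unpacks as the conjunction ``$(x,y) \notin g(\mathbf{P})$ and $(x,y) \notin h(\mathbf{P})$''; applying Positive Involvement in Defeat to $g$ and to $h$ separately yields $(x,y) \notin g(\mathbf{P}')$ and $(x,y) \notin h(\mathbf{P}')$, i.e.\ $(x,y) \notin f(\mathbf{P}')$.

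Next I would instantiate this with $g = sc$ and $h = wc$. For $sc$: the standard axioms hold (see \citealt{HP2021}); Coherent Defeat is immediate from Lemma \ref{EquivDef} (if $\margin_{\mathbf{P}}(x,y) > 0$ and no majority cycle contains $x$ and $y$, there is no simple majority path from $y$ to $x$, so the defeat condition holds vacuously); Majority Defeat is built into the definition of $sc$; and Positive Involvement in Defeat is Proposition \ref{SCPID}. For $wc$: Anonymity, Neutrality, and Neutral Reversal hold since $wc$ depends only on the margin graph; Homogeneity holds since scaling all margins by $2$ preserves each inequality $\margin_{\mathbf{P}}(x,z) \ge \margin_{\mathbf{P}}(y,z)$ as well as $\margin_{\mathbf{P}}(x,y) > 0$; Monotonicity holds because moving $x$ up one place only weakly increases each $\margin(x,z)$ while leaving $\margin(y,z)$ unchanged and keeping $\margin(x,y) > 0$; Availability holds since $wc$ is transitive and irreflexive, hence acyclic (as noted in the proof of Proposition \ref{SCWCacyclic}); Majority Defeat is part of Definition \ref{WCdef}; and Positive Involvement in Defeat is Proposition \ref{WCPID}. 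Availability of $scwc$ itself is exactly Proposition \ref{SCWCacyclic}. Assembling these via the observation above gives all the claimed properties of the VCCR $scwc$, Majority Defeat included.

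Finally, for $\overline{scwc}$: Anonymity, Neutrality, Homogeneity, and Neutral Reversal transfer immediately because $\overline{scwc}(\mathbf{P})$ is a function of the relation $scwc(\mathbf{P})$; Monotonicity for two-candidate profiles holds because on a two-candidate profile both $sc$ and $wc$ collapse to the majority relation, so $\overline{scwc}$ returns exactly the candidate(s) with nonnegative margin, and moving $x$ up weakly increases $\margin(x,y)$. Since $scwc$ is an acyclic VCCR (Proposition \ref{SCWCacyclic}) satisfying Positive Involvement in Defeat and Majority Defeat, Proposition \ref{prop:pid-to-tpi} gives that $\overline{scwc}$ satisfies Tolerant Positive Involvement, and Coherent Defeat for $\overline{scwc}$ follows from Coherent Defeat for $scwc$ (if $\margin_{\mathbf{P}}(x,y) > 0$ with no majority path from $y$ to $x$, then $(x,y) \in scwc(\mathbf{P})$, so $y \notin \overline{scwc}(\mathbf{P})$). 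I do not expect a genuine obstacle here: the one delicate point, acyclicity of $scwc$, is already dealt with in Proposition \ref{SCWCacyclic}, and everything else is routine bookkeeping of pushing axioms through a union—the only subtlety being the treatment of the negative-type conditions Positive Involvement in Defeat and Majority Defeat described in the first paragraph.
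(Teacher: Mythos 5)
Your proposal is correct and follows essentially the same route as the paper: reduce every axiom for $scwc$ to the corresponding property of $sc$ and $wc$ (using Propositions \ref{SCPID}, \ref{WCPID}, and \ref{SCWCacyclic} for Availability), handle Positive Involvement in Defeat and Majority Defeat by the case/conjunction analysis for the union, and then invoke Proposition \ref{prop:pid-to-tpi} for Tolerant Positive Involvement of $\overline{scwc}$. The only cosmetic difference is that the paper dispatches Anonymity, Neutrality, Homogeneity, and Neutral Reversal at once via the fact that $scwc$ depends only on the (ordinal) margin graph, whereas you push each axiom through the union separately, which amounts to the same thing.
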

\begin{proof}
  Since $scwc(\mathbf{P})$ only depends on the margin graph of $\mathbf{P}$---and indeed only on the ordering of margins by size, not the numerical values---it follows that Anonymity, Neutrality, Homogeneity, and Neutral Reversal are  satisfied. Monotonicity is also easy to see as both $sc$ and $wc$ satisfy Monotonicity. We showed in Proposition~\ref{SCWCacyclic} that $scwc$ satisfies Availability.  Next, $scwc$ satisfies Coherent Defeat since it refines $sc$, and $sc$ satisfies Coherent Defeat. Since both $sc$ and $wc$ satisfy Positive Involvement in Defeat by Propositions \ref{SCPID} and \ref{WCPID}, reasoning by cases we see that $scwc$ satisfies Positive Involvement in Defeat. Thus $scwc$ satisfies all the stated axioms. Since $sc$ and $wc$ both satisfy Majority Defeat, $scwc$ also satisfies Majority Defeat, which implies that $\overline{scwc}$ satisfies Tolerant Positive Involvement by Proposition \ref{prop:pid-to-tpi}. That $\overline{scwc}$ satisfies the other stated axioms is easy to check.
\end{proof}

\begin{proposition}
  Neither the VCCR $scwc$ nor the VSCC $\overline{scwc}$ satisfies Coherent IIA.
\end{proposition}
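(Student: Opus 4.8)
The plan is to exhibit a single four-candidate profile $\mathbf{P}$, with candidates $x,y,z_1,z_2$, whose margin graph has every margin equal to $2$ and consists of the edges $x\to y$, $y\to z_1$, $z_1\to z_2$, $z_2\to x$ (a Hamiltonian majority cycle) together with the two ``chords'' $x\to z_1$ and $z_2\to y$. By Debord's theorem such a linear profile exists, and by first adding a pair of converse ballots if needed we may take $\mathbf{P}$ to contain a voter ranking $x$ immediately above $z_1$. Since every margin is $2$ and the graph is strongly connected, every edge lies on a majority cycle in which it is tied for weakest, so $sc(\mathbf{P})=\varnothing$ by Lemma~\ref{EquivDef}; on the other hand $x$ weakly covers $y$, the only nontrivial comparisons being with $z_1$ (where $\margin_{\mathbf{P}}(x,z_1)=2=\margin_{\mathbf{P}}(y,z_1)$) and with $z_2$ (where $\margin_{\mathbf{P}}(x,z_2)=-2=\margin_{\mathbf{P}}(y,z_2)$), so $(x,y)\in wc(\mathbf{P})$ by Definition~\ref{WCdef}. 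Thus $(x,y)\in scwc(\mathbf{P})$ even though $(x,y)\notin sc(\mathbf{P})$; a routine check of the remaining edges also gives $wc(\mathbf{P})=\{(x,y)\}$, so $scwc(\mathbf{P})=\{(x,y)\}$ and $\overline{scwc}(\mathbf{P})=\{x,z_1,z_2\}$.

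For the VCCR claim, form $\mathbf{P}'$ by swapping $x$ and $z_1$ in that one voter's ballot. This leaves the $\{x,y\}$-restriction and all margins unchanged except that $\margin(x,z_1)$ drops from $2$ to $0$, so $\mathbf{P}\rightsquigarrow_{x,y}\mathbf{P}'$ (only the margin on one edge not connecting $x$ and $y$ was reduced). The cycle $x,y,z_1,z_2,x$ survives with splitting number $2=\margin_{\mathbf{P}'}(x,y)$, so $(x,y)\notin sc(\mathbf{P}')$ by Lemma~\ref{EquivDef}; and weighted covering now fails since $\margin_{\mathbf{P}'}(x,z_1)=0<2=\margin_{\mathbf{P}'}(y,z_1)$, so $(x,y)\notin wc(\mathbf{P}')$. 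Hence $(x,y)\notin scwc(\mathbf{P}')$, which together with $(x,y)\in scwc(\mathbf{P})$ and $\mathbf{P}\rightsquigarrow_{x,y}\mathbf{P}'$ witnesses a failure of Coherent IIA for $scwc$.

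For the VSCC claim, note that in $\mathbf{P}$ the candidate $y$ is defeated only by $x$, so $y\notin\overline{scwc}(\mathbf{P})$; by Definition~\ref{def:coherent-IIA-VSCC} it suffices to show that for every $w\in X(\mathbf{P})\setminus\{y\}=\{x,z_1,z_2\}$ there is a profile $\mathbf{P}_w$ with $\mathbf{P}\rightsquigarrow_{w,y}\mathbf{P}_w$ and $y\in\overline{scwc}(\mathbf{P}_w)$. For $w=x$, take $\mathbf{P}_x=\mathbf{P}'$ above: by Majority Defeat the only possible defeaters of $y$ in $\mathbf{P}'$ are $x$ and $z_2$, and one checks $(x,y)\notin sc(\mathbf{P}')\cup wc(\mathbf{P}')$ (just shown) and $(z_2,y)\notin sc(\mathbf{P}')\cup wc(\mathbf{P}')$ (no majority path from $y$ to $z_2$ shorter than the cycle, and covering fails at $z_1$), so $y\in\overline{scwc}(\mathbf{P}')$. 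For $w=z_1$, take $\mathbf{P}_{z_1}=\mathbf{P}_{|\{z_1,y\}}$, in which $y$ is majority preferred to $z_1$ and hence is the unique $\overline{scwc}$-winner. For $w=z_2$, take $\mathbf{P}_{z_2}=\mathbf{P}_{|\{y,z_1,z_2\}}$ (deleting $x$), whose margin graph is a three-cycle with all margins $2$, so $sc=wc=\varnothing$ and every candidate, $y$ included, is undefeated. Thus $\overline{scwc}$ fails Coherent IIA at $(\mathbf{P},y)$.

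The main obstacle is bookkeeping rather than ideas: one must verify a handful of membership facts in $sc$ and $wc$ for $\mathbf{P}$, $\mathbf{P}'$, and the restrictions against Lemma~\ref{EquivDef} and Definition~\ref{WCdef}, and — the one genuinely delicate point — ensure that $\mathbf{P}$ and $\mathbf{P}'$ really are linear profiles over the same voter set with identical $\{x,y\}$-restrictions, which is exactly what the reversal-pair padding together with the single adjacent swap guarantees. Everything else is mechanical.
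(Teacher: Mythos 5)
Your proposal is correct and follows essentially the same route as the paper: exhibit a concrete profile where $(x,y)\in wc\setminus sc$, pass to a $\rightsquigarrow_{x,y}$-related profile in which both the $sc$- and $wc$-defeats of $y$ by $x$ disappear, and for the VSCC handle each potential defeater of $y$ via restrictions (and the deformed profile for $x$). The only difference is the particular counterexample—your all-margins-$2$ graph built via Debord plus a single adjacent swap, versus the paper's explicit nine-voter profile with margins $1,3,5$—and your verifications of the membership facts all check out.
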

\begin{proof}
  Let $\pfp$ be the following profile with its margin graph displayed on the right:
  \begin{center} 
    $\pfp$ \hspace{2em}
  \begin{tabular}[]{ccccccc}
    $v_1$ & $v_{2, 3}$ & $v_{4, 5}$ & $v_{6, 7}$ & $v_8$ & $v_9$ \\
    \hline 
    $a$ & $a$ & $b$ & $y$ & $x$ & $x$ \\
    $x$ & $y$ & $a$ & $x$ & $y$ & $b$ \\
    $y$ & $x$ & $x$ & $b$ & $b$ & $a$ \\
    $b$ & $b$ & $y$ & $a$ & $a$ & $y$
  \end{tabular}
  \hspace{3em}
  \begin{tikzpicture}[baseline=(current bounding box.center)]
    \node[circle,draw, minimum width=0.25in] at (0,0) (x) {$x$}; 
    \node[circle,draw,minimum width=0.25in] at (2.15,0) (y) {$y$}; 
    \node[circle,draw,minimum width=0.25in] at (0,2.15) (a) {$a$}; 
    \node[circle,draw,minimum width=0.25in] at (2.15,2.15) (b) {$b$}; 

    \path[->,draw,thick] (x) to node[fill=white] {$1$} (y);
    \path[->,draw,thick] (y) to node[fill=white] {$3$} (b);
    \path[->,draw,thick] (b) to node[fill=white] {$3$} (a);
    \path[->,draw,thick] (a) to node[fill=white] {$1$} (x);
    \path[->,draw,thick] (a) to node[near start, fill=white] {$3$} (y);
    \path[->,draw,thick] (x) to node[near start, fill=white] {$5$} (b);
  \end{tikzpicture}
  \end{center}
  The column under $v_{i, i+1}$ means that the voters $v_i$ and $v_{i+1}$ submitted the same ballot. Observe that $sc(\pfp) = \{(x, b)\}$ and $wc(\pfp) = \{(x, y)\}$. (A quick method to check: if there is a 3-cycle, then none of the three edges are in $wc$.) Thus $scwc(\pfp) = \{(x, b), (x, y)\}$ and $\overline{scwc}(\pfp) = \{a, x\}$.

Now consider profile $\pfq$ and its margin graph:
  \begin{center} 
  $\pfq$ \hspace{2em}
  \begin{tabular}[]{ccccccc}
    $v_1$ & $v_{2, 3}$ & $v_{4, 5}$ & $v_{6, 7}$ & $v_8$ & $v_9$ \\
    \hline 
    $a$ & $a$ & $b$ & $y$ & $x$ & $x$ \\
    $x$ & $y$ & $a$ & $x$ & $y$ & $b$ \\
    $y$ & $b$ & $x$ & $b$ & $b$ & $a$ \\
    $b$ & $x$ & $y$ & $a$ & $a$ & $y$
  \end{tabular}
  \hspace{3em}
  \begin{tikzpicture}[baseline=(current bounding box.center)]
    \node[circle,draw, minimum width=0.25in] at (0,0) (x) {$x$}; 
    \node[circle,draw,minimum width=0.25in] at (2.15,0) (y) {$y$}; 
    \node[circle,draw,minimum width=0.25in] at (0,2.15) (a) {$a$}; 
    \node[circle,draw,minimum width=0.25in] at (2.15,2.15) (b) {$b$}; 

    \path[->,draw,thick] (x) to node[fill=white] {$1$} (y);
    \path[->,draw,thick] (y) to node[fill=white] {$3$} (b);
    \path[->,draw,thick] (b) to node[fill=white] {$3$} (a);
    \path[->,draw,thick] (a) to node[fill=white] {$1$} (x);
    \path[->,draw,thick] (a) to node[near start, fill=white] {$3$} (y);
    \path[->,draw,thick] (x) to node[near start, fill=white] {$1$} (b);
  \end{tikzpicture}
  \end{center}
Observe that $\pfp \rightsquigarrow_{x, y} \pfq$ (only $v_2$ and $v_3$ changed ballots). Now $(x, y) \not\in scwc(\pfq) = \varnothing$. Thus $\pfp, \pfq$ witness $scwc$ failing Coherent IIA. To show that $\overline{scwc}$ fails Coherent IIA, we must be more careful. Starting with the fact that $y \not\in \overline{scwc}(\pfp)$, we must show that for any candidate $u\not=y$, there is $\pfp'$ such that $\pfp \rightsquigarrow_{y, u} \pfp'$ and $y \in \overline{scwc}(\pfp')$. If $u = b$, we can use $\pfp' = \pfp_{|\{y, b\}}$. If $u = a$, we can use $\pfp' = \pfp_{|\{y, a, b\}}$. And finally if $u = x$, we use $\pfq$.
\end{proof}
Thus, for either the VCCR $sc$ or the VSCC $SC$, Coherent IIA cannot be dropped from our axioms.

\subsection{Positive Involvement in Defeat and Tolerant Positive Involvement}

Now we turn to the necessity of Positive Involvement in Defeat and Tolerant Positive Involvement. For the following definition, as a convention we define the minimum of the empty set to be $\infty$ (an object greater than every natural number) and the maximum of the empty set to be $0$.

\begin{definition}
  For any positively weighted directed graph $\mathcal{M}$ and simple path
  $\rho$ from $x$ to $y$ in $\mathcal{M}$, define the \emph{ignore-source
    strength $\stris_{\mathcal{M}}(\rho)$ of $\rho$} to be the minimum of the
  weights of the consecutive edges in $\rho$ that do not start from $x$. The \textit{ignore-source strength $\stris_{\mathcal{M}}(x,y)$ from $x$ to $y$} is the
  maximum of $\stris_{\mathcal{M}}(\rho)$ for all majority paths $\rho$ from $x$ to $y$.
  Then define the VCCR \emph{Ignore-source Split Cycle} $isc$ by: for any
  profile $\mathbf{P}$ and $x,y \in X(\mathbf{P})$, $(x, y) \in isc(\mathbf{P})$
  iff $\margin_{\mathbf{P}}(x, y) > \stris_{\mathcal{M}(\mathbf{P})}(y,x)$. The
  Ignore-source Split Cycle VSCC $ISC$ is defined as $\overline{isc}$ (recall
  Lemma \ref{VCCRtoVSCC}).
\end{definition}

From the definition, the following observations are immediate. 
\begin{lemma}
  \label{lem:str-is}
  Let $\mathcal{M}$ be a margin graph and $x, y$ vertices in $\mathcal{M}$.
  \begin{enumerate}
  \item $\stris_{\mathcal{M}}(x, y) = \infty$ iff $(x, y)$ is an edge in
    $\mathcal{M}$. $\stris_{\mathcal{M}}(x, y) = 0$ iff $y$ is not reachable
    from $x$.
  \item\label{lem:str-is.2} $\stris_{\mathcal{M}}(x, y) = \strength_{\mathcal{M}'}(x, y)$ where
    $\mathcal{M}'$ is the weighted directed graph obtained from $\mathcal{M}$ by increasing to $\infty$ the weight of each
    edge whose source is $x$. Here $\strength_{\mathcal{M}'}$ is calculated in the
    standard way: it is the maximum of the strengths of all paths from $x$ to $y$, 
    where the strength of a path is the minimum of the weights of the
    consecutive edges in that path. Thus `ignore source' can also be viewed as
    `infinity source'.
  \item Suppose $\stris_{\mathcal{M}}(x, y) < \infty$. Then if we delete all
    edges in $\mathcal{M}$ that (1) do not start from $x$ and (2) have weights
    no greater than $\stris_{\mathcal{M}}(x,y)$, then $y$ is no longer
    reachable from $x$. In other words, when $\stris_{\mathcal{M}}(x, y) <
    \infty$, the set of all edges not starting from $x$ and with weights no
    greater than $\stris_{\mathcal{M}}(x, y)$ is a cut from $x$ to $y$.
  \item For any cut $C$ from $x$ to $y$ that does not use any edge starting from
    $x$, $\stris_{\mathcal{M}}(x,y)$ is at most the maximal weight of the
    edges in $C$.
  \end{enumerate}
\end{lemma}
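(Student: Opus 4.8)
The plan is to unwind the definitions directly, leaning on the stated conventions $\min\varnothing = \infty$ and $\max\varnothing = 0$, together with one preliminary observation: it suffices to take the maximum defining $\stris_{\mathcal{M}}(x,y)$ over \emph{simple} majority paths from $x$ to $y$. Indeed, deleting a cycle from a non-simple majority path from $x$ to $y$ only removes edges and does not change the source $x$, so it cannot decrease the ignore-source strength; hence the supremum over all majority paths agrees with the supremum over simple ones. I will also use repeatedly that in a simple path from $x$ to $y$ at most one edge — the first — has source $x$.

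For item~1, if $(x,y)$ is an edge then the one-edge path $\rho=(x,y)$ has no edge with source other than $x$, so $\stris_{\mathcal{M}}(\rho)=\min\varnothing=\infty$ and thus $\stris_{\mathcal{M}}(x,y)=\infty$; conversely, if $\stris_{\mathcal{M}}(x,y)=\infty$ then some simple path $\rho$ from $x$ to $y$ has $\stris_{\mathcal{M}}(\rho)=\infty$, and since all weights are finite positive integers this forces $\rho$ to have no edge with source $\neq x$, so by simplicity $\rho$ is the single edge $(x,y)$. For the second equivalence: if $y$ is not reachable from $x$ there are no majority paths, so $\stris_{\mathcal{M}}(x,y)=\max\varnothing=0$; and if $y$ is reachable, any simple path $\rho$ from $x$ to $y$ has $\stris_{\mathcal{M}}(\rho)\ge 1$ (it is $\infty$ or a minimum of positive integers), so $\stris_{\mathcal{M}}(x,y)\ge 1>0$. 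Item~2 is then immediate: $\mathcal{M}$ and $\mathcal{M}'$ have the same underlying digraph, hence the same simple paths from $x$ to $y$, and along any such path the $\mathcal{M}'$-weights coincide with the $\mathcal{M}$-weights except that the first edge has weight $\infty$; so $\strength_{\mathcal{M}'}(\rho)$ equals the minimum of the $\mathcal{M}$-weights of the non-initial edges of $\rho$, which is exactly $\stris_{\mathcal{M}}(\rho)$, and taking maxima gives the claim.

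Items~3 and~4 are the two directions of a minimal-cut characterization, and I would prove each by contradiction against the maximality defining $\stris_{\mathcal{M}}(x,y)$. For item~3, set $s=\stris_{\mathcal{M}}(x,y)<\infty$ and suppose that after deleting every edge with source $\neq x$ and weight $\le s$ there were still a simple path $\rho$ from $x$ to $y$; then every edge of $\rho$ with source $\neq x$ has weight $>s$, whence $\stris_{\mathcal{M}}(\rho)>s$ (using $\infty>s$ when $\rho$ has no such edge), contradicting maximality. For item~4, let $C$ be a cut from $x$ to $y$ using no edge with source $x$, let $w$ be the maximal weight of an edge of $C$ (with $w=0$ if $C=\varnothing$), and suppose $\stris_{\mathcal{M}}(x,y)>w$; if this value is $\infty$ then by item~1 $(x,y)$ is an edge, forcing $(x,y)\in C$ and contradicting that $C$ avoids edges with source $x$, so that case is vacuous; otherwise pick a simple path $\rho$ from $x$ to $y$ with $\stris_{\mathcal{M}}(\rho)>w$, note that $\rho$ meets $C$ in some edge $e$ which has source $\neq x$, and conclude $\mathrm{weight}(e)\ge \stris_{\mathcal{M}}(\rho)>w\ge \mathrm{weight}(e)$, a contradiction.

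There is no deep obstacle here; the entire lemma is a matter of careful bookkeeping. The points that most need attention are the correct handling of the $\min\varnothing=\infty$ and $\max\varnothing=0$ conventions (especially the boundary values in item~1), the reduction from arbitrary majority paths to simple ones, and the small observation in item~4 that the hypothesis is vacuously satisfied precisely when $(x,y)$ is itself an edge of $\mathcal{M}$.
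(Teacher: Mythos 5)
Your proof is correct, and it matches the paper's intent exactly: the paper states this lemma without proof ("From the definition, the following observations are immediate"), and your argument is precisely the direct unwinding of the definitions that is being left to the reader. The details you supply — the reduction to simple paths, the careful handling of $\min\varnothing=\infty$ and $\max\varnothing=0$, and the observation that the hypothesis of item~4 is vacuous when $(x,y)$ is an edge — are all accurate and are the right points to check.
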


\begin{proposition}
  $isc$ satisfies Anonymity, Neutrality, Availability, Homogeneity,
  Monotonicity, Neutral Reversal, Coherent IIA, Coherent Defeat, and First-place
  Involvement in Defeat.
\end{proposition}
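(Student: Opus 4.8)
The plan is to check the nine axioms in three groups, leaning on two structural facts about $isc$ whose verification I would do first. Fact one: $isc(\mathbf{P})$ depends only on the weighted graph $\mathcal{M}(\mathbf{P})$, and it is invariant under multiplying all weights by a common positive constant (using $c\cdot\infty=\infty$ and $c\cdot 0=0$). From Fact one, Anonymity, Neutrality, and Neutral Reversal are immediate (adding converse pairs does not change $\mathcal{M}(\mathbf{P})$), and so is Homogeneity, since $\mathcal{M}(2\mathbf{P})$ is $\mathcal{M}(\mathbf{P})$ with every weight doubled, whence $\stris_{\mathcal{M}(2\mathbf{P})}(y,x)=2\,\stris_{\mathcal{M}(\mathbf{P})}(y,x)$ and the defining inequality $\margin_{\mathbf{P}}(x,y)>\stris_{\mathcal{M}(\mathbf{P})}(y,x)$ is preserved. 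Fact two: since the ignore-source minimum of a path is a minimum over a subset of that path's edges, $\stris_{\mathcal{M}}(\rho)\ge\strength_{\mathcal{M}}(\rho)$ for every path $\rho$, hence $\stris_{\mathcal{M}}(y,x)\ge\strength_{\mathcal{M}}(y,x)$, and by clause (4) of Lemma~\ref{EquivDef} this gives $isc(\mathbf{P})\subseteq sc(\mathbf{P})$. As $sc$ is acyclic, $isc$ is acyclic, so $\overline{isc}(\mathbf{P})\neq\varnothing$ by Lemma~\ref{VCCRtoVSCC}, i.e.\ Availability holds; and Coherent Defeat is immediate, because the absence of a majority path from $y$ to $x$ means $\stris_{\mathcal{M}(\mathbf{P})}(y,x)=0$ by Lemma~\ref{lem:str-is}(1).

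The remaining three axioms I would handle directly, and each rests on the same observation: $\stris_{\mathcal{M}(\mathbf{P})}(y,x)$ is governed only by edges that can lie on a simple path from $y$ to $x$ — these never include an edge emanating from $x$, and (whenever $\margin_{\mathbf{P}}(x,y)>0$) never include the edge $(y,x)$. For Coherent IIA, suppose $(x,y)\in isc(\mathbf{P})$ and $\mathbf{P}\rightsquigarrow_{x,y}\mathbf{P}'$. The $x$–$y$ edge is untouched, so $\margin_{\mathbf{P}'}(x,y)=\margin_{\mathbf{P}}(x,y)$; and $\rightsquigarrow_{x,y}$ only deletes vertices, deletes edges, or lowers weights of edges not joining $x$ and $y$, hence creates no new edge and raises no weight, so every simple path from $y$ to $x$ in $\mathcal{M}(\mathbf{P}')$ is such a path in $\mathcal{M}(\mathbf{P})$ with weakly smaller weights; thus $\stris_{\mathcal{M}(\mathbf{P}')}(y,x)\le\stris_{\mathcal{M}(\mathbf{P})}(y,x)$ and $(x,y)\in isc(\mathbf{P}')$. (Lemma~\ref{lem:str-is}(2) gives a slick phrasing: pass to the infinity-source graph, where lowering edges out of $y$ is harmless and all other lowerings weakly decrease $\strength$.)

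For Monotonicity, moving $x$ up past the candidate $w$ directly above it in one ballot changes exactly the $x$–$w$ margin, shifting it by $2$ toward $x$. If $y=w$, then $\margin(x,y)$ increases while $\stris(y,x)$ is unchanged, the only affected edges being those out of $x$, which are irrelevant. If $y\neq w$, then $\margin(x,y)$ is unchanged and the only relevant change is that the edge $(w,x)$ is weakened or removed, which can only lower $\stris(y,x)$. Either way the defining inequality persists. For First-place Involvement in Defeat, suppose $(x,y)\notin isc(\mathbf{P})$ and $\mathbf{P}'=\mathbf{P}+L$ with $y$ on top of the linear ballot $L$. If $\margin_{\mathbf{P}}(x,y)\le 1$ then $\margin_{\mathbf{P}'}(x,y)\le 0$ and we are done; otherwise $\stris_{\mathcal{M}(\mathbf{P})}(y,x)\ge\margin_{\mathbf{P}}(x,y)\ge 2$, so some simple majority path $\rho$ from $y$ to $x$ has $\stris_{\mathcal{M}(\mathbf{P})}(\rho)\ge\margin_{\mathbf{P}}(x,y)$; adding $L$ changes each margin by at most $1$, raises the initial edge of $\rho$ (out of $y$), and leaves every non-initial edge of $\rho$ with weight $\ge\margin_{\mathbf{P}}(x,y)-1=\margin_{\mathbf{P}'}(x,y)\ge 1>0$, so $\rho$ survives in $\mathcal{M}(\mathbf{P}')$ with $\stris_{\mathcal{M}(\mathbf{P}')}(\rho)\ge\margin_{\mathbf{P}'}(x,y)$, whence $(x,y)\notin isc(\mathbf{P}')$.

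The main obstacle will be purely the bookkeeping forced by the asymmetry in $\stris$: because the edges emanating from the source are ignored, one must be careful in each of the last three arguments to pin down exactly which edges can sit on a simple path from $y$ to $x$ and to check that the perturbation in question weakens only those edges, and by at most the available margin of safety. The tightest such margin appears in First-place Involvement in Defeat, where the hypothesis $\margin_{\mathbf{P}}(x,y)\ge 2$ is exactly what is needed for the witnessing path to survive the single added ballot; I would isolate the trivial case $\margin_{\mathbf{P}}(x,y)\le 1$ first.
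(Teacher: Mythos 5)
Your proposal is correct and follows essentially the same route as the paper: the paper spells out only First-place Involvement in Defeat (delegating the remaining axioms to ``the same way as for Split Cycle''), and its argument there rests on exactly the two observations you use---a ballot with $y$ on top strengthens every edge out of $y$ and lowers every other relevant weight by at most $1$---packaged via the inequality $\stris_{\mathcal{M}(\mathbf{P}')}(y,x)\ge \stris_{\mathcal{M}(\mathbf{P})}(y,x)-1$ and Lemma~\ref{lem:str-is}(\ref{lem:str-is.2}) rather than via your explicit witnessing path and the case split on $\margin_{\mathbf{P}}(x,y)\le 1$. Your derivation of Availability from the inclusion $isc(\mathbf{P})\subseteq sc(\mathbf{P})$ together with acyclicity of $sc$ is a valid and tidy way of filling in a step the paper leaves implicit.
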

\begin{proof}
  We only verify First-place Involvement in Defeat here; the rest can be
  verified in the same way as they are verified for Split Cycle. Suppose $x$
  does not defeat $y$ according to $isc$ in $\mathbf{P}$, and let
  $\mathbf{P}'$ be the result of adding a ballot that ranks $y$ at the top. Then
  $\margin_{\mathbf{P}}(x,y) \le \stris_{\mathcal{M}(\mathbf{P})}(y, x)$, and we
  want to show the same inequality for $\mathbf{P}'$. Since the new ballot ranks
  $y$ at the top, $\margin_{\mathbf{P}'}(x, y) = \margin_{\mathbf{P}}(x, y) - 1$.
  So we only need to show that $\stris_{\mathcal{M}(\mathbf{P}')}(y, x) \ge
  \stris_{\mathcal{M}(\mathbf{P})}(y, x) - 1$. For the standard strength, this
  is trivial since we only added a single ballot, so the margin changes can be
  at most $1$. However, by the example in the proof of the next proposition,
  extra care must be taken for $\stris$. Since the added ballot ranks $y$ at the
  top, all the edges starting from $y$ in $\mathcal{M}(\mathbf{P})$ are
  strengthened in weight, and in particular every edge in
  $\mathcal{M}(\mathbf{P})$ starting from $y$ is still in
  $\mathcal{M}(\mathbf{P}')$. Thus, letting $\mathcal{M}(\mathbf{P})^+$ and
  $\mathcal{M}(\mathbf{P}')^+$ be the results of raising the weights of all edges
  from $y$ to $\infty$ in $\mathcal{M}(\mathbf{P})$ and
  $\mathcal{M}(\mathbf{P}')$, respectively, the weight \emph{decrease} of the
  edges from $\mathcal{M}(\mathbf{P})^+$ to $\mathcal{M}(\mathbf{P}')^+$ is at
  most $1$ (including eliminating edges with weight $1$); the weight
  \emph{increase} could be infinite due to new edges starting from $y$, but this
  is irrelevant for the direction of the inequality we are trying to show. Hence
  $\strength_{\mathcal{M}(\mathbf{P}')^+}(y, x) \ge
  \strength_{\mathcal{M}(\mathbf{P})^+}(y, x) - 1$. Then using 
  Lemma \ref{lem:str-is}.\ref{lem:str-is.2}, we~are~done.
\end{proof}
As we have seen that First-place Involvement in Defeat entails Positive
Involvement, $isc$ satisfies Positive Involvement as well.

\begin{proposition}
  $isc$ does not refine Split Cycle, and it fails Positive Involvement in
  Defeat. Moreover, $ISC$ fails Tolerant Positive Involvement.
\end{proposition}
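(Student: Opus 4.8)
The plan is to produce a single linear profile $\mathbf{P}$ on three candidates, together with one extra ballot $L$, that refutes all three assertions at once. First I would invoke Debord's realizability theorem (\citealt{Debord1987}) to fix a linear profile $\mathbf{P}$ with candidates $x,y,z$ whose margin graph is the $3$-cycle with $\margin_{\mathbf{P}}(x,y)=3$, $\margin_{\mathbf{P}}(y,z)=1$, and $\margin_{\mathbf{P}}(z,x)=3$; this is legitimate since all three weights are odd and every pair is joined by an edge. A direct computation then gives the following facts about $\mathbf{P}$: the unique weakest edge of the unique cycle is $(y,z)$, so $sc(\mathbf{P})=\{(x,y),(z,x)\}$ and $SC(\mathbf{P})=\{z\}$; on the other hand, the only path from $y$ to $x$ is $y\to z\to x$, and since $\stris$ ignores the weight of the source edge $(y,z)$ we get $\stris_{\mathcal{M}(\mathbf{P})}(y,x)=\margin_{\mathbf{P}}(z,x)=3$, so $\margin_{\mathbf{P}}(x,y)=3\not>3$ and $(x,y)\notin isc(\mathbf{P})$. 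Checking the remaining pairs similarly yields $isc(\mathbf{P})=\{(z,x)\}$, hence $ISC(\mathbf{P})=\{y,z\}$. In particular $(x,y)\in sc(\mathbf{P})\setminus isc(\mathbf{P})$, so $isc$ does not refine $sc$. (In fact, since $\stris_{\mathcal{M}}(\rho)\ge\strength_{\mathcal{M}}(\rho)$ for every path $\rho$, one always has $isc(\mathbf{P})\subseteq sc(\mathbf{P})$ by Lemma \ref{EquivDef}, so the only way $isc$ could refine $sc$ would be $isc=sc$, which this $\mathbf{P}$ already rules out; and $ISC(\mathbf{P})=\{y,z\}\not\subseteq\{z\}=SC(\mathbf{P})$, so $ISC$ does not refine $SC$ either.)

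Next I would add the single ballot $L = z\succ y\succ x$, forming $\mathbf{P}'=\mathbf{P}+L$, and track the margin changes: $L$ prefers $y$ to $x$, so $\margin_{\mathbf{P}'}(x,y)=2$; $L$ prefers $z$ to $y$, so $\margin_{\mathbf{P}'}(y,z)=0$ and the edge $(y,z)$ vanishes; $L$ prefers $z$ to $x$, so $\margin_{\mathbf{P}'}(z,x)=4$. Thus $\mathcal{M}(\mathbf{P}')$ has only the edges $x\to y$ (weight $2$) and $z\to x$ (weight $4$); in particular $y$ has no outgoing edge, so there is no majority path from $y$ to $x$, whence $\stris_{\mathcal{M}(\mathbf{P}')}(y,x)=0$, and therefore $\margin_{\mathbf{P}'}(x,y)=2>0$ gives $(x,y)\in isc(\mathbf{P}')$ and $y\notin ISC(\mathbf{P}')$.

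It then remains only to observe that $L$ is admissible for both axioms. Since $(x,y)\notin isc(\mathbf{P})$, $y$ is not defeated by $x$ in $\mathbf{P}$; $L$ ranks $y$ above $x$; yet $(x,y)\in isc(\mathbf{P}')$ --- so $isc$ fails Positive Involvement in Defeat. Since $y\in ISC(\mathbf{P})$ and $x$ is the only candidate to which $y$ is not majority preferred in $\mathbf{P}$ (as $\margin_{\mathbf{P}}(y,x)<0<\margin_{\mathbf{P}}(y,z)$), and $L$ ranks $y$ above $x$ (with $z$ placed arbitrarily, which is permitted), $L$ is a legitimate \tolerantpi{} move; yet $y\notin ISC(\mathbf{P}')$ --- so $ISC$ fails \tolerantpi{}.

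This is a verification rather than a genuinely hard argument; the only delicate point is the choice of weights. The mechanism works precisely because the weakest edge of the blocking cycle, $(y,z)$ --- whose weight $\stris$ deliberately ignores --- has weight $1$, so that adding one ballot reversing it \emph{destroys} the unique path from $y$ to $x$ and collapses $\stris_{\mathcal{M}}(y,x)$ from $3$ to $0$, while $\margin(x,y)$ drops only by $1$. An even-parity analogue fails, since a weight-$2$ such edge would merely become a weight-$1$ edge and the path would survive; this is why the example must have odd margins. (Alternatively, the failures of Positive Involvement in Defeat and of \tolerantpi{} also follow abstractly from Theorems \ref{thm:at-least-as-resolute-vccr} and \ref{thm:at-least-as-resolute-vscc}, using the first claim together with the already-established fact that $isc$, and hence $ISC$, satisfies Coherent Defeat; but the explicit computation is more informative.)
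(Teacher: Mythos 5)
Your proof is correct and takes essentially the same route as the paper's: a three-candidate margin cycle with odd weights (the paper uses $1,3,5$; you use $1,3,3$) in which the Split Cycle defeat of the relevant candidate is blocked under $isc$ because the ignore-source strength discards the weight-$1$ edge leaving that candidate, followed by adding one ballot that ranks the candidate above its would-be defeater but below the third candidate, which eliminates the return path and triggers the $isc$-defeat, refuting Positive Involvement in Defeat and Tolerant Positive Involvement simultaneously. Your side remarks (that $isc(\mathbf{P})\subseteq sc(\mathbf{P})$ always, and the alternative abstract derivation via Theorems \ref{thm:at-least-as-resolute-vccr} and \ref{thm:at-least-as-resolute-vscc}) are accurate but not needed.
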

\begin{proof}
  Consider a profile $\mathbf{P}$ whose margin graph is 
  \begin{center}
    \begin{tikzpicture}
      \node[circle,draw, minimum width=0.25in] at (0,0) (a) {$a$};
      \node[circle,draw,minimum width=0.25in] at (3,0) (b) {$b$};
      \node[circle,draw,minimum width=0.25in] at (1.5,1.5) (c) {$c$};

      \path[->,draw,thick] (c) to node[fill=white] {$5$} (b);
      \path[->,draw,thick] (b) to node[fill=white] {$3$} (a);
      \path[->,draw,thick] (a) to node[fill=white] {$1$} (c);
    \end{tikzpicture}
  \end{center}
  According to Split Cycle, $b$ defeats $a$. But according to $isc$, $b$ does not defeat $a$, since $\stris(a,b) = 5$ as we need to
  ignore the edge from $a$ to $c$. Since clearly $c$ does not defeat $a$ either
  according to $isc$, $a$ is undefeated and a winner
  according to $ISC$. To see the failure of Positive Involvement in Defeat and
  Tolerant Positive Involvement, let $\mathbf{P}'$ be the result of adding a
  ballot $\succ$ such that $c \succ a \succ b$ (where $a$ is above $b$ and thus above all
  candidates to whom $a$ is not majority preferred) to $\mathbf{P}$. Then
  $\mathcal{M}(\mathbf{P}')$ is
  \begin{center}
    \begin{tikzpicture}
      \node[circle,draw, minimum width=0.25in] at (0,0) (a) {$a$};
      \node[circle,draw,minimum width=0.25in] at (3,0) (b) {$b$};
      \node[circle,draw,minimum width=0.25in] at (1.5,1.5) (c) {$c$};

      \path[->,draw,thick] (c) to node[fill=white] {$6$} (b);
      \path[->,draw,thick] (b) to node[fill=white] {$2$} (a);
    \end{tikzpicture}
  \end{center}
  Now $b$ defeats $a$ according to $isc$ since there is no path from $a$ to $b$
  and the ignore-source strength from $a$ to $b$ decreased from $5$ above to $0$. \end{proof}

Thus, when axiomatizing Split Cycle, we cannot replace Positive Involvement in
Defeat by First-place Involvement in Defeat for the VCCR or replace Tolerant
Positive Involvement by Positive Involvement for the VSCC. On the other hand, we
can show that Coherent Defeat and First-place Involvement in Defeat provide a
one-sided axiomatization for $isc$ in the direction opposite to what
Holliday and Pacuit \citeyearpar{HP2021} did for the Split Cycle VCCR.
\begin{theorem}
  For any VCCR $f$ that satisfies Coherent Defeat and First-place Involvement in
  Defeat, $f$ refines $isc$.
\end{theorem}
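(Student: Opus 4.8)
The plan is to imitate the proof of Theorem~\ref{thm:at-least-as-resolute-vccr}. Fix a profile $\mathbf{P}$ and $(x,y)\in isc(\mathbf{P})$; we must show $(x,y)\in f(\mathbf{P})$. Write $k=\margin_{\mathbf{P}}(x,y)$ and $s=\stris_{\mathcal{M}(\mathbf{P})}(y,x)$, so $k>s$ and $(y,x)$ is not an edge of $\mathcal{M}(\mathbf{P})$. First I would dispose of the base case: since $\mathbf{P}$ is linear, Lemma~\ref{ParityLem} makes all margins share a parity, so $s\equiv k\pmod 2$ whenever $s>0$; combined with $k>s$ this forces $s\le k-2$, and in particular if $k\le 2$ then $s=0$, i.e.\ there is no majority path from $y$ to $x$, so Coherent Defeat yields $(x,y)\in f(\mathbf{P})$ directly.

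For $k>2$ the scheme is: prove a first-place analogue of Lemma~\ref{lem:add-ballot}, namely that given $(x,y)\in isc(\mathbf{P})$ with $\margin_{\mathbf{P}}(x,y)>2$ there is a ballot $L\in\mathcal{L}(X(\mathbf{P}))$ ranking $y$ \emph{in first place} with $(x,y)\in isc(\mathbf{P}+L)$. Granting this, one builds $\mathbf{P}=\mathbf{P}_0,\mathbf{P}_1,\dots,\mathbf{P}_{k-2}$ with $\mathbf{P}_{i+1}=\mathbf{P}_i+L_{i+1}$, each $L_{i+1}$ from the lemma; then $(x,y)\in isc(\mathbf{P}_{k-2})$ with $\margin_{\mathbf{P}_{k-2}}(x,y)=2$, so by the parity argument there is no majority path from $y$ to $x$ in $\mathbf{P}_{k-2}$, whence Coherent Defeat gives $(x,y)\in f(\mathbf{P}_{k-2})$; finally, since each $L_{i+1}$ ranks $y$ in first place (hence $y$ above $x$), the contrapositive of First-place Involvement in Defeat propagates the defeat back down the chain to $(x,y)\in f(\mathbf{P}_0)=f(\mathbf{P})$.

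The proof of the first-place analogue follows the cut argument of Lemma~\ref{lem:add-ballot}. Since $s\le k-2$, Lemma~\ref{lem:str-is} says the set of edges of $\mathcal{M}(\mathbf{P})$ not issuing from $y$ and of weight $\le k-2$ is a cut from $y$ to $x$; let $C$ be a minimal such cut, which by the usual argument has no pair of consecutive edges, so $C^{-1}\cup(\{y\}\times(X(\mathbf{P})\setminus\{y\}))$ is acyclic; take $L$ to be a linear extension, which necessarily ranks $y$ first. The routine bookkeeping ($\margin_{\mathbf{P}+L}(x,y)=k-1$; edges of $C$ drop to weight $\le k-3$; new edges have weight $1\le k-2$) mirrors Lemma~\ref{lem:add-ballot}, and one would conclude via Lemma~\ref{lem:str-is} that $\stris_{\mathcal{M}(\mathbf{P}+L)}(y,x)\le k-2$.

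The hard part will be exactly the point where ranking $y$ \emph{first}, rather than merely above the candidates it does not beat (as in Lemma~\ref{lem:add-ballot}), changes the situation: putting $y$ on top strengthens every edge out of $y$ and, for each $z$ with $\margin_{\mathbf{P}}(y,z)=0$, creates a new weight-$1$ edge $y\to z$ in $\mathcal{M}(\mathbf{P}+L)$. Such a $z$ may already reach $x$ in $\mathcal{M}(\mathbf{P})$ by a strong path, so after adding $L$ there can be a new majority path from $y$ to $x$ through $z$ whose ignore-source strength exceeds $k-2$. So the cut $C$ must be chosen so that, in $\mathcal{M}(\mathbf{P})$, \emph{every} vertex that $y$ can reach in one step of $\mathcal{M}(\mathbf{P}+L)$ — equivalently, every $z$ with $\margin_{\mathbf{P}}(y,z)\ge 0$ — also fails to reach $x$ once $C$ is deleted, while $C$ still uses only edges of weight $\le k-2$; i.e.\ $C$ should be a low-weight cut separating $x$ not just from $y$ but from the whole ``one-step closure'' of $y$. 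Getting such a cut to exist (so that $(x,y)\in isc(\mathbf{P})$ genuinely supplies one) is where I expect the real work to lie, and where — if a single-ballot step does not suffice — one would need either a multi-ballot step that weakens these $z\to x$ edges fast enough, or a detour through a canonical profile constructed by Debord's theorem together with the remaining standard axioms, in the spirit of Theorem~\ref{thm:no-more-resolute-vscc}.
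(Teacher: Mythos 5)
Your plan reproduces the paper's strategy exactly---a first-place analogue of Lemma~\ref{lem:add-ballot}, iterated to drive $\margin(x,y)$ down until Coherent Defeat applies, then First-place Involvement in Defeat pulled back along the chain---and your base case and parity bookkeeping match what is needed. But the proposal is not a proof: everything rests on the single-ballot lemma (a ballot with $y$ on top after which $(x,y)$ remains an $isc$-defeat), and you explicitly leave its proof, i.e.\ the existence of a suitable cut, as ``the real work.'' The paper, for its part, disposes of this step in one sentence (take a minimal cut from $y$ to $x$ avoiding edges out of $y$; ``the rest of the proof is exactly the same''), so you have stopped precisely at the step the published argument glosses over.

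Moreover, the obstruction you flag is genuine and cannot be repaired along the lines you sketch. Take a linear profile $\mathbf{P}$ (it exists by Debord's theorem) with $\margin_{\mathbf{P}}(x,y)=4$, $\margin_{\mathbf{P}}(y,w)=\margin_{\mathbf{P}}(w,x)=2$, $\margin_{\mathbf{P}}(z,x)=6$, and zero margins on $\{y,z\}$ and $\{z,w\}$. The only majority path from $y$ to $x$ is $y\to w\to x$, so $\stris_{\mathcal{M}(\mathbf{P})}(y,x)=2<4$ and $(x,y)\in isc(\mathbf{P})$ with margin greater than $2$. Yet for \emph{every} ballot $L$ with $y$ on top, $\mathbf{P}+L$ acquires the edge $y\to z$ (ignored as a source edge) while $\margin_{\mathbf{P}+L}(z,x)\ge 5 > 3 = \margin_{\mathbf{P}+L}(x,y)$, so $(x,y)\notin isc(\mathbf{P}+L)$: the claimed single-ballot lemma has no witness here. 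Your fallback options do not rescue it either: every $y$-first ballot lowers $\margin(x,y)$ by exactly one but lowers $\margin(z,x)$ by at most one and keeps the edge $y\to z$, so no sequence of $y$-first ballots ever reaches a profile with $\margin(x,y)>0$ and no majority path from $y$ to $x$; consequently Coherent Defeat plus First-place Involvement in Defeat cannot force $(x,y)\in f(\mathbf{P})$ by this forcing route at all (the minimal VCCR generated by the two axioms makes this precise), and the Debord detour ``together with the remaining standard axioms'' is unavailable because the theorem assumes only those two axioms. So the gap you leave is not a routine verification: as stated, the key lemma fails on such profiles, and closing the argument would require either additional hypotheses or a genuinely different idea from the one you (and the paper) pursue.
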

\begin{proof}
  We repeat the same cut argument in Lemma \ref{lem:add-ballot} and the inductive argument in Theorem \ref{thm:at-least-as-resolute-vccr}. The counterpart of Lemma \ref{lem:add-ballot} for $isc$ takes the following form: for any profile $\mathbf{P}: V \to \mathcal{L}(X)$ and  $(x, y) \in isc(\mathbf{P})$, if $\margin_{\mathbf{P}}(x, y) > 2$, then there is a ballot $L \in \mathcal{L}(X)$ such that
  \begin{itemize}
    \item $y$ is at the top of $L$, and
    \item $(x, y) \in isc(\mathbf{P} + L)$.
  \end{itemize}
  The proof is almost the same as that of Lemma \ref{lem:add-ballot}: since we are using $\stris_{\mathcal{M}}$, we can guarantee that the cut does not include any edge starting from the defeated candidate according to $isc$. Hence the union of the converse of the cut and the set of all pairs starting from $y$ is acyclic, and the rest of the proof is exactly the same. Then we repeatedly use this counterpart of Lemma \ref{lem:add-ballot} to show that if $(x, y) \in isc(\mathbf{P})$, then there is a sequence of ballots $L_1, \cdots, L_n$, all putting $y$ at the top, such that $(x, y) \in isc(\mathbf{P}+L_1+ \cdots + L_n)$, and in $\mathbf{P} + L_1 + \cdots + L_n$, $x$ defeats $y$ merely by Coherent Defeat. Then by First-place Involvement in Defeat, $x$ defeats $y$ in $\mathbf{P}$ according to $f$.
\end{proof}

\subsection{Coherent Defeat}
The example showing the necessity of Coherent Defeat is the following VCCR. 
\begin{definition}
  Let $oca$ ($o$ne-$c$overs-$a$ll) be the VCCR such that for any profile $\pfp$ and $x, y \in X(\pfp)$, $(x, y) \in oca(\pfp)$ iff for all $z \in X(\pfp)$, $\margin_{\pfp}(x, y) > \margin_{\pfp}(y, z)$.
\end{definition}
\begin{proposition}
  The VCCR $oca$ satisfies all the standard axioms, and thus the VSCC $\overline{oca}$ also satisfies all the standard axioms. Moreover, $oca$ satisfies Coherent IIA and Positive Involvement in Defeat, while $\overline{oca}$ satisfies Coherent IIA and Tolerant Positive Involvement.
\end{proposition}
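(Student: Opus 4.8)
The plan is to exploit the fact that $oca(\mathbf{P})$ depends only on the margin graph $\mathcal{M}(\mathbf{P})$ --- indeed only on the relation of being $\leq$ among the numbers $\margin_{\mathbf{P}}(u,v)$ --- which at once yields Anonymity, Neutrality, Homogeneity, and Neutral Reversal, since none of those operations changes any margin, and doubling scales all margins by the same positive factor, preserving strict inequalities. Before anything else I would record two free observations: taking $z = x$ in the definition shows that $(x,y) \in oca(\mathbf{P})$ forces $\margin_{\mathbf{P}}(x,y) > \margin_{\mathbf{P}}(y,x) = -\margin_{\mathbf{P}}(x,y)$, hence $\margin_{\mathbf{P}}(x,y) > 0$; thus $oca$ is asymmetric and satisfies Majority Defeat. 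Next I would show $oca$ is acyclic by a telescoping argument: along any purported cycle $x_0, x_1, \dots, x_n, x_0$, each edge $(x_i, x_{i+1}) \in oca(\mathbf{P})$ gives $\margin_{\mathbf{P}}(x_i, x_{i+1}) > \margin_{\mathbf{P}}(x_{i+1}, x_{i+2})$ (indices mod $n{+}1$), and chaining these strict inequalities around the cycle produces $\margin_{\mathbf{P}}(x_0,x_1) > \margin_{\mathbf{P}}(x_0,x_1)$, a contradiction. Acyclicity then gives Availability via Lemma~\ref{VCCRtoVSCC} and also makes $\overline{oca}$ a genuine VSCC.

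For Monotonicity, the key point is that moving $x$ up one place past the unique candidate $w$ immediately above it changes exactly two margins --- $\margin(x,w)$ up by $2$ and $\margin(w,x)$ down by $2$ --- and fixes all others. Given $(x,y) \in oca(\mathbf{P})$, I would split on whether $y = w$: if $y \neq w$, then $\margin_{\mathbf{P}'}(x,y) = \margin_{\mathbf{P}}(x,y)$ and $\margin_{\mathbf{P}'}(y,z) = \margin_{\mathbf{P}}(y,z)$ for all $z$, so membership is inherited; if $y = w$, then $\margin_{\mathbf{P}'}(x,y)$ strictly increases while each $\margin_{\mathbf{P}'}(y,z) \leq \margin_{\mathbf{P}}(y,z)$ (it decreases for $z = x$ and is unchanged otherwise), so again $(x,y) \in oca(\mathbf{P}')$. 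Monotonicity for two-candidate profiles is the special case. The standard axioms for $\overline{oca}$ follow from those for $oca$ by routine reasoning; for VSCC Monotonicity one reuses the margin computation just described to see that moving $x$ up cannot create a new defeat of $x$.

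For Coherent IIA of $oca$, suppose $(x,y) \in oca(\mathbf{P})$ and $\mathbf{P} \rightsquigarrow_{x,y} \mathbf{P}'$. Since $\mathbf{P}_{|\{x,y\}} = \mathbf{P}'_{|\{x,y\}}$, we have $\margin_{\mathbf{P}'}(x,y) = \margin_{\mathbf{P}}(x,y) > 0$. The crucial estimate is that for every $z \in X(\mathbf{P}') \setminus \{x\}$, $\margin_{\mathbf{P}'}(y,z) \leq \max(\margin_{\mathbf{P}}(y,z), 0)$: the operation $\rightsquigarrow_{x,y}$ may only delete or reduce the directed edge between $y$ and $z$ (it cannot add or reverse an edge), so if $\margin_{\mathbf{P}}(y,z) > 0$ the new value can only decrease, while if $\margin_{\mathbf{P}}(y,z) \leq 0$ it can rise at most to $0$. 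Combining this with $\margin_{\mathbf{P}}(x,y) > \margin_{\mathbf{P}}(y,z)$ and $\margin_{\mathbf{P}}(x,y) > 0$ gives $\margin_{\mathbf{P}'}(x,y) = \margin_{\mathbf{P}}(x,y) > \max(\margin_{\mathbf{P}}(y,z), 0) \geq \margin_{\mathbf{P}'}(y,z)$, and the case $z = x$ is immediate since $\margin_{\mathbf{P}'}(y,x) = -\margin_{\mathbf{P}'}(x,y) < \margin_{\mathbf{P}'}(x,y)$; hence $(x,y) \in oca(\mathbf{P}')$. I expect this inequality --- in particular remembering that $\margin(y,z)$ can \emph{increase} under $\rightsquigarrow_{x,y}$ when a $z$-to-$y$ edge is removed --- to be the one place where care is genuinely needed.

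Positive Involvement in Defeat for $oca$ is then easy: if $(x,y) \notin oca(\mathbf{P})$, fix a witness $z$ with $\margin_{\mathbf{P}}(x,y) \leq \margin_{\mathbf{P}}(y,z)$; adding one ballot ranking $y$ above $x$ gives $\margin_{\mathbf{P}'}(x,y) = \margin_{\mathbf{P}}(x,y) - 1$, and since a single ballot changes any margin by at most $1$, $\margin_{\mathbf{P}'}(y,z) \geq \margin_{\mathbf{P}}(y,z) - 1 \geq \margin_{\mathbf{P}}(x,y) - 1 = \margin_{\mathbf{P}'}(x,y)$, so $z$ still witnesses $(x,y) \notin oca(\mathbf{P}')$. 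Finally, Coherent IIA for $\overline{oca}$ follows from Coherent IIA for $oca$ by Proposition~\ref{prop:ciia-vccr-to-vscc}, and Tolerant Positive Involvement for $\overline{oca}$ follows from Proposition~\ref{prop:pid-to-tpi}, since $oca$ is an acyclic VCCR satisfying Positive Involvement in Defeat and Majority Defeat.
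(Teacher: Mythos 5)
Your proof is correct and follows essentially the same route as the paper's: dependence on the margin graph for the easy standard axioms, acyclicity of $oca$ for Availability, a direct margin comparison under $\rightsquigarrow_{x,y}$ for Coherent IIA, a one-ballot margin perturbation for Positive Involvement in Defeat, and the two cited propositions to transfer Coherent IIA and Tolerant Positive Involvement to $\overline{oca}$. If anything you are a bit more careful than the paper's brief proof, e.g.\ the bound $\margin_{\mathbf{P}'}(y,z)\le\max(\margin_{\mathbf{P}}(y,z),0)$ covering the case of a deleted incoming edge, and the explicit Monotonicity check for both $oca$ and $\overline{oca}$.
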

\begin{proof}
  For the standard axioms, the only one worth commenting on  is Availability for $oca$. Indeed, we can show that $oca$ is acyclic, just as we show that $sc$ is acyclic. First, note that $oca$ satisfies Majority Defeat, since if $(x, y) \in oca(\pfp)$, then $\margin_\pfp(x, y) > \margin_\pfp(y, x)$. Also, the weakest edges of any majority cycle cannot be defeats according to $oca$. So there cannot be a defeat cycle according to $oca$.

  To see that $oca$ satisfies Coherent IIA, suppose $(x, y) \in oca(\pfp)$ and $\pfp \rightsquigarrow_{x, y} \pfq$. Then, $\margin_{\pfp}(x, y) = \margin_{\pfq}(x, y) > 0$, while for any $z \in X(\pfq)$, $\margin_{\pfq}(y, z)$ is either $< 0$ or $\le \margin_{\pfp}(y, z)$. In either case, $\margin_\pfq(x, y) > \margin_\pfq(y, z)$. So $(x, y) \in oca(\pfq)$. By Proposition \ref{prop:ciia-vccr-to-vscc}, $\overline{oca}$ also satisfies Coherent IIA. 

  To see that $oca$ satisfies Positive Involvement in Defeat, say $(x, y) \not\in oca(\pfp)$ and $\pfq$ is obtained by adding one ballot ranking $y$ above $x$. This means that there is $z \in X(\pfp)$ such that $\margin_\pfp(x, y) \le \margin_\pfp(y, z)$. Now $\margin_\pfq(x, y) = \margin_\pfp(x, y) - 1$, while $|\margin_\pfp(y, z) - \margin_\pfq(y, z)| \le 1$. Thus $\margin_\pfq(x, y) \le \margin_\pfq(y, z)$, and thus $(x, y) \not\in \pfq$. Using Proposition \ref{prop:pid-to-tpi}, $\overline{oca}$ satisfies Tolerant Positive Involvement.
\end{proof}
\begin{proposition}
  Both $oca$ and $\overline{oca}$ fail Coherent Defeat.
\end{proposition}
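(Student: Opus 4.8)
The plan is to exhibit one small profile that witnesses both failures at once. Consider three candidates $x, y, z$ and the margin graph with a single edge $x \to y$ of weight $2$, a single edge $y \to z$ of weight $4$, and no edge between $x$ and $z$ (so $\margin(x,z) = 0$). Since the two nonzero weights are both even, the version of Debord's theorem quoted above guarantees a linear profile $\mathbf{P}$ realizing this margin graph.

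First I would record the two facts about the pair $(x, y)$ that trigger the hypothesis of Coherent Defeat: $\margin_{\mathbf{P}}(x, y) = 2 > 0$, and there is no majority path from $y$ to $x$ --- indeed the margin graph is acyclic with $z$ a sink, so from $y$ one can reach only $z$, and in particular there is no majority cycle containing $x$ and $y$. Next I would observe that $(x, y) \notin oca(\mathbf{P})$: by definition $oca$ requires $\margin_{\mathbf{P}}(x, y) > \margin_{\mathbf{P}}(y, w)$ for every candidate $w$, and this fails for $w = z$ since $\margin_{\mathbf{P}}(y, z) = 4 \ge 2 = \margin_{\mathbf{P}}(x,y)$. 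Thus $oca$ violates Coherent Defeat for VCCRs.

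For $\overline{oca}$ I would compute the whole defeat relation $oca(\mathbf{P})$ and check that $y$ is undefeated. A short case check over the candidates that could conceivably defeat $y$ shows the only ordered pair in $oca(\mathbf{P})$ is $(y,z)$: $x$ does not defeat $y$ by the computation above, and $z$ does not defeat $y$ since $\margin_{\mathbf{P}}(z, y) = -4 \le 0$, so certainly $\margin_{\mathbf{P}}(z,y) \not> \margin_{\mathbf{P}}(y, z) = 4$; the remaining pairs are ruled out similarly. Hence $y \in \overline{oca}(\mathbf{P})$, whereas Coherent Defeat for VSCCs demands $y \notin \overline{oca}(\mathbf{P})$ given $\margin_{\mathbf{P}}(x, y) > 0$ and the absence of a majority path from $y$ to $x$. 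So $\overline{oca}$ also fails Coherent Defeat.

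There is no substantial obstacle here beyond bookkeeping. The one subtlety is that the witness must respect the parity constraint of Lemma~\ref{ParityLem}, which forces even margins --- and hence a zero margin between $x$ and $z$ --- rather than the more obvious weights $1$ and $3$; the only other thing to be careful about is the short verification that no candidate other than the intended one defeats $y$ according to $oca$.
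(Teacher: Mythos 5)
Your proof is correct and follows essentially the same route as the paper: a single three-candidate profile with a majority path $x\to y\to z$, no edge between $x$ and $z$, where $(x,y)\notin oca(\mathbf{P})$ because $\margin(x,y)\not>\margin(y,z)$, yet Coherent Defeat would require $x$ to defeat $y$ (and $y$ to lose). The paper uses weights $2,2$ rather than your $2,4$, but this is an immaterial variation of the same counterexample.
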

\begin{proof}
  Any profile $\pfp$ whose margin graph is 
\begin{center}
  \begin{tikzpicture}[baseline]
    \node[circle,draw,minimum width=0.25in] at (0,0) (x) {$x$}; 
    \node[circle,draw,minimum width=0.25in] at (3,0) (z) {$z$}; 
    \node[circle,draw,minimum width=0.25in] at (1.5,1.5) (y) {$y$}; 
    \path[->,draw,thick] (x) to node[fill=white] {$2$} (y);
    \path[->,draw,thick] (y) to node[fill=white] {$2$} (z);
  \end{tikzpicture}
\end{center}
will do. Here $oca(\pfp) = \{(y, z)\}$ and $\overline{oca}(\pfp) = \{x, y\}$, even though $y$ is coherently defeated by $x$ as $\margin_\pfp(x, y) > 0$ and there is no majority path from $y$ to $x$.
\end{proof}

\section{Conclusion}\label{Sec:Conclusion}
We have provided axiomatizations for Split Cycle both as a VCCR and as a VSCC, both over profiles prohibiting ties and over profiles allowing ties. Most of the axioms are largely uncontroversial, while the special axioms---Coherent IIA, Coherent Defeat, Positive Involvement in Defeat, and Tolerant Positive Involvement---have direct intuitive appeal. The axiomatizations also show where and how Split Cycle diverges from similar margin-based voting methods such as Ranked Pairs and Beat Path: Coherent Defeat is a shared starting point; then Positive Involvement in Defeat (resp.~Tolerant Positive Involvement for VSCCs) forces the method to refine Split Cycle; and Coherent IIA forces the method to be refined by Split Cycle.

We conclude with several directions for possible future investigation. First, while our method for the axiomatization of Split Cycle cannot be applied directly to Beat Path and Ranked Pairs or other unaxiomatized margin-based voting methods, our results at least suggest that the axiomatization question for margin-based voting methods could be amenable to analysis of graphs by well-known graph theoretical concepts. In particular, given the similarity between the definitions of Beat Path and Split Cycle, we believe an axiomatization of Beat Path is within reach.

Second, as we mentioned at the beginning of Section \ref{Sec:CharVSCC}, in general a VSCC carries less information than a VCCR, but when we focus on VCCRs and VSCCs satisfying certain axioms, recovering canonically a rationalizing VCCR from a VSCC may be possible. In fact, the axiom of Coherent IIA itself suggests a method of recovery: given a VSCC $F$ that is rationalized by some VCCR $f$ satisfying Coherent IIA and a losing candidate $y$ in a profile $\mathbf{P}$, any candidate $x$ that defeats $y$ according to $f$ should be such that for any $\mathbf{P}'$ with $\mathbf{P} \rightsquigarrow_{x, y} \mathbf{P}'$, $x$ should still defeat $y$ in $\mathbf{P}'$ according to $f$, making $y$ still lose according to $F$ in $\mathbf{P}'$. Viewing this statement about $x$ as a property of a potential defeater of $y$, we can take as the defeaters of $y$ according to the VSCC $F$ all candidates $x \in X(\mathbf{P})$ such that for any $\mathbf{P}'$ with $\mathbf{P} \rightsquigarrow_{x, y}\mathbf{P}'$, $y \not\in F(\mathbf{P}')$. Some of these candidates may not defeat $y$ according to $f$, but we can still use this method to recover a VCCR $f'$ from $F$. It remains to be seen what axioms can be preserved by this method of constructing a VCCR from a VSCC and what other interesting properties this method may exhibit.

Third, as we showed in Section \ref{sec:necessity}, within the set of all the margin-based VCCRs satisfying Monotonicity, Coherent IIA, Coherent Defeat, and First-place Involvement in Defeat, what we call \textit{Ignore-source Split Cycle} is the least refined while Split Cycle is the most refined. It seems to be an interesting problem to classify all VCCRs in this set and also study their ordering by refinement. This could help us better understand the axioms (especially First-place Involvement in Defeat).

Finally, we believe it is important to systematically study the class of refinements of Split Cycle that are, unlike Split Cycle (see \citealt[\S~5.4.2.2]{HP2020}), \textit{asymptotically resolvable}: for any number of candidates, the probability\footnote{According to the uniform distribution on profiles with a given number of candidates and voters.} of selecting multiple winners goes to zero as the number of voters goes to infinity. Methods in this class, which include Ranked Pairs (\citealt{Tideman1987}), River (\citealt{Heitzig2004b}), Beat Path (\citealt{Schulze2011,Schulze2018}), and Stable Voting (\citealt{HP2023stable}), can be viewed as ways of deterministically breaking ties among Split Cycle winners. No known method in this class satisfies Positive Involvement or its variations studied in this paper (see \citealt{Holliday2024} for a related impossibility theorem), but no known impossibility result yet rules out the existence of such a refinement of Split Cycle.

\subsection*{Acknowledgements}

We thank the organizers and audiences at the 2022 Meeting of the Society for Social Choice and Welfare and the Conference on Voting Theory and Preference Aggregation at Karlsruhe Institute of Technology, Germany in October 2023, where this work was presented, and the two reviewers and editor for \textit{Social Choice and Welfare} for their helpful feedback.

\bibliographystyle{plainnat}
\bibliography{axioms}
\end{document}